\theoremstyle{plain}
\newtheorem{thm}{Theorem}[section] % reset theorem numbering for each section
\newtheorem{prop}[thm]{Proposition}
\newtheorem{lemma}[thm]{Lemma}
\theoremstyle{definition}
\newtheorem{defn}[thm]{Definition} % definition numbers are dependent on theorem numbers
\newtheorem{example}[thm]{Example}
\newtheorem{remark}[thm]{Remark}
\definecolor{mygreen}{RGB}{28,172,0} % color values Red, Green, Blue
\definecolor{mylilas}{RGB}{170,55,241}
\definecolor{mygray}{gray}{0.95}
\newcommand{\mscr}[1]{\mathscr{#1}}
\newcommand{\ZZ}{\mathbb{Z}}
\newcommand{\RR}{\mathbb{R}}
\newcommand{\NN}{\mathbb{N}}
\newcommand{\QQ}{\mathbb{Q}}
\newcommand{\EE}{\mathbb{E}}
\newcommand{\de}{\delta}
\newcommand{\ep}{\epsilon}
\renewcommand{\l}{\left}
\renewcommand{\r}{\right}
\newcommand{\defeq}{\vcentcolon=}
\DeclareMathOperator{\argmax}{argmax}
\DeclareMathOperator{\TV}{TV}
\newcommand{\iid}{\overset{\text{iid}}{\sim}}
\DeclareMathOperator*{\esssup}{esssup}
\newcommand{\djs}[1]{{\color{blue} Jinshuo: #1}}
\title{Log-Concave and Multivariate Canonical Noise Distributions for Differential Privacy}
\author{%
Jordan A. Awan\\
Department of Statistics\\
Purdue University\\
%West Lafayette, IN 47907\\
\texttt{jawan@purdue.edu}\\
\And
Jinshuo Dong\\
Department of Computer Science\\
Northwestern University and IDEAL\thanks{The Institute for Data, Econometrics, Algorithms, and Learning}\\
\texttt{jinshuo@northwestern.edu}
 % David S.~Hippocampus\thanks{Use footnote for providing further information
 %   about author (webpage, alternative address)---\emph{not} for %acknowledging
 %   funding agencies.} \\
 %% Department of Computer Science\\
  %Cranberry-Lemon University\\
  %Pittsburgh, PA 15213 \\
  %\texttt{hippo@cs.cranberry-lemon.edu} \\
  % examples of more authors
  % \And
  % Coauthor \\
  % Affiliation \\
  % Address \\
  % \texttt{email} \\
  % \AND
  % Coauthor \\
  % Affiliation \\
  % Address \\
  % \texttt{email} \\
  % \And
  % Coauthor \\
  % Affiliation \\
  % Address \\
  % \texttt{email} \\
  % \And
  % Coauthor \\
  % Affiliation \\
  % Address \\
  % \texttt{email} \\
}
\begin{document}

\maketitle

\begin{abstract}
 A canonical noise distribution (CND) is an additive mechanism designed to satisfy $f$-differential privacy ($f$-DP), without any wasted privacy budget. $f$-DP is a hypothesis testing-based formulation of privacy phrased in terms of \emph{tradeoff functions}, which captures the difficulty of a hypothesis test. In this paper, we consider the existence and construction of both log-concave CNDs and multivariate CNDs. Log-concave distributions are important to ensure that higher outputs of the mechanism correspond to higher input values, whereas multivariate noise distributions are important to ensure that a joint release of multiple outputs has a tight privacy characterization. We show that the existence and construction of CNDs for both types of problems is related to whether the tradeoff function can be decomposed by functional composition (related to group privacy) or mechanism composition. In particular, we show that pure $\epsilon$-DP cannot be decomposed in either way and that there is neither a log-concave CND nor any multivariate CND for $\epsilon$-DP. On the other hand, we show that Gaussian-DP, $(0,\delta)$-DP, and Laplace-DP each have both log-concave and multivariate CNDs. 
\end{abstract}
%\tableofcontents

\section{Introduction}
Differential privacy (DP), proposed by \citet{dwork2006calibrating}, is the state-of-the-art framework in formal privacy protection and is being implemented by tech companies, government agencies, and academic institutions. Over time, the DP community has developed many new DP mechanisms as well as new frameworks. Recently, $f$-DP (\citet{dong2022gaussian} was proposed as a generalization of DP, allowing for tight calculations of group privacy, composition, subsampling, and post-processing. It was shown in \citet{dong2022gaussian} that $f$-DP is provably the tightest version of DP that respects the post-processing property of DP. In particular, $f$-DP can be losslessly converted to R\'enyi-DP (or any $f$-divergence version of DP) as well as $(\ep,\de)$-DP, but not vice-versa \citep{dong2022gaussian}. Furthermore, $f$-DP is equivalent (can be losslessly converted back and forth) to the privacy profile \citep{balle2018privacy,balle2020privacy} and the privacy loss random variables \citep{sommer2019privacy,zhu2022optimal}.

%[Define $f$-DP in the intro? also CND?]
$f$-DP is defined in terms of a \emph{tradeoff function} or \emph{receiver operator curve  (ROC)}, which encapsulates the difficulty of conducting a hypothesis test between two distributions. If $f=T(P,Q)$ is the tradeoff function for testing between the distributions $P$ and $Q$, then if a mechanism $M$ satisfies $f$-DP, this means that given the output of $M$ when run on one of two adjacent databases, it is at least as hard to determine which database was used, as it is to test between $P$ and $Q$. 

%While $f$-DP is an incredibly useful framework for tracking and understanding privacy loss, it is still useful to have a one-dimensional sub-family of tradeoff functions both to have a target privacy cost as well as having a more interpretable privacy parameter. The original DP formulation, pure-DP or $\ep$-DP, is characterized by a single parameter $\ep$ where low values give more privacy. Alternative single-parameter privacy frameworks are Gaussian-DP (GDP), Laplace-DP, and $(0,\de)$-DP, which all fit in the $f$-DP framework. When choosing a single parameter DP framework, the primary concern is that the privacy guarantee of the tradeoff functions in the framework are meaningful. After this, however there is also a consideration on theoretical and practical considerations for achieving the DP guarantee, which is part of the motivation for this paper, discussed below.

%Another popular framework is zero-concentrated DP, which while not a member of the $f$-DP family, is closely related to GDP. 

While $f$-DP has the many desirable theoretical properties listed above in its favor, there are  limited techniques for working with $f$-DP, and few constructive mechanisms for an arbitrary $f$-DP guarantee. A notable exception is a \emph{canonical noise distribution} (CND) from the recent paper \citet{awan2021canonical}, which builds a one-dimensional additive noise mechanism designed to exactly satisfy $f$-DP, with no wasted privacy budget. Along with the intuitive idea that a CND is optimal in that it optimizes the privacy loss budget, \citet{awan2021canonical} showed that CNDs are crucial to the construction of optimal DP hypothesis tests and free DP $p$-values. However, the CND construction given in \citet{awan2021canonical} does not result in a smooth distribution, and in particular is not \emph{log-concave}. %, a property that \citet{dong2022gaussian} showed has very nice properties.
Log-concavity is a desirable property because it implies that the distribution has a monotone likelihood ratio; this means that higher observed values are always more likely to have come from a higher input value than a lower one. Log-concavity thus makes the DP output much more interpretable, easily analyzed, and also has makes the calculation of the privacy cost simpler \citep{dong2021central}. Furthermore, the results of \citet{awan2021canonical} are limited to 1-dimensional distributions.

In this paper, we develop new properties of CNDs and $f$-DP, motivated by the following two questions, 
\begin{center}
    1. Can we construct log-concave CNDs?\qquad
    2. Can we construct multivariate CNDs?
\end{center}

{\bf Our Contributions} 
The existence of both log-concave 1-dimensional CNDs and multivariate CNDs are intricately linked with properties related to group privacy and mechanism composition. 
%functional composition of tradeoff functions (which is known to be related to group privacy) and tensor products of tradeoff functions (related to composition of mechanisms). 
Two highly desirable properties of a tradeoff function are \emph{infinite divisibility} and \emph{infinite decomposability}, meaning that the tradeoff function can be exactly achieved by $n$-fold group privacy or $n$-fold mechanism composition, respectively. We prove that a tradeoff function has a log-concave CND if and only if the tradeoff function is infinitely divisible, and give a construction for the \emph{unique} log-concave CND in this case. We also show that if a tradeoff function is either infinitely divisible or decomposable, then we can construct a multivariate CND. 
 
 %Along with the positive results listed above, we also include impossibility results, showing that certain tradeoff functions do not have either a log-concave CND or a multivariate CND. In particular, 
 Along with the positive results listed above, we also include impossibility results. In particular, $(\ep,0)$-DP is neither divisible nor decomposable, and in fact has neither a log-concave CND nor any multivariate CND. In contrast to $(\ep,0)$-DP, two families that satisfy both infinite divisibility and infinite decomposability are $\mu$-GDP and $(0,\de)$-DP. While $(0,\de)$-DP  has limited applicability due to its weak protection for events with small probability, $\mu$-GDP and related DP definitions (such as zero concentrated DP) have been gaining popularity. The results of this paper provide a new perspective supporting the adoption of GDP as the default privacy measure instead of $(\ep,0)$-DP. %The remainder of the paper is organized as follows:

    {\bf Organization } %The remainder of the paper is organized as follows: 
    In Section \ref{s:background}, we review concepts in $f$-DP and canonical noise distributions. In Section \ref{s:1d}, we study 1-dimensional CNDs. In Section \ref{s:pure}, we prove that the Tulap distribution is the \emph{unique} CND for $(\ep,0)$-DP. In Section \ref{s:logconcave}, we propose the concept of infinite divisibility and prove that a tradeoff function has a log-concave CND if and only if it is infinitely divisible; we also give a construction to produce the log-concave CND from a family of infinitely divisible tradeoff functions. We prove that piece-wise linear tradeoff functions are generally not infinitely divisible in Section \ref{s:piece}, and in particular $(\ep,0)$-DP and several related tradeoff functions do not have log-concave CNDs. In Section \ref{s:multi}, we propose a multivariate extension of CND. We give two general constructions of multivariate CNDs in Section \ref{s:construction} depending on whether a tradeoff function is decomposable or infinitely divisible. We give several examples of multivariate CNDs in Sections \ref{s:gdp}-\ref{s:laplace} for Gaussian DP, $(0,\delta)$-DP, $(\ep,\de)$-DP, and Laplace-DP. In Section \ref{s:nopure}, we show that there is no multivariate CND for $(\ep,0)$-DP, which implies that $(\ep,0)$-DP is not decomposable. We conclude with discussion in Section \ref{s:discussion}. Proofs and technical details are found in the Appendix.

    {\bf Related Work } While there are many complex DP mechanisms, many use the fundamental building block of additive mechanisms (e.g., functional mechanism \citep{zhang2012functional}, objective perturbation \citep{chaudhuri2011differentially,kifer2012private}, stochastic gradient descent \citep{abadi2016deep}, and the sparse vector technique \citep{dwork2009complexity,zhu2020improving}, to name a few). There have been many different additive mechanisms proposed in the literature, for different privacy purposes. We highlight the works that show some optimality property for the proposed noise distributions.  This work is most directly building off of \citet{awan2021canonical}, who proposed the concept of canonical noise distributions as a method of quantifying what it means to fully use the privacy budget. There are also other works, which derive optimal mechanisms with respect to other metrics. 
   \citet{ghosh2012universally} showed that a discrete Laplace distribution is the universal utility maximizer for a general class of utility functions in pure-DP. 
   \citet{geng2015optimal} proposed the staircase mechanism which they showed optimizes the $\ell_1$ or $\ell_2$ error for pure-DP. For $(\ep,\de)$-DP, \citet{geng2015approx} showed that either the staircase or a uniform distribution can achieve the optimal rate in terms of $\ell_1$ and $\ell_2$ error. \citet{steinke2016between} showed that the $\ell_\infty$-mechanisms is rate optimal when measuring utility in terms of  $\ell_\infty$ error. 
\citet{awan2020structure} derive optimal mechanisms among the class of $K$-Norm Mechanisms, proposed by \citet{hardt2010geometry}, in terms of various scale-independent measures, for a fixed statistic and sample size.

    %Our results highlight the limitations of pure-DP from a perspective not previously considered. The family of $\ep$-DP tradeoff functions is not divisible or decomposable, and as such has neither a smooth 1-dimensional CND, nor any multivariate CNDs. This means that when designing nearly any nontrivial mechanism for $\ep$-DP, the privacy guarantee will be strictly greater than the $\ep$-DP guarantee, wasting privacy budget. On the other hand, two competing families satisfy both infinite divisibility and infinite decomposability: $\mu$-GDP (Gaussian DP) and $(0,\de)$-DP. While $(0,\de)$-DP has limited utility due to its weak protection for events with small probability, $\mu$-GDP and related DP definitions (such as zero concentrated DP) have been gaining popularity. \citet{dong2022gaussian} and [Dong's newer paper] show that GDP has many other nice properties, which complement those in this paper. 

\section{Differential privacy basics}\label{s:background}

%In this section we review some background on differential privacy and canonical noise distributions.

%\subsection{\texorpdfstring{$f$-DP}{f-DP}}
Differential privacy ensures that given the output of a private mechanism, it is difficult for an adversary to determine whether an individual is present in the database or not. %Formally, $f$-differential privacy ($f$-DP) formalizes this in terms of receiver-operator curves (ROC), also called \emph{trade off functions}. 
To satisfy DP, a privacy expert employs a \emph{mechanism} $M$, which is a set of probability distributions $M_D$ on a common space $\mscr Y$, indexed by possible databases $D\in \mscr D$.  
Let $d(D,D')$ be an integer-valued metric on the space of databases $\mscr D$, which represents the number of entries that $D$ and $D'$ differ in. We call $D$ and $D'$ \emph{adjacent} if $d(D,D')\leq 1$.  
While there are now many variants of DP, they all center around the idea that given a randomized algorithm $M$, for any two adjacent databases $D$, $D'$, the distributions of $M(D)$ and $M(D')$ should be ``similar.'' While many DP variants measure similarity in terms of divergences, $f$-DP formalizes similarity in terms of hypothesis tests.   
Intuitively, for two adjacent databases $D$ and $D'$, a mechanism $M$ satisfies $f$-DP if given the output of $M$, it is difficult to determine whether the original database was $D$ or $D'$. This is formalized in terms \emph{tradeoff functions}.

%In this paper, we assume that the space of databases can be expressed as a product space: $\mscr D=\mscr X^n$, and we formalize ``differing in one person's contribution'' in terms of the Hamming metric $H$ on $\mscr X^n$, defined by $H(D,D') =\#\{i\mid D_i\neq D'_i\}$. By using the Hamming metric, we assume that the sample size $n$ is a public value that does not require privacy protection. 

For two distributions $P$ and $Q$, the \emph{tradeoff function} (or ROC) between $P$ and $Q$ is $T(P,Q):[0,1]\rightarrow [0,1]$, where $T(P,Q)(\alpha)=\inf \{1-\EE_Q \phi \mid \EE_P(\phi)\geq 1-\alpha\}$, where the infinimum is over all measurable tests $\phi$. The tradeoff function returns the optimal type II error for testing $H_0=P$ versus $H_1=Q$ at specificity (one minus type I error) $\alpha$, and captures the difficulty of distinguishing between $P$ and $Q$.
%%% footnote follows
\footnote{In \citet{dong2022gaussian}, the tradeoff function was originally defined as a function of type I error. Our choice to flip the tradeoff function along the $x$-axis is for mathematical convenience. The ROC function is usually defined as the power (one minus type II error) as a function of type I error.} 
%%% end footnote
A function $f:[0,1]\rightarrow [0,1]$ is a tradeoff function if and only if $f$ is convex, continuous, non-decreasing, and $f(x) \leq x$ for all $x \in [0,1]$ \citep[Proposition 2.2]{dong2022gaussian}.  We say that a tradeoff function $f$ is \emph{nontrivial} if $f(\alpha)<\alpha$ for some $\alpha\in (0,1)$.%; that is if $f(\alpha)$ is not identically equal to $\alpha$. 

\begin{defn}[$f$-DP: \citealp{dong2022gaussian}]\label{def:fDP}
 Let $f$ be a tradeoff function. A mechanism $M$ satisfies $f$-DP if $T(M(D),M(D'))\geq f,$  for all $D,D'\in \mscr D$ which satisfy $d(D,D')\leq 1$.%\todo{need an image?}
\end{defn}

Intuitively, a mechanism satisfies $f$-DP, where $f=T(P,Q)$, if testing $H_0:M(D)$ versus $H_1: M(D')$ is at least as hard as testing $H_0:P$ versus $H_1:Q$. Without loss of generality we can assume that $f$ is \emph{symmetric}, meaning that if $f=T(P,Q)$, then $f=T(Q,P)$. This is due to the fact that adjacency of databases is a symmetric relation \citep[Proposition 2.4]{dong2022gaussian}. So, we limit the focus of this paper on symmetric tradeoff functions.

A key property of differential privacy is that it also implies privacy guarantees for groups.  \citet[Theorem 2.14]{dong2022gaussian} showed that if a mechanism is $f$-DP, then it satisfies $f^{\circ k}$-DP, when the adjacency measure is changed to allow for a difference in $k$ entries (where $f^{\circ k}$ means the functional composition of $f$ with itself, $k$ times). We call this \emph{group privacy}, which is a central topic in differential privacy.  %that shows up for example in the analysis of subsampling \citep{balle2018privacy}.
Note that the bound $f^{\circ k}$ is not necessarily the tightest privacy guarantee for a particular mechanism. 

\emph{Mechanism Composition} quantifies the cumulative privacy cost of the output of $k$  mechanisms. To express the tradeoff function resulting from composition, \citet{dong2022gaussian} proposed the \emph{tensor product} of tradeoff functions: if $f=T(P,Q)$ and $g=(P',Q')$, then $f\otimes g \defeq T(P\times P',Q\times Q')$, which they show is well defined, commutative, and associative. They prove that if we have $k$ %(potentially sequential and dependent)
mechanisms $M_1,\ldots, M_k$, which each satisfy $f_1$-DP, $f_2$-DP,$\ldots, f_k$-DP respectively, then the composition $(M_1,\ldots, M_k)$ satisfies $f_1\otimes\cdots\otimes f_k$-DP (see \citet[Theorem 3.2]{dong2022gaussian} for a more precise statement).

The traditional framework of $(\ep,\de)$-DP is a subclass of $f$-DP: Let $\ep\geq 0$ and $\de\in [0,1]$. A mechanism satisfies $(\ep,\de)$-DP if it satisfies $f_{\ep,\de}$-DP, where $f_{\ep,\de}(\alpha) =\max\{0,1-\de-e^{\ep}+e^{\ep}\alpha,\exp(-\ep)(\alpha-\de)\}$. An important special case is $(\ep,0)$-DP, which was the original definition of DP. %We call $(\ep,0)$-DP \emph{pure DP}.

Another popular subclass is Gaussian-DP (GDP): For $\mu\geq 0$, a mechanism satisfies $\mu$-GDP if it satisfies $G_\mu$-DP, where $G_\mu = T(N(0,1), N(\mu,1))$. Gaussian-DP was proposed in \citet{dong2022gaussian} and has several desirable properties, such as being closed under group privacy and closed under composition. \citet{dong2022gaussian} also established a central limit theorem for tradeoff functions as the number of compositions approaches infinity, showing that under general assumptions the tradeoff function of the composed mechanisms approaches $G_\mu$ for some $\mu$. 
%{ Recall that $\cF$ denotes the class of trade-off functions. Here we flip the original version along the $x$-axis, i.e. when $f$ is an original trade-off function, we consider $f(1-x)$ instead.}

\subsection{Canonical noise distributions}
To satisfy DP, additive mechanisms must introduce noise proportional to the \emph{sensitivity} of the statistic of interest. Let $\lVert \cdot \rVert$ be a norm on $\RR^d$. A statistic $S:\mscr D\rightarrow \RR^d$ has $\lVert \cdot \rVert$-\emph{sensitivity} $\Delta>0$ if $\lVert S(D)-S(D')\rVert \leq \Delta$ for all $d(D,D')\leq 1$. When $d=1$, we use $|\cdot|$ as the default norm. Any additive mechanism, which releases $S(D)+\Delta N$, satisfies $f$-DP if  $T(N,N+v)\geq f$ for all $\lVert v \rVert\leq 1$. The concept \emph{canonical noise distribution} (CND) was proposed by \citet{awan2021canonical} to capture when an additive mechanism satisfies $f$-DP, and ``fully uses the privacy budget.''

\begin{defn}[Canonical noise distribution: \citet{awan2021canonical}]\label{def:CND}
  Let $f$ be a symmetric tradeoff function. A continuous random variable $N$ with cumulative distribution function (cdf) $F$ is a \emph{canonical noise distribution} (CND) for $f$ if 
  \begin{enumerate}
      \item %given a statistic $S(X)$ with sensitivity $\Delta>0$, the mechanism $S(X) + \Delta N$ satisfies $f$-DP. Equivalently, 
      For any $m\in [0,1]$, $T(N,N+m)\geq f$,
      \item $f(\alpha)=T(N,N+1)(\alpha)$ for all $\alpha \in (0,1)$,
      \item $T(N,N+1)(\alpha) = F(F^{-1}(\alpha)-1)$ for all $\alpha \in (0,1)$,
      \item $F(x) = 1-F(-x)$ for all $x\in \RR$; that is, $N$ is symmetric about zero.
  \end{enumerate}
\end{defn}
%Note that in Definition \ref{def:CND}, the CND is defined in terms of its cumulative distribution function (cdf). 
In Definition \ref{def:CND}, property 1 ensures that the additive mechanism using a CND satisfies $f$-DP, property 2 ensures that the privacy guarantee is tight, property 3 gives a closed form for the tradeoff function in terms of the CND's cdf, which is equivalent to enforcing a monotone likelihood ratio property, and property 4 imposes symmetry which is mostly for convenience.

An important property of CNDs is that they satisfy the following recurrence relation:
\begin{lemma}[\citet{awan2021canonical}]\label{lem:recurrence}
  Let $f$ be a symmetric nontrivial tradeoff function and let $F$ be a CND for $f$. Then $F(x)=1-f(1-F(x-1))$ when $F(x-1)>0$ and $F(x)=f(F(x+1))$ when $F(x+1)<1$. 
\end{lemma}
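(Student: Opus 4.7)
The plan is to derive both identities from a single master equation obtained by combining properties 2 and 3 of Definition~\ref{def:CND}, namely
\[
  f(\alpha) \;=\; F\bigl(F^{-1}(\alpha)-1\bigr) \qquad \text{for all } \alpha \in (0,1),
\]
where $F^{-1}$ is the generalized inverse of the continuous cdf $F$. The proof is then a matter of substituting the right argument for $\alpha$ and invoking symmetry (property~4) to handle the second identity.

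For the recurrence $F(x) = f(F(x+1))$, I would set $\alpha = F(x+1)$. Since $F$ is continuous and nondecreasing, and since the hypothesis $F(x+1)<1$ together with $F(x+1)>0$ places $\alpha$ inside $(0,1)$, continuity gives $F(F^{-1}(\alpha)) = \alpha$ and $F^{-1}(\alpha) \leq x+1$ with equality of $F$-values; in particular $F(F^{-1}(\alpha)-1) = F(x)$, so the master equation gives $f(F(x+1)) = F(x)$. The boundary case $F(x+1)=0$ is vacuous because then $F(x)=0$ as well, and $f(0)=0$ since every tradeoff function satisfies $f(\alpha)\leq \alpha$ and $f(0)\geq 0$, so both sides agree.

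For the recurrence $F(x) = 1-f(1-F(x-1))$, I would use property~4 to rewrite $1-F(x-1) = F(1-x)$, so that setting $\alpha = 1-F(x-1)$ amounts to $\alpha = F(1-x)$. Symmetry of $F$ about zero transfers to the generalized inverse as $F^{-1}(1-\beta) = -F^{-1}(\beta)$ for $\beta\in(0,1)$, hence $F^{-1}(\alpha)=1-x$ (again using continuity, with the standard flat-region caveat that any two preimages of $\alpha$ yield the same $F(F^{-1}(\alpha)-1)$). Plugging into the master equation yields
\[
  f\bigl(1-F(x-1)\bigr) \;=\; F(F^{-1}(\alpha)-1) \;=\; F(-x) \;=\; 1-F(x),
\]
and rearranging gives the claim; the hypothesis $F(x-1)>0$ is exactly what keeps $\alpha \in (0,1)$ so that the master equation applies.

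The main (mild) obstacle is the handling of flat regions of $F$: if $F$ has an interval on which it is constant, then $F^{-1}(F(y))$ need not equal $y$, and one must verify that the value of $F(F^{-1}(\alpha)-1)$ is unambiguous. This is a standard argument using the fact that $F$ is continuous, so $\alpha = F(y) = F(y')$ for $y<y'$ forces $F$ to be constant on $[y,y']$, and by continuity plus monotonicity the shift $F(y-1)$ is likewise constant on $[y,y']$; hence both sides of the recurrence are well-defined regardless of which representative of $F^{-1}(\alpha)$ one picks. No deeper ingredient is needed — the lemma is essentially a restatement of properties 2--4 via the change of variable $\alpha = F(x\pm 1)$.
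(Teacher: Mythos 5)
Your strategy --- deriving both identities from the master equation $f(\alpha) = F(F^{-1}(\alpha)-1)$ (properties 2 and 3 of Definition~\ref{def:CND}) via the substitutions $\alpha = F(x+1)$ and $\alpha = 1-F(x-1) = F(1-x)$, invoking symmetry for the second --- is the right one, and your treatment of the $F(x+1)=0$ boundary case is correct (the symmetric boundary case $F(x-1)=1$, where both sides equal $1$, should be noted as well). However, the flat-region argument at the end contains a genuine gap. You assert that if $F$ is constant on $[y,y']$ then ``by continuity plus monotonicity'' $F$ is also constant on $[y-1,y'-1]$. This does \emph{not} follow from continuity and monotonicity of $F$ alone: a continuous, nondecreasing cdf can be flat on $[y,y']$ while strictly increasing on $[y-1,y'-1]$, so the premise you cite is insufficient.

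The claim is nonetheless true for a CND, but the correct justification must invoke the continuity of the tradeoff function $f$ itself, not just of $F$. Let $\alpha_0 = F(y)=F(y')\in(0,1)$ and let $[a,b]$ be the maximal interval on which $F\equiv\alpha_0$. Then $F^{-1}(\alpha_0)=a$, whereas the right limit of $F^{-1}$ at $\alpha_0$ is at least $b$, so the map $\alpha\mapsto F(F^{-1}(\alpha)-1)$ takes the value $F(a-1)$ at $\alpha_0$ and has right limit at least $F(b-1)$. Since this map agrees with $f$ on $(0,1)$ by properties 2--3 and every tradeoff function is continuous, we must have $F(a-1)\geq F(b-1)$; combined with monotonicity of $F$ this forces $F(a-1)=F(b-1)$, hence $F$ is constant on $[a-1,b-1]\supseteq[y-1,y'-1]$. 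Substituting this continuity-of-$f$ argument for the incorrect ``continuity plus monotonicity of $F$'' step closes the gap; the rest of your proof then goes through.
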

In \citet{awan2021canonical}, they showed that the above recurrence relation can be used to construct a CND for any nontrivial symmetric tradeoff function.% Indeed, \citet{awan2021canonical} give a construction for a CND from an arbitrary $f$:

\begin{prop}[CND construction: \citet{awan2021canonical}]\label{prop:CNDsynthetic}
  Let $f$ be a symmetric nontrivial tradeoff function, and let $c\in [0,1]$ be the solution to $f(1-c)=c$. We define $F_f:\RR\rightarrow \RR$ as 
  \[ F_f(x) = \begin{cases}
  f(F_f(x+1))&x<-1/2\\
  c(1/2-x) + (1-c)(x+1/2)&-1/2\leq x\leq 1/2\\
  1-f(1-F_f(x-1))&x>1/2.\\
  \end{cases}\]
  Then $N\sim F_f$ is a canonical noise distribution for $f$.
\end{prop}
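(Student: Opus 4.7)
The approach is to verify each of the four CND properties in turn, handling well-definedness and symmetry first, then the tradeoff-function identity, and finally the additive-mechanism bound, which is the main obstacle.

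First I would confirm that $F_f$ is a valid cdf. Consistency at the piece boundaries $x = \pm 1/2$ follows from the defining equation $f(1 - c) = c$: at $x = -1/2$ the middle piece gives $F_f(-1/2) = c$, while formally extending the left recursion gives $f(F_f(1/2)) = f(1-c) = c$. For monotonicity of the middle piece, note $c \leq 1/2$ because $c = f(1 - c) \leq 1 - c$ whenever $f(\alpha) \leq \alpha$, so the slope $1 - 2c \geq 0$. Monotonicity, continuity, and the limits $F_f(\mp \infty) = 0, 1$ then propagate through the recursion, since $f$ is non-decreasing with $f(0) = 0$ and the iterates converge to the fixed point.

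For Property 4 (symmetry $F_f(-x) = 1 - F_f(x)$), I would induct on integer-shifted pieces. The middle piece is symmetric about $(0, 1/2)$ by direct computation, and for $x \in [k + 1/2, k + 3/2]$ the identities $F_f(-x) = f(F_f(1 - x))$ (from the left recursion) and $1 - F_f(x) = f(1 - F_f(x - 1))$ (from the right recursion) coincide by the inductive hypothesis applied at $x - 1$. For Property 3, the key step is to prove the global identity $F_f(x - 1) = f(F_f(x))$. On $x \leq 1/2$ this is the left recursion rewritten. On $x > 1/2$, apply the symmetric-tradeoff identity $f^{-1}(\beta) = 1 - f(1 - \beta)$ (which holds because the graph of a symmetric $f$ is invariant under $(\alpha, \beta) \mapsto (1 - \beta, 1 - \alpha)$) to convert $F_f(x) = 1 - f(1 - F_f(x - 1))$ into $F_f(x - 1) = f(F_f(x))$. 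Substituting $x = F_f^{-1}(\alpha)$ gives $F_f(F_f^{-1}(\alpha) - 1) = f(\alpha)$, and evaluating the threshold test $\ONE\{X > F_f^{-1}(\alpha)\}$ yields $T(N, N + 1)(\alpha) \leq f(\alpha)$.

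The main obstacle is Property 1: showing $T(N, N + m) \geq f$ for all $m \in [0, 1]$. Together with the upper bound above, this gives equality in Property 2 and hence Property 3 (since $T(N, N+1)(\alpha) = f(\alpha) = F_f(F_f^{-1}(\alpha) - 1)$). The hard part is that $F_f$ is not in general log-concave: its density has jumps at half-integers, so the monotone likelihood ratio property between $N$ and $N + m$ cannot be taken for granted for arbitrary $m \in (0, 1)$. My plan is to use the Neyman--Pearson lemma to restrict to likelihood-ratio tests, compute $\phi_f(x - m)/\phi_f(x)$ piecewise using the derivative identity $F_f'(x) = f'(F_f(x + 1)) \cdot F_f'(x + 1)$ and its right-sided mirror, and show that for every LR-level-set test of size $1 - \alpha$ the type II error is bounded below by $F_f(F_f^{-1}(\alpha) - m) \geq F_f(F_f^{-1}(\alpha) - 1) = f(\alpha)$, where the second step is just monotonicity of $F_f$ and $m \leq 1$. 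If a direct MLR analysis gets derailed by the density jumps, a backup plan is a coupling/post-processing argument: construct a randomized channel $K$ that preserves the distribution of $N$ while mapping $N + 1$ to $N + m$, and invoke the data-processing inequality $T(K(N), K(N + 1)) \geq T(N, N + 1) \geq f$, though designing such a channel itself requires exploiting the piecewise structure of $F_f$.
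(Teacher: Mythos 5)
First, the paper itself contains no proof of Proposition~\ref{prop:CNDsynthetic}; it is quoted from \citet{awan2021canonical}, and no argument for it appears in the appendix. So I am evaluating your outline on its own terms rather than against an in-paper proof.

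Your treatment of well-definedness, symmetry (Property~4), and the shift-$1$ identity $F_f(x-1)=f(F_f(x))$ is sound, and in fact this last identity already lets you close Properties~2--3 without waiting on Property~1: it gives $g(x-1)/g(x)=f'(F_f(x))$, which is non-decreasing in $x$ because $f$ is convex, so the shift-$1$ pair has MLR, the threshold test is exactly optimal, and $T(N,N+1)=f$ follows directly. The real gap is in Property~1 for $m\in(0,1)$, and it is substantive, not a detail to be filled in. You write that the type~II error ``is bounded below by $F_f(F_f^{-1}(\alpha)-m)$.'' But $F_f(F_f^{-1}(\alpha)-m)$ is the type~II error of the specific threshold test $\ONE\{x>F_f^{-1}(\alpha)\}$, and any fixed test furnishes an \emph{upper} bound on the infimum $T(N,N+m)(\alpha)$, never a lower bound; the first inequality in your chain points the wrong way. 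You correctly observe that MLR can fail for fractional shifts --- for $f_{\ep,0}$ the density is a symmetric geometric staircase and $g(x-m)/g(x)$ with $m\in(0,1)$ oscillates between $1$ and $e^{\pm\ep}$ on alternating subintervals, so the Neyman--Pearson rejection region is a disjoint union of intervals and its type~II error is \emph{strictly smaller} than the threshold test's. Nothing in your outline then shows this strictly smaller optimal type~II error is still at least $f(\alpha)$. The coupling fallback is also problematic: the existence of a channel $K$ with $K(N)\sim N$ and $K(N+1)\sim N+m$ is, by Blackwell's theorem, \emph{equivalent} to the inequality $T(N,N+m)\geq T(N,N+1)$ you are trying to prove, so appealing to it without an explicit construction is circular. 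Closing Property~1 requires either carrying out the piecewise Neyman--Pearson computation on the non-threshold rejection set, exploiting the periodic structure of $g(x-m)/g(x)$ across the half-integer lattice, or a genuinely different, non-circular reduction.
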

While Proposition \ref{prop:CNDsynthetic} gives a general construction of a CND for an arbitrary $f$, the resulting distribution is generally not smooth or log-concave. \citet{awan2021canonical} showed that in the case of $G_\mu$, this construction does not recover the Gaussian distribution, which is the log-concave CND.

\section{One-dimensional CNDs}\label{s:1d}
In this section, we expand on the results of \citet{awan2021canonical}, by producing new results for one-dimensional CNDs. In Section \ref{s:pure}, we show that the Tulap distribution is the \emph{unique} CND for $(\ep,0)$-DP. In Section \ref{s:logconcave}, we propose the concept of an \emph{infinitely divisible tradeoff function} and show that a tradeoff function has a log-concave CND if and only if it is infinitely divisible. We also give a construction to produce the unique log-concave CND for an infinitely divisible family of tradeoff functions. In Section \ref{s:piece}, we determine when a piece-wise linear tradeoff function is divisible, and show that $f_{\ep,0}$ and related tradeoff functions are not infinitely divisible, and hence do not have log-concave CNDs.

\begin{comment}
\begin{example}[CND does not imply stochastic dominance or minimum variance]
It was shown in \citet{awan2021canonical} that the Tulap distribution is a CND for $(\ep,\de)$-DP. \citet{dong2022gaussian} showed that the tradeoff curve of Laplace is strictly greater than $f_{\ep,0}$, which implies that Laplace is not a CND for $(\ep,0)$-DP.  In Figure \ref{fig:cnd_counter}, we plot the cdf of Tulap and Laplace distribution in the case of $\ep=5$ and $\de=0$. It can be seen that the Tulap cdf does NOT dominate the Laplace cdf on $(0,\infty)$. This means that we cannot simply add an independent random variable to Tulap to achieve a Laplace. Furthermore, the variance of the Tulap in this case is $0.097$, whereas the variance of the Laplace is $0.08$. This indicates that while a CND optimizes the privacy budget, it does not directly optimize other objectives, such as the variance, and it does not imply stochastic dominance.

\djs{In fact, tulap with $\ep$ has variance $\frac{1}{12}+\frac{1}{2\sinh^2\frac{\ep}{2}}$, while Laplace with the same $\ep$ has variance $\frac{2}{\ep^2}$}

\begin{figure}
\centering
\includegraphics[width=.5\linewidth]{cdf_cnd_counter.pdf}
\caption{$\ep=5$, $\de=0$. The Tulap cdf is a CND for $f_{\ep,\de}$ and its tradeoff function is strictly lower than the Laplace. The plot indicates that stochastic dominance does not hold here.}
\label{fig:cnd_counter}
\end{figure}
\end{example}
\end{comment}

\subsection{CNDs for \texorpdfstring{$(\ep,0)$-DP}{pure-DP}}\label{s:pure}
In \citet{awan2021canonical}, it was shown that in general, the CND is not unique, but it was not clear whether there existed alternative CNDs for $f_{\ep,0}$ or $f_{\ep,\de}$. We begin this section by showing that the Tulap distribution, which was shown to be a CND for $f_{\ep,\de}$ by \citet{awan2021canonical} is in fact the \emph{unique} CND for $f_{\ep,0}$. The Tulap distribution was proposed by \citet{awan2018differentially} for the purpose of designing uniformly most powerful hypothesis tests for Bernoulli data. In the case of $(\ep,0)$-DP, the Tulap distribution coincides with one of the staircase mechanisms \citep{geng2015optimal}. It is also closely related to the discrete Laplace distribution (also known as the geometric mechanism), which is optimal for a wide range of utility functions in \citet{ghosh2012universally}.

%While in general, the CND for a tradeoff function $f$ is not unique (see \citet[Figure 2]{awan2021canonical}), in the case of $(\ep,0)$-DP, we show in Proposition \ref{prop:uniqueCND} that the Tulap distribution is the \emph{unique} CND for $f_{\ep,0}$.  

\begin{restatable}{prop}{propuniqueCND}\label{prop:uniqueCND}
%\begin{prop}\label{prop:uniqueCND}
Let $\ep>0$. The distribution $\mathrm{Tulap}(0,\exp(-\ep),0)$ is the unique CND for $f_{\ep,0}$. 
\end{restatable}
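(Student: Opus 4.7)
The strategy is to reduce any CND $F$ for $f_{\ep,0}$ to the Tulap CDF in two stages. First I would use property 3 together with symmetry and the two-piece structure of $f_{\ep,0}$ to determine $F$ on $\RR$ from its restriction to $[-1/2,1/2]$. Then I would invoke property 1 with two carefully chosen positive shifts to show that this restriction is affine.

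For the structural reduction, property 3 rewrites as $F(y-1) = f_{\ep,0}(F(y))$ for all $y$. Specializing to $y = 1/2$ and using symmetry forces the fixed-point equation $1 - F(1/2) = f_{\ep,0}(F(1/2))$, whose unique solution is $F(1/2) = e^\ep/(1+e^\ep)$. For $y \in [-1/2, 1/2]$ we have $F(y) \leq e^\ep/(1+e^\ep)$, so $f_{\ep,0}(F(y))$ falls on the first linear piece and the functional equation becomes $F(y-1) = e^{-\ep} F(y)$; symmetry supplies the analogous relation for $y \geq -1/2$. Iterating determines $F$ on $\RR$ from its restriction to $[-1/2, 1/2]$. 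Differentiating yields $f_N(x) = e^{-\ep|k|} g(x-k)$ on $(k-1/2, k+1/2)$ for each integer $k$, where $g := f_N|_{(-1/2,1/2)}$ is symmetric about $0$ and satisfies $\int_{-1/2}^{1/2} g = (1-e^{-\ep})/(1+e^{-\ep})$.

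Next I would invoke the standard equivalence for pure $(\ep,0)$-DP: $T(N, N+v) \geq f_{\ep,0}$ holds iff $e^{-\ep} \leq f_N(x+v)/f_N(x) \leq e^\ep$ for a.e.\ $x$. Property 1 then supplies this bound for every $v \in (0, 1]$. Two choices of $(x,v)$ extract opposite monotonicities on $g$. First, taking $x \in (-1/2, 1/2)$ and $v \in (1/2 - x, 1]$ places $x + v$ in $(1/2, 3/2)$, where the step structure gives $f_N(x+v) = e^{-\ep} g(x+v-1)$; the lower bound on the ratio reads $g(x+v-1) \geq g(x)$, and setting $y = x+v-1 \in (-1/2, x)$ shows $g(y) \geq g(x)$ whenever $y < x$, so $g$ is non-increasing on $(-1/2, 1/2)$. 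Second, taking $x \in (-3/2, -1/2)$ and $v \in (-1/2 - x, 1]$ places $x+v$ in $(-1/2, 1/2)$ while $f_N(x) = e^{-\ep} g(x+1)$; the upper bound on the ratio yields $g(x+v) \leq g(x+1)$, and re-labelling $z_1 = x+1$, $z_2 = x+v$ in $(-1/2, 1/2)$ with $z_2 \leq z_1$ shows $g$ is non-decreasing. Combining, $g$ is constant on $(-1/2, 1/2)$, hence equals $(1 - e^{-\ep})/(1 + e^{-\ep})$ by integration, which is exactly the Tulap density.

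The main obstacle is stage two: one must pick boundary-crossing shifts so that the piecewise structure of $f_N$ (inherited from the first stage) converts the pointwise density-ratio bound into a clean monotonicity inequality on $g$. Once both directions of monotonicity are extracted, constancy is immediate, and the normalization via the integration constraint completes the identification of $F$ with the Tulap CDF.
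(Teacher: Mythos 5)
Your first stage (reducing to $g := f_N|_{(-1/2,1/2)}$ via the recurrence $F(y-1)=e^{-\ep}F(y)$, identifying $F(1/2)=e^\ep/(1+e^\ep)$, and deriving the step structure $f_N = e^{-\ep|k|}g(\cdot-k)$ on $(k-1/2,k+1/2)$) is exactly the paper's reduction, correctly executed. Your second stage takes a genuinely different route from the paper: you extract monotonicity of $g$ in both directions via two families of boundary-crossing shifts and conclude constancy directly, whereas the paper supposes $g$ is non-constant, partitions $(-1/2,1/2)$ into a set $A$ where $g$ exceeds the Tulap level and a set $B$ where it does not, applies a translation lemma (Lemma \ref{lem:LebesgueShift}) to find one shift $\omega$ with $\lambda((B+\omega)\cap A)>0$, and exhibits a positive-measure violation of the $\ep$-DP density-ratio bound.

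However, as written your second stage has a genuine gap. The equivalence you invoke—$T(N,N+v)\geq f_{\ep,0}$ iff $e^{-\ep}\leq f_N(x+v)/f_N(x)\leq e^\ep$—holds only for \emph{almost every} $x$, and the exceptional null set depends on $v$. Your pointwise reasoning (fix an arbitrary pair $y<x$ in $(-1/2,1/2)$, choose $v=y-x+1$, conclude $g(y)\geq g(x)$) silently assumes the bound holds at every point, but the chosen $(x,v)$ can land in that $v$-dependent null set, and since the density is only defined up to a null set, no pointwise inequality at a specific location is guaranteed. This is precisely the difficulty that Lemma \ref{lem:LebesgueShift} is introduced to handle: it shows the violation occurs on a set of positive measure, which is what actually contradicts $\ep$-DP. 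Your monotonicity argument can be repaired by a Fubini pass—convert ``for each $v$, a.e.\ $x$'' into ``a.e.\ pair $(x,v)$,'' push forward under the affine change of variables to ``a.e.\ pair $(y,x)$ with $y<x$,'' combine the two directions to get $g(y)=g(x)$ for a.e.\ pair, and then deduce that $g$ is a.e.\ constant—but without this step the argument does not establish uniqueness.
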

\begin{proof}[Proof Sketch.]
By Lemma \ref{lem:recurrence}, the only choice in a CND is on $[-1/2,1/2]$. If the density is non-constant on $[-1/2,1/2]$, we show that the likelihood ratio is not bounded by $e^{\ep}$, violating $\ep$-DP. 
\end{proof}

Proposition \ref{prop:uniqueCND} is a surprising result in that one may expect a more natural CND than the Tulap distribution, which has a discontinuous density. However, we now know that there are no other CNDs for $(\ep,0)$-DP. In particular, there is no log-concave CND, which is the topic of the next subsection.

\subsection{Infinite divisibility and log-concavity}\label{s:logconcave}
%In Section \ref{s:pure}, we showed that the Tulap distribution is the unique CND for $f_{\ep,0}$. On the other hand, we know that log-concave distributions have smoothness properties, as well as nice properties with tradeoff functions.
It has been shown in \citet{dong2022gaussian} and \citet{dong2021central} that tradeoff functions built from location family log-concave distributions have very nice properties for $f$-DP. Log-concave distributions are have a monotone likelihood ratio property which gives a simple closed form expression for the tradeoff function in terms of the cdf of the log-concave distribution. 
It is easily observed that a tradeoff function with a log-concave CND satsifies a property that we call \emph{infinite divisibility}. We prove that in fact a tradeoff function has a log-concave CND if and only if it is infinitely divisble. Our proof also results in a construction to produce the unique log-concave CND. 

A continuous random variable $X$ is \emph{log-concave} if its density can be written as $g_X(x)\propto \exp(C(x))$, where $C$ is a concave function. We call a (symmetric) tradeoff function $f$ \emph{log-concave} if there exists a log-concave CND $N$ for $f$. Recall that if $N\sim F$ is a CND for $f$, then 
%If $N$ is a log-concave CND for $f$, with cdf $F$, then 
$f(\alpha) = F(F^{-1}(\alpha)-1)$. If $N$ is also log-concave, then $f_t(\alpha)\defeq F(F^{-1}(\alpha)-t)$ is a tradeoff function for every $t\in [0,\infty)$, and the family $\{f_t\mid t\in [0,\infty)\}$ is a monoid satisfying the assumptions of Definition \ref{def:divisible}. %This observation is proved in Theorem \ref{thm:CNDlimit}.

\begin{defn}\label{def:divisible}
 A tradeoff function $f$ is \emph{infinitely divisible} if there exists a monoid, under the operation of functional composition, $\{f_t\in \mscr F\mid t\geq 0\}$ containing $f$ such that 
 \begin{enumerate}
     \item $f_t\circ f_s=f_{t+s}$ for all $s,t\geq 0$,
     \item $f_s$ is nontrivial for all $s>0$, and 
     \item $f_s\rightarrow f_0=\mathrm{Id}$ as $s\downarrow 0$.
 \end{enumerate}
\end{defn}

The discussion above established that log-concave CNDs are infinitely divisible. The key result of this section is that a tradeoff function is log-concave if and only if it is infinitely divisible. We saw that it is easy to construct the infinitely divisible family given a log-concave CND. Surprisingly, we give a construction to derive the log-concave CND from the infinitely divisible family as well. This result shows an intimate relationship between properties of a tradeoff function and the possible CNDs for that tradeoff function. We will see in Section \ref{s:construction} that the property of infinite divisibility shows up again in the construction of multivariate CNDs.

\begin{restatable}{thm}{thmCNDlimit}\label{thm:CNDlimit}
%\begin{thm}\label{thm:CNDlimit}
A nontrivial tradeoff function $f\in \mscr F$ is log-concave if and only if it is infinitely divisible.  In particular, 
\begin{enumerate}
\item If $f$ is log-concave with log-concave CND $N\sim F$, then $\{f_t\mid t\geq 0\}$ defined by $f_t = F(F^{-1}(\alpha)-t)$ satisfies the assumptions of Definition \ref{def:divisible}.  

%Suppose that there exists a semi-group $\{f_t\in \mscr F\mid t\geq 0\}$ that satisfies $f_t\circ f_s=f_{t+s}$, $f_s$ is nontrivial for all $s>0$, and $f_s\rightarrow \mathrm{Id}$  as $s\downarrow 0$.  Then $f_t\in \mscr L$ for all $t>0$. 

\item Let $f$ be infinitely divisible, with monoid $\{f_t\in \mscr F\mid t\geq 0\}$, as defined in Definition \ref{def:divisible}, such that $f=f_1$. Let $F_s$ be any CND for $f_s$ (such as constructed in Proposition \ref{prop:CNDsynthetic}). Then the following limit exists $F^*(t) \defeq \lim_{s\rightarrow 0} F_s(\frac{1}{s} t)$ and $N\sim F^*$ is the unique log-concave CND for $f$. % and satisfies $f_s=F^*({F^*}^{-1}(\alpha)-s)$ for all $s>0$. 
Furthermore, $F^*(st)$ is the unique log-concave CND for $f_s$, for all $s>0$.
\end{enumerate}
\end{restatable}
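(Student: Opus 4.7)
For the forward direction (1), given a log-concave CND $F$ for $f$, I would set $f_t(\alpha) := F(F^{-1}(\alpha) - t)$ for $t \geq 0$. Log-concavity of $F$ yields a monotone likelihood ratio between $F$ and every translate $F(\cdot - t)$, so by Neyman--Pearson $f_t$ coincides with $T(F, F(\cdot - t))$ and is a genuine tradeoff function. The semigroup identity $f_s \circ f_t = f_{s+t}$ follows by cancellation of $F^{-1}\circ F$; continuity at $s = 0$ is immediate from continuity of $F$; and nontriviality for $s > 0$ follows because $F$ has full support.

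For the reverse direction (2), I would first identify the candidate limit by iterating the CND recurrence (Lemma~\ref{lem:recurrence}). Starting from the symmetry point $F_s(0) = 1/2$ gives $F_s(n) = 1 - f_s^{\circ n}(1/2) = 1 - f_{ns}(1/2)$ for every $n \in \NN$; in particular, at grid points $t/s \in \NN$, $G_s(t) := F_s(t/s) = 1 - f_t(1/2)$, independently of the choice of $F_s$. This forces the candidate
\[
F^*(t) \;:=\; \begin{cases} 1 - f_t(1/2), & t \geq 0, \\ f_{-t}(1/2), & t \leq 0. \end{cases}
\]
To upgrade this to $G_s(t) \to F^*(t)$ at arbitrary $t \geq 0$, I use monotonicity of $F_s$: with $n := \lfloor t/s \rfloor$, the sandwich $1 - f_{ns}(1/2) = F_s(n) \leq G_s(t) \leq F_s(n+1) = 1 - f_{(n+1)s}(1/2)$ collapses to $F^*(t)$ as $s \to 0$, by continuity of $t \mapsto f_t(1/2)$ (a consequence of Definition~\ref{def:divisible}.3 together with the semigroup law $f_{t+h}(1/2) = f_h(f_t(1/2))$).

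Having shown existence, I would next verify that $F^*$ is a valid symmetric CDF: monotonicity from $f_{t+s}(1/2) \leq f_t(1/2)$; continuity as above; and $F^*(t) \to 1$ as $t \to \infty$ from iterating the nontrivial $f_1$ at $1/2$, which lies strictly above its fixed point. The core step is the identity
\[
F^*(F^{*-1}(\alpha) - t) \;=\; f_t(\alpha), \qquad \alpha \in (0,1),\ t \geq 0,
\]
established by unfolding the definition of $F^*$ and invoking the semigroup law; specializing to $t = 1$ gives CND property~3 for $f$, and symmetry plus the monoid structure give the rest. The analogous claim for $f_s$ follows by re-running the argument on the re-scaled monoid $\{f_{st}\}_{t \geq 0}$.

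Finally, to deduce log-concavity of $F^*$, I would combine the displayed identity with the fact that each $f_t$ is a genuine convex tradeoff function: since $\alpha \mapsto F^*(F^{*-1}(\alpha) - t)$ is a valid tradeoff function for every $t \geq 0$, the standard characterization of monotone-likelihood-ratio translation families forces $F^*$ to be log-concave. For uniqueness, a second log-concave CND $F'$ for $f$ would, via part~1, induce its own infinitely divisible monoid $\{f'_t\}$ with $f'_1 = f$, and applying the construction to $\{f'_t\}$ reproduces $F'$ as $F^*(t) = 1 - f'_t(1/2) = 1 - F'(-t) = F'(t)$; it then suffices that the infinitely divisible monoid through a given $f$ is itself unique, which I would argue on rationals by the ``functional $n$-th root'' being unique within log-concave monoids. \emph{The main obstacle} is the log-concavity step: one must upgrade the pointwise formula $F^*(F^{*-1}(\alpha)-t)=f_t(\alpha)$ to the statement that $\alpha \mapsto F^*(F^{*-1}(\alpha) - t)$ is in fact the Neyman--Pearson tradeoff $T(F^*, F^*(\cdot - t))$, since only this equivalence --- and not a one-sided inequality --- characterizes log-concavity of $F^*$.
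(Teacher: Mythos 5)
Your proposal tracks the paper's proof quite closely in overall structure, with one genuine and welcome simplification: rather than taking an abstract limit of CNDs along $s_n = 1/n!$ and arguing that the determined grid points densify, you read off the explicit formula $F^*(t) = 1 - f_t(1/2)$ directly from the recurrence of Lemma~\ref{lem:recurrence} starting at the symmetry point $F_s(0)=1/2$, and then get convergence for arbitrary $s\to 0$ by a clean floor/ceiling sandwich. The paper proves the same thing more laboriously (first $s_n = 1/n!$, then sequence-independence), so your route is tidier. One step you gloss over is the derivation of the identity $F^*({F^*}^{-1}(\alpha)-t) = f_t(\alpha)$ ``by unfolding the definition and invoking the semigroup law'': in the case $u = {F^*}^{-1}(\alpha) \geq t$ this reduces to $1 - c = f_t\bigl(1 - f_t(c)\bigr)$ with $c = f_{u-t}(1/2)$, which is exactly the reflection identity for a \emph{symmetric} tradeoff function (in the paper's flipped convention, symmetry means the graph of $f_t$ is symmetric about the line $y = 1-x$). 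So symmetry of each $f_t$ (which is built into $\mscr F$) is doing real work here and should be invoked explicitly; the paper instead re-uses the recurrence directly on the dense rational grid, sidestepping this algebra.

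Your flagged ``main obstacle,'' however, is not actually an obstacle, and your own middle paragraph already contains the correct resolution. You worry that you must first show $T(F^*, F^*(\cdot - t)) = F^*({F^*}^{-1}(\alpha)-t)$ before log-concavity can be concluded. But the implication runs the other way: the identity $F^*({F^*}^{-1}(\alpha)-t) = f_t(\alpha)$ plus the fact that each $f_t$ is convex shows that $\alpha \mapsto F^*({F^*}^{-1}(\alpha)-t)$ is a convex non-decreasing function bounded by the identity for \emph{every} $t > 0$. Differentiating gives that $g({F^*}^{-1}(\alpha)-t)/g({F^*}^{-1}(\alpha))$ is non-decreasing in $\alpha$ for every $t>0$, i.e.\ $g(x-t)/g(x)$ is non-decreasing in $x$ for all $t>0$, which is precisely log-concavity of the density $g$; the equality of $F^*({F^*}^{-1}(\alpha)-t)$ with the Neyman--Pearson tradeoff is then a \emph{consequence} of MLR, not a prerequisite. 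This is exactly the content of the lemma from \citet{dong2022gaussian} (their Lemma~A.3) that the paper cites for this step, so you should proceed with confidence rather than treat it as a gap. The one place where both your sketch and the paper's proof are genuinely terse is the uniqueness claim: your argument that $F'(t) = 1 - f'_t(1/2)$ for the monoid $\{f'_t\}$ induced by a second log-concave CND $F'$ is correct, but the reduction to ``the infinitely divisible monoid through $f$ is unique'' is a nontrivial statement about uniqueness of functional $n$-th roots among MLR families that would need its own argument.
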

\begin{proof}[Proof Sketch.]
It is easy to verify property 1. For property 2, we consider a subsequence $s_n=1/n!$ and observe that $F_{1/n!}(n! t)$ is a CND for $f$ at every $n$, but that as $n$ increases, the number of points at which the CND is uniquely determined also increases, by Lemma \ref{lem:recurrence}. In the limit, this sequence converges to a unique cdf, which we show has the properties of a log-concave CND.
\end{proof}

\begin{example}
We will illustrate the limit of Theorem \ref{thm:CNDlimit} on $G_1$. Let  $F_{G_{2^{-n}}}$ be the constructed cdf from Proposition \ref{prop:CNDsynthetic} for $n=0,1,2,3$. The density functions corresponding to $F_{G_{2^{-n}}}(2^{n}t)$ are plotted in Figure \ref{fig:CNDlimit}. We see that as $n$ increases, the pdfs approach that of a standard normal, which we know is the log-concave CND for $f=G_1$. 

%We also verified Theorem \ref{thm:CNDlimit} with Laplace-DP and Logistic-DP (both log-concave distributions) and were able to derive similar plots. 

When the construction of Theorem \ref{thm:CNDlimit} is applied to $f_{\ep,0}$, the cdf $F^*$ converges to a Laplace cdf. This seems to reflect the fact that under the limit of group privacy, $(\ep,0)$-DP converges to Laplace-DP \citet[Proposition 2.15]{dong2022gaussian}.
%\todo{update figure  to $n!$?}
\begin{figure}
    \centering
    \includegraphics[width=.24\linewidth]{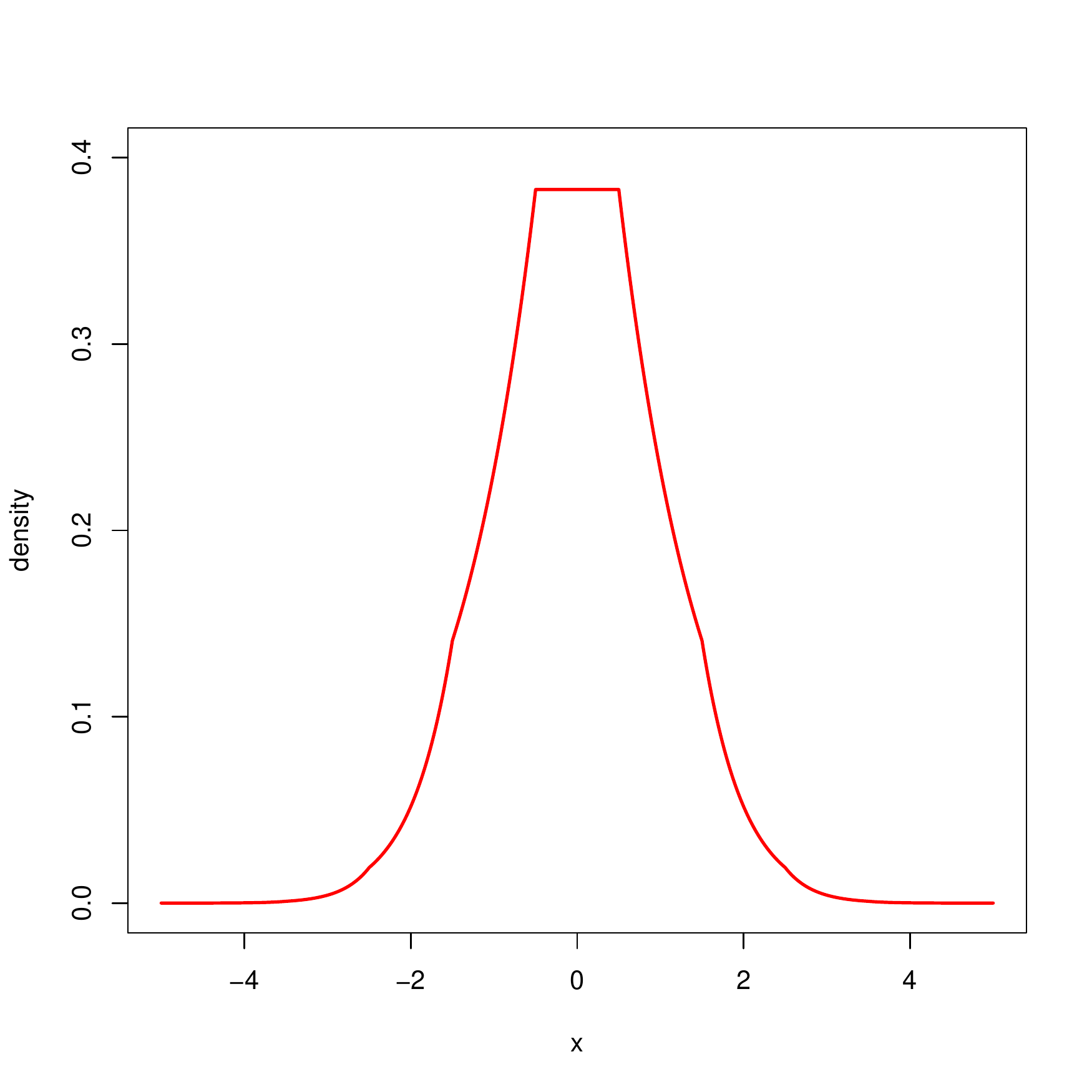}
    \includegraphics[width=.24\linewidth]{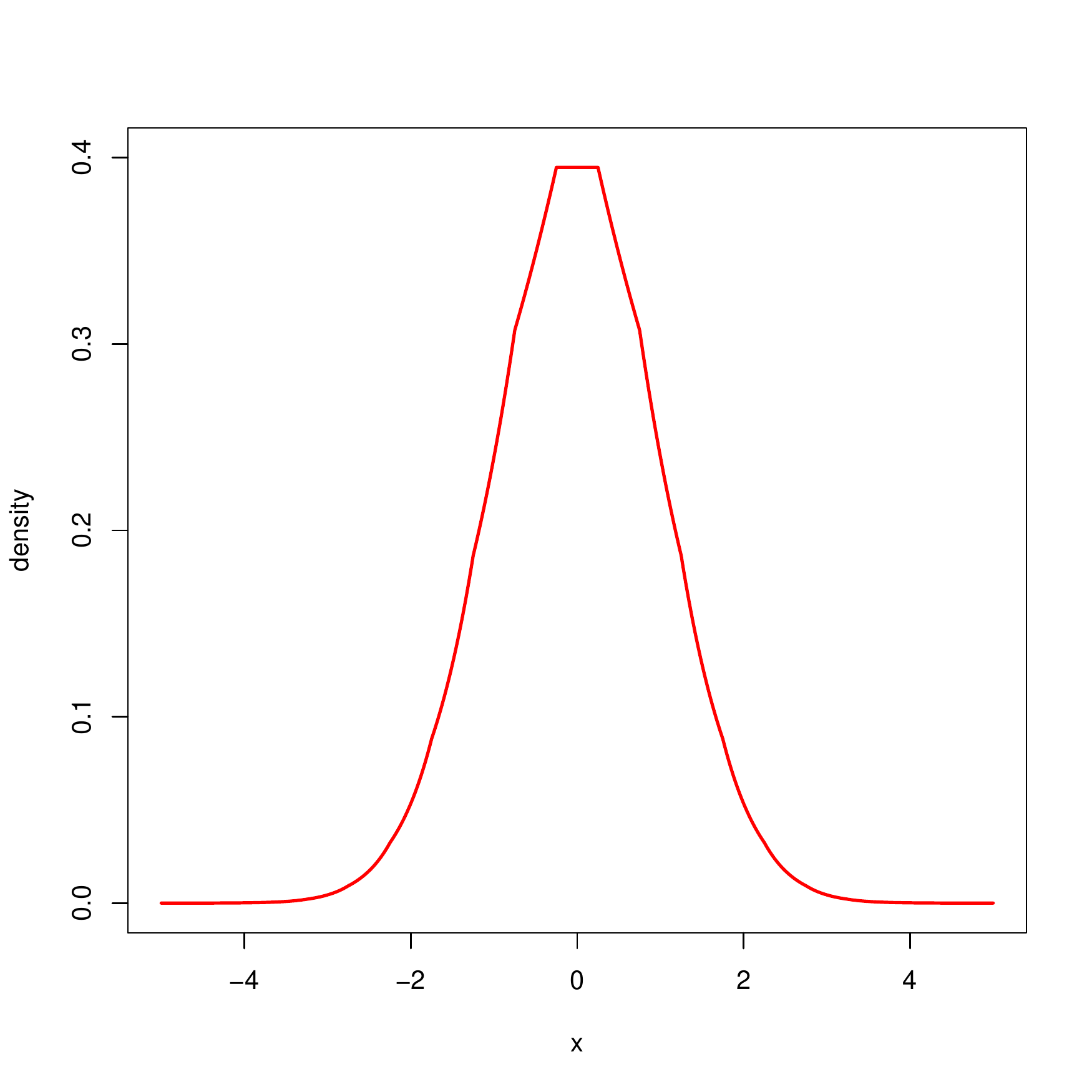}
       \includegraphics[width=.24\linewidth]{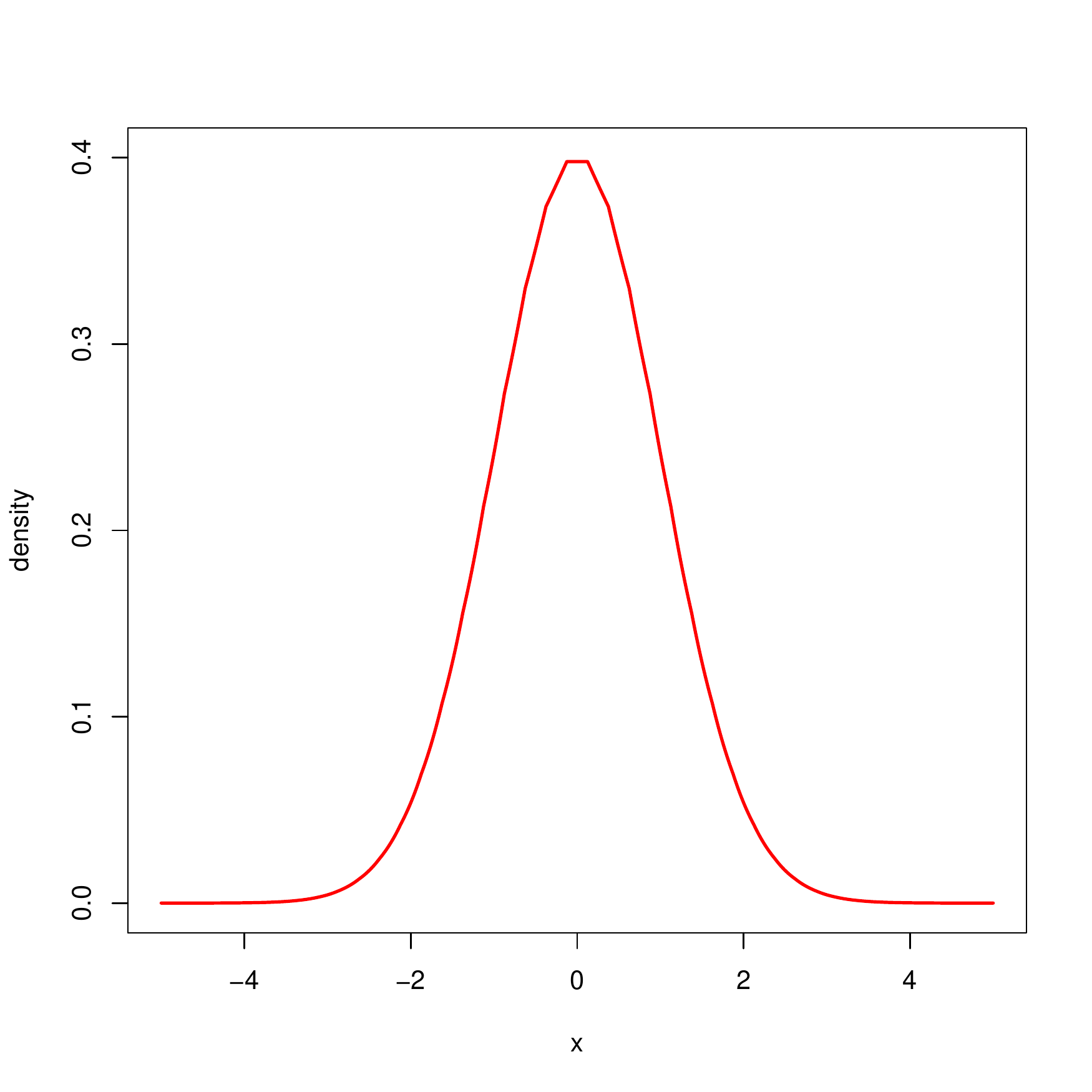}
    \includegraphics[width=.24\linewidth]{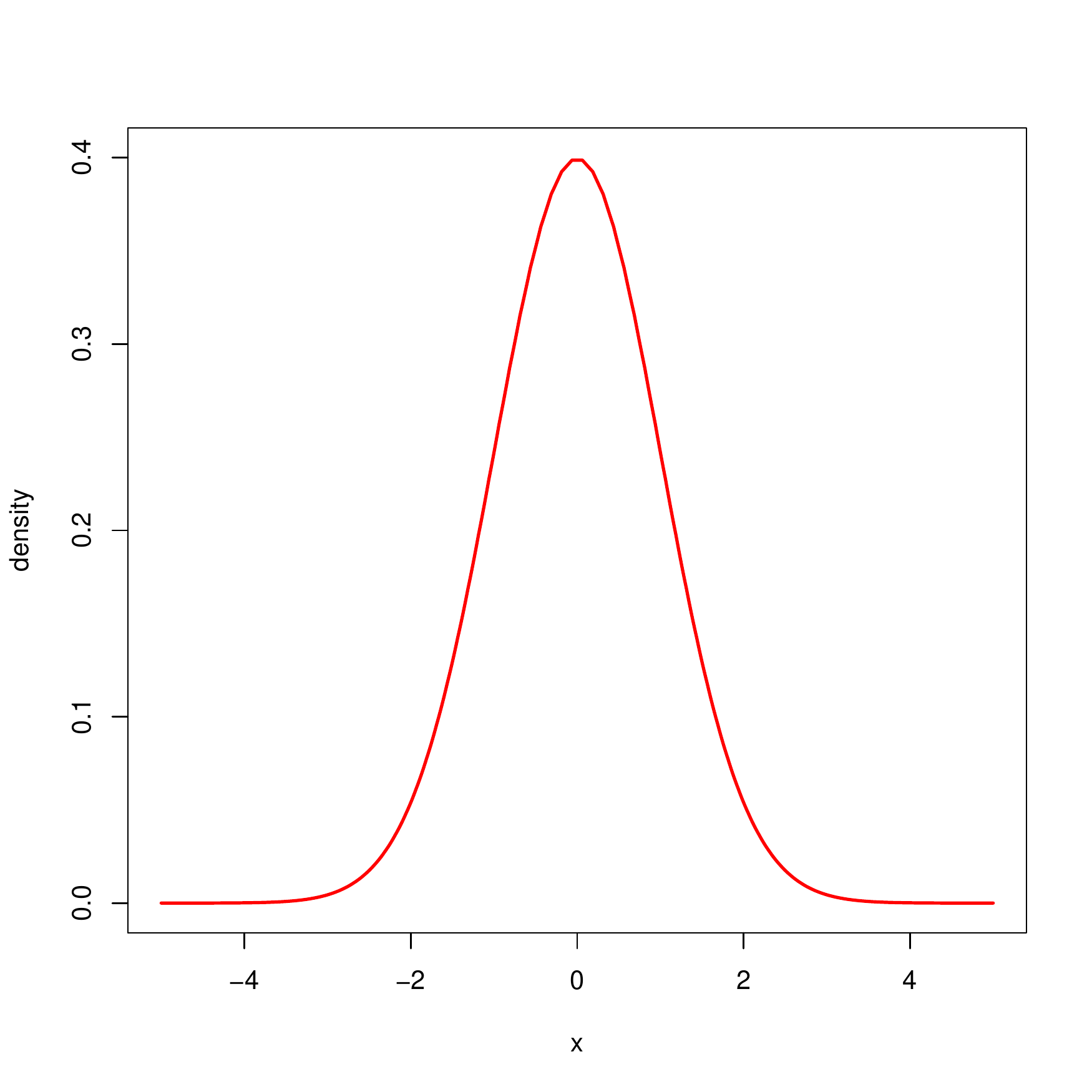}
    \caption{An illustration of Theorem \ref{thm:CNDlimit}  when applied to $G_1=T(N(0,1),N(1,1))$. From left to right, we have the density corresponding to $F_{G_{2^{-n}}}(2^n t)$ for $n=0,1,2,3$.}  %}We see that as $n$ increases, the pdf approaches that of a standard normal, which we know is the log-concave CND for $G_1$. }
    \label{fig:CNDlimit}
\end{figure}
\end{example}

Finally, we illustrate why properties 2 and 3 of Definition \ref{def:divisible} are necessary for Theorem \ref{thm:CNDlimit}
\begin{example}
[Non examples for Theorem \ref{thm:CNDlimit}] 
 First consider why it is necessary to have $f_s\rightarrow \mathrm{Id}$. Set $f_s(\alpha) = I(\alpha=1)$ 
 %\begin{cases}0& \alpha<1\\1&\alpha=1\end{cases}$ 
 for all $s>0$. Note that $f_s\circ f_t=f_{s+t}$, but that the construction of Theorem \ref{thm:CNDlimit} results in a point mass at zero, which is not a CND as it is not continuous.

 Next, suppose that all of the tradeoff functions are trivial, then $f_s(\alpha)=\alpha$ for all $s>0$, and $f_s\circ f_t=f_{s+t}$. However, there are no CNDs in this case. 
\end{example}

\subsection{Piece-wise linear tradeoff functions are generally not infinitely divisible}\label{s:piece}
We showed in Theorem \ref{thm:CNDlimit} that if a tradeoff function is infinitely divisible, then we can construct a log-concave CND. However, it is not always obvious whether a tradeoff function is infinitely divisible or not. We show that in the case of piece-wise linear tradeoff functions, we can upper bound the number of possible divisions in terms of the number of break points. In particular, the piece-wise linear tradeoff functions considered in this section are not infinitely divisible. 

We can characterize the piece-wise linear convex functions in terms of the 2nd derivative behavior: A convex function is piece-wise linear if and only if its 2nd derivative is defined everywhere except for finitely many points, and is zero whenever it is defined.

%Lemma \ref{lem:piecewise} in the supplement shows how many divisions of a piece-wise linear tradeoff function are possible. 
Part 1 of Proposition \ref{prop:piecewise} shows that a piece-wise linear tradeoff function $f$, which satisfies $f(x)=0$ implies $x=0$, can be sub-divided only a finite number of times. A consequence of this is that $f_{\ep,0}$ and several related tradeoff functions are not infinitely divisible and hence do not have log-concave CNDs. In fact, not only is $f_{\ep,0}$ not infinitely divisible, but there is in fact \emph{no division} $f_{\ep,0} = f\circ g$ into symmetric tradeoff functions, except where either $f$ or $g$ is the identity!

\begin{restatable}{prop}{proppiecewise}\label{prop:piecewise}
%\begin{prop}\label{prop:piecewise}%\todo{arguments are not correct here.}
\begin{enumerate}
%\item Any piece-wise linear tradeoff function (with $k$ break points) has at most $2^k$ unique divisions into two tradeoff functions (includes the identity tradeoff). This implies that any piece-wise linear tradeoff function is not infinitely divisible. 
\item Let $f$ be a nontrivial piece-wise linear tradeoff function with $k\geq 1$ breakpoints and such that $f(x)=0$ implies that $x=0$. Then there is no tradeoff function $g$ such that $g^{\circ (k+1)}=f$.
\item Let $\ep>0$. There does not exist nontrivial symmetric tradeoff functions $f_1$ and $f_2$ such that $f_{\ep,0}=f_1\circ f_2$. 
%If $f_{\ep,0}=g\circ h$, where $g$ and $h$ are symmetric tradeoff functions, then either $g=f_{\ep,0}$ and $h=\mathrm{Id}$ or vice versa.
\item Let $f$ be the tradeoff function obtained by an arbitrary sequence of mechanism compositions, functional compositions, or subsampling (without replacement) of $f_{\ep,0}$ (could be different $\ep$ values for each). Then $f$ is not infinitely divisible and so does not have a log-concave CND. 
\end{enumerate}
\end{restatable}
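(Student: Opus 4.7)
The plan is to prove all three parts through a common tool: counting breakpoints under functional composition. The main subtlety is reducing to the case that any factor $g$ of $f$ is itself piecewise linear and strictly increasing; once this is done, an inductive breakpoint-counting argument does the work. I will then cite Theorem~\ref{thm:CNDlimit} to convert the breakpoint obstruction into the final ``no log-concave CND'' conclusion in part (3).

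For part (1), suppose $g^{\circ(k+1)}=f$. I would first reduce to the case that $g$ is piecewise linear: if $g$ had a strictly convex piece on some interval, strict convexity would propagate through iterated composition (on sub-intervals where the inner $g$ lands in a linear piece of the outer $g$, the outer acts as an affine increasing map preserving strict convexity), contradicting $f$ being piecewise linear. The hypothesis $f(x)=0\Rightarrow x=0$ forces $g(x)=0\Rightarrow x=0$; together with convexity, this excludes any flat piece of $g$, so $g$ is strictly increasing. Writing the breakpoints of $g$ as $0<y_1<\cdots<y_b<1$ with $b\geq 1$ (since $g\neq\mathrm{Id}$), and using $g(1)=1$ and $g(x)<x$ on $(0,1)$ for nontrivial $g$, I would prove by induction on $n$ that the largest breakpoint $z_{B_n}$ of $g^{\circ n}$ satisfies $g^{\circ n}(z_{B_n})=g(y_b)<y_b$. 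This gives $(g^{\circ n})^{-1}(y_b)>z_{B_n}$, a new breakpoint of $g^{\circ(n+1)}$, hence $B_{n+1}\geq B_n+1$; iterating yields $B_{k+1}\geq k+1$, contradicting $f$ having only $k$ breakpoints.

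For part (2), the $k=1$ case of part (1) only rules out the symmetric factorization $g\circ g=f_{\ep,0}$, so an additional argument is needed when $f_1\neq f_2$. The unique breakpoint of $f_{\ep,0}$ is at $\alpha_0=e^\ep/(e^\ep+1)$. Because successive slopes of each $f_i$ are strictly increasing, every candidate breakpoint of $f_1\circ f_2$ is a genuine slope change; thus each inner breakpoint of $f_2$ and each preimage $f_2^{-1}(y^{(1)}_j)$ of an outer breakpoint of $f_1$ must coincide with $\alpha_0$. This forces $b_1=b_2=1$ with $y_1^{(2)}=\alpha_0$ and $y_1^{(1)}=f_2(\alpha_0)$. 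The single breakpoint of a nontrivial symmetric piecewise linear tradeoff function lies on the diagonal $\beta=1-\alpha$, so $f_2(\alpha_0)=1-\alpha_0$; the same self-symmetry applied to $f_1$ combined with $f_1(y)\leq y$ at its breakpoint forces $y_1^{(1)}\geq 1/2$. But $y_1^{(1)}=1-\alpha_0<1/2$ (since $\alpha_0>1/2$ for $\ep>0$), the desired contradiction.

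For part (3), I would show that the three operations preserve (i) piecewise linearity with finitely many breakpoints and (ii) the condition $f(x)=0\Rightarrow x=0$ (the absence of any $\delta>0$ slack). Functional composition trivially preserves both. Mechanism (tensor) composition of pure-DP tradeoff functions produces a piecewise linear tradeoff with no $\delta$ slack, since the privacy-loss random variable of a pure-DP mechanism is discrete-valued and its sums remain discrete. Subsampling without replacement of a pure-DP mechanism likewise preserves piecewise linearity and the vanishing condition. Hence any $f$ in the statement is piecewise linear with some finite $k\geq 1$ breakpoints and satisfies $f(x)=0\Rightarrow x=0$. If $f$ were infinitely divisible via a monoid $\{f_t\}_{t\geq 0}$ with $f=f_1$, then $g:=f_{1/(k+1)}$ would satisfy $g^{\circ(k+1)}=f$, contradicting part (1). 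Therefore $f$ is not infinitely divisible, and Theorem~\ref{thm:CNDlimit} yields that $f$ has no log-concave CND.
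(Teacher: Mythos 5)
Your proposal is correct and takes essentially the same route as the paper. Part~(1) is the paper's Lemma~\ref{lem:piecewise}: first reduce to piecewise-linear factors (the paper does this via the second-derivative criterion; you argue that a strictly convex sub-interval of $g$ would propagate through the iterated composition, which captures the same idea although it does not explicitly address the case where $g$ fails to be piecewise linear by having infinitely many breakpoints accumulating somewhere without any strictly convex piece --- that case is handled essentially by the same observation, since such breakpoints also cannot be destroyed under composition with a strictly increasing convex map), and then count breakpoints by induction. You track the largest breakpoint increasing; the paper's Lemma~\ref{lem:piecewise}(2) tracks the smallest breakpoint decreasing --- these are dual and equivalent. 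For Part~(2) you correctly note that part~(1) alone handles only the case $f_1=f_2$, and you derive the contradiction from the fact that the unique breakpoint of a symmetric piecewise-linear tradeoff function lies on the anti-diagonal $\beta=1-\alpha$, which forces it into $[1/2,1)$ in the $\alpha$-coordinate while $f_2(\alpha_0)=1-\alpha_0<1/2$; the paper instead concludes that $f_2$ must equal $f_{\ep,0}$ exactly and hence $f_1=\mathrm{Id}$. Same ingredients, same anti-diagonal observation, slightly different final contradiction. Part~(3) is identical to the paper: all three operations preserve piecewise linearity and the ``no $\delta$-slack'' condition $f(x)=0\Rightarrow x=0$, infinite divisibility would supply a factor $g=f_{1/(k+1)}$ contradicting part~(1), and Theorem~\ref{thm:CNDlimit} then rules out a log-concave CND.
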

\begin{proof}[Proof Sketch.]
We show in Lemma \ref{lem:piecewise} that divisions of a piece-wise linear tradeoff function are themselves piece-wise linear, and that the functional composition of piece-wise linear tradeoff functions increases the number of breakpoints. This then limits the number of divisions a piece-wise linear tradeoff function can have in terms of the number of its breakpoints. % {\red Part 3 follows from 1, once we note that the operations of mechanism composition, functional composition, and subsampling (without replacement) all preserve the piece-wise linear property and the property $f(x)=0$ implies $x=0$. }
\end{proof}

%Proposition \ref{prop:piecewise} shows that if our goal were to develop a mechanism such that it exactly satisfies $(\ep,0)$-DP for groups of size $k$ has privacy 

\begin{example}[$f_{0,\de}$ is log-concave]
What if $f(x)=0$ does not imply that $x=0$? The tradeoff functions $f_{0,\de}$ fit within this setting, and the results of Proposition \ref{prop:piecewise} do \emph{not} apply here. In fact, $f_{0,\de}$ is infinitely divisible with log-concave CND $U(-1/(2\de),1/(2\de))$. That is $f_{0,\de} = T(U,U+\de)$ where $U\sim U(-1/2,1/2)$. 
While $f_{\ep,\de}$ for $\de>0$ also does not satisfy the assumption that $f_{\ep,\de}(x)$ implies $x=0$, it is not clear at this time whether $f_{\ep,\de}$ is log-concave or not. 
\end{example}

\section{Multivariate CNDs}\label{s:multi}
In this section, we generalize the definition of CND to dimensions greater than one. While in the univariate case, \emph{sensitivity} is measured using the absolute distance between two statistic values, in $\RR^d$, there are many choices of norms which can be used to measure the sensitivity \citep{awan2020structure}. So, we will specify the sensitivity norm when talking about a multivariate CND. In Definition \ref{def:CND_MVT} we define a multivariate CND to be a natural generalization of properties 1-4 of Definition \ref{def:CND}. %While it would not be a problem to also assume symmetry about zero, we do not see the need to impose this constraint in this work. On the other hand, it is not clear what the appropriate analogue of property 3 of Definition \ref{def:CND} would be in the multivariate setting. We will indicate whenever a multivariate CND has additional properties such as symmetry, unimodality, or log-concavity.

\begin{defn}\label{def:CND_MVT}
 Let $f$ be a symmetric tradeoff function, and let $\lVert\cdot \rVert$ be a norm on $\RR^d$. A continuous random vector $N$ with density $g$ is a \emph{canonical noise distribution} (CND) for $f$, with respect to $\lVert\cdot \rVert$, if
 \begin{enumerate}
     \item %for any statistic $S(X)$ which satisfies $\lVert S(X)-S(X')\rVert\leq \Delta$, for all $H(X,X')\leq 1$, we have that $S(X) + \Delta N$ satisfies $f$-DP. Equivalently, 
     For all $v\in \RR^d$ such that $\lVert v\rVert\leq 1$ we have that $T(N,N+v)\geq f$,
     \item there exists $\lVert v^*\rVert\leq 1$ such that $T(N,N+v^*)(\alpha)=f(\alpha)$ for all $\alpha \in (0,1)$,
     \item for all $v^*$ which satisfy property 2, and all $w\in \RR^d$, we have that the likelihood ratio $g(w+tv^*-v^*)/g(w+tv^*)$ is a non-decreasing function of $t\in \RR$, %\todo{update proofs}
     %\item %{\red $T(N_1,N_2+v)(\alpha)$ has some nice form analogous to property 3 of Definition \ref{def:CND}.}
     \item $N$ is symmetric about zero: $g(x)=g(-x)$ for all $x\in \RR^d$.% and for any vector $v$, $g(kv)$ is a weakly decreasing function in $k\geq 0$. (implies that zero is a weak mode of $g$).\todo{JA: remove property 3? Weakly unimodal is only used in Thm 4.11. Not sure how to prove weakly unimodal in Prop 4.2.}
 \end{enumerate}
\end{defn}

When restricted to $d=1$, Definition \ref{def:CND_MVT} recovers Definition \ref{def:CND}. This is clear for properties 1, 2, and 4. Property 3 of Definition \ref{def:CND} can be interpreted as requiring that an optimal rejection set for $T(N,N+1)$ is of the $[x,\infty)$ for some $x$. By the Neyman Pearson Lemma, we know that this holds if and only if the likelihood ratio $F'(x-1)/F'(x)$ is non-decreasing in $x$. We see that when $d=1$, property 3 of Definition \ref{def:CND_MVT} is equivalent to property 3 of Definition \ref{def:CND}. We can interpret Property 3 of Definition \ref{def:CND_MVT} as enforcing a monotone likelihood ratio in directions parallel to $v^*$. 

%In Definition \ref{def:CND}, we prescribe a specific form of $T(N,N+1)$, which is equivalent to requiring that the optimal rejection region is a half interval. On the other hand, in Definition \ref{def:CND_MVT}, we only ask for the weaker requirement that the density $g$ is weakly unimodal. 

\subsection{Constructions of multivariate CNDs}\label{s:construction}

Composition gives a simple method to construct a multivariate CND whenever a tradeoff function can be decomposed into the composition of $k$ tradeoff functions:

\begin{restatable}{prop}{propCNDcomposition}\label{prop:CNDcomposition}
%\begin{prop}\label{prop:CNDcomposition}
Suppose that $f = f_1\otimes f_2\otimes\cdots\otimes f_k$ all be nontrivial and symmetric tradeoff functions, and let $F_1,F_2,\ldots, F_k$ be  CNDs for $f_1,\ldots, f_k$ respectively. Let $N=(N_1,\ldots, N_k)$ be the random vector where $N_i \sim F_i$ are independent. %Note that $N$ has density $F'(\ul x) \defeq  F'_1(x_1)F'_2(x_2)\cdots F'_k(x_k)$. 
Then $N$ is a CND for $f$ with respect to $\lVert \cdot \rVert_\infty$. %The vector $v=(1,1,\ldots, 1)$ gives equality of property 2 in Definition \ref{def:CND_MVT}.
\end{restatable}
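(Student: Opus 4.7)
The plan is to verify each of the four properties of Definition \ref{def:CND_MVT} for the product distribution with joint density $g(x)=\prod_{i=1}^k F_i'(x_i)$, where $F_i'$ is the density of $F_i$. Property 4 (symmetry about zero) is immediate from the symmetry of each $F_i$ and independence. For property 1, given $v$ with $\lVert v\rVert_\infty\leq 1$, each $|v_i|\leq 1$, so property 1 of Definition \ref{def:CND} applied to each $F_i$ (together with symmetry of $f_i$ to handle the case $v_i<0$) yields $T(N_i,N_i+v_i)\geq f_i$. By independence, $T(N,N+v)=\bigotimes_{i=1}^k T(N_i,N_i+v_i)$, and monotonicity of $\otimes$ in each argument then gives $T(N,N+v)\geq f_1\otimes\cdots\otimes f_k=f$.

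For property 2, take $v^*=(1,1,\ldots,1)$, which has $\lVert v^*\rVert_\infty=1$; property 2 of each $F_i$ gives $T(N_i,N_i+1)=f_i$, and hence $T(N,N+v^*)=f_1\otimes\cdots\otimes f_k=f$.

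Property 3 is the crux. The first step is to characterize the set of $v^*$ saturating property 2. I will argue that any such $v^*$ must satisfy $|v_i^*|=1$ for every $i$: if $|v_i^*|<1$ for some $i$, then $T(N_i,N_i+v_i^*)>f_i$ strictly (because shrinking the shift makes the underlying test strictly easier), and a cancellation-type property of tensor products of tradeoff functions -- strict inequality in one factor propagates to $\bigotimes_i T(N_i,N_i+v_i^*)$ -- contradicts $T(N,N+v^*)=f$. Given this characterization, for any $v^*\in\{-1,1\}^k$ the likelihood ratio factorizes as
\[
\frac{g(w+tv^*-v^*)}{g(w+tv^*)} \;=\; \prod_{i=1}^k \frac{F_i'(w_i+tv_i^*-v_i^*)}{F_i'(w_i+tv_i^*)}.
\]
For $v_i^*=1$, the substitution $y=w_i+t$ reduces the $i$-th factor to $F_i'(y-1)/F_i'(y)$, which is non-decreasing in $y$ (hence in $t$) via the Neyman--Pearson interpretation of property 3 of Definition \ref{def:CND}. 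For $v_i^*=-1$, the substitution $y=w_i-t+1$ turns the factor into $F_i'(y)/F_i'(y-1)$, the reciprocal of the above; it is non-increasing in $y$, and since $y$ decreases with $t$, the factor is again non-decreasing in $t$. A product of positive non-decreasing functions is non-decreasing, which gives property 3.

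The main obstacle I expect is the cancellation step underlying the characterization of $v^*$. My plan is to handle it through the joint testing interpretation: $T(N,N+v^*)$ is the tradeoff function of testing the product measures $N$ vs.\ $N+v^*$, and a strict gap $T(N_i,N_i+v_i^*)>f_i$ in a single coordinate forces a strict gap in the product tradeoff function via an explicit comparison of product likelihood ratios. If a clean general cancellation statement proves elusive, it still suffices to verify property 3 on the set of $v^*$ that in fact saturate property 2, using the factorization above after a direct coordinate-wise computation of $T(N_i,N_i+v_i^*)$.
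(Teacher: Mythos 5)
Your overall strategy matches the paper's: verify the four properties of Definition~\ref{def:CND_MVT} directly for the product density $g(x)=\prod_i F_i'(x_i)$, using the factorization $T(N,N+v)=\bigotimes_i T(N_i,N_i+v_i)$. Your treatments of properties 1, 2, and 4 coincide with the paper's. For property 3 you are actually more scrupulous than the paper: the definition requires monotonicity of the likelihood ratio for \emph{every} $v^*$ that saturates property~2, whereas the paper silently checks only $v^*=(1,\ldots,1)$. Your extension to all sign vectors $v^*\in\{-1,1\}^k$ (via the reciprocal-and-reversal substitution when $v_i^*=-1$) is correct and fills in something the paper leaves implicit.

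The gap is in your characterization of the saturating $v^*$. You assert that if $|v_i^*|<1$ for some $i$ then $T(N_i,N_i+v_i^*)>f_i$ \emph{strictly}, ``because shrinking the shift makes the underlying test strictly easier.'' Property~1 of Definition~\ref{def:CND} only gives $T(N_i,N_i+m)\geq f_i$ for $m\in[0,1]$; it does \emph{not} assert that $T(N_i,N_i+m)$ is a strictly decreasing function of $m$, and for a non-log-concave 1-dimensional CND there is no obvious reason why the inequality must be strict for $m<1$. The downstream ``cancellation'' step for tensor products is likewise left unproven. Your fallback (``verify property~3 directly on whatever $v^*$ saturate'') also does not close the gap: if some $|v_i^*|<1$, the $i$-th factor $F_i'(w_i+(t-1)v_i^*)/F_i'(w_i+tv_i^*)$ is not a rescaling of the canonical ratio $F_i'(y-1)/F_i'(y)$, so monotonicity in $t$ does not follow from property~3 of Definition~\ref{def:CND}. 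To be fair, the paper's own proof simply sets $v^*=(1,\ldots,1)$ and does not address whether other saturating $v^*$ could exist; your proposal makes the missing ingredient visible but does not yet supply it. A complete argument would need either a proof that saturating $v^*$ must lie in $\{-1,1\}^k$ (e.g., via uniqueness/extremality properties of the optimal privacy-loss random variable for $f_i$) or a direct verification that handles a hypothetical $|v_i^*|<1$.
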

%begin{proof}[Proof Sketch.]\todo{should we have proof sketches?}
% The result is based on the definition of the tensor product of tradeoff functions, and properties of one-dimensional CNDs. 
%\end{proof}

%In particular, the CND constructed in Proposition \ref{prop:CNDsynthetic} is a weakly unimodal distribution, and any log-concave CND is weakly unimodal as well. 

Interestingly, when a tradeoff function is infinitely divisible and hence has a log-concave CND by Theorem \ref{thm:limitTradeoff}, we can create a multivariate CND with respect to $\lVert \cdot \rVert_1$-sensitivity.

\begin{restatable}{thm}{thmlonemech}\label{thm:lonemech}
%\begin{thm}\label{thm:lonemech}
Let $f$ be a nontrivial and symmetric log-concave tradeoff function with log-concave CND $F$. Let $N=(N_1,\ldots, N_k)$ be the random vector where $N_i\sim F$ are independent. %Note that  $\ul N$ has density $F'(\ul x)\defeq F'(x_1)F'(x_2)\cdots F'(x_k)$. 
Then $N$ is a (log-concave) CND for $f$ with respect to $\lVert \cdot \rVert_1$.% The vector $v=(1,0,0,\ldots,0)$ gives equality of property 2 in Definition \ref{def:CND_MVT}.
\end{restatable}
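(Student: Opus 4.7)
The plan is to verify the four defining properties of a multivariate CND (Definition \ref{def:CND_MVT}) for the product vector $N=(N_1,\ldots,N_k)$ with respect to $\lVert \cdot \rVert_1$. Property 4 (symmetry about zero) and log-concavity of the joint density $g(x)=\prod_i F'(x_i)$ are immediate from the product structure and the fact that a product of log-concave densities is log-concave. The real content is property 1, the quantitative privacy bound.

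For property 1, I would decompose via the tensor product of tradeoff functions: for $v=(v_1,\ldots,v_k)\in\RR^k$, both $N$ and $N+v$ are products of independent marginals, so $T(N,N+v)=\bigotimes_{i=1}^k T(N_i,N_i+v_i)$. Since $F$ is a symmetric log-concave CND for $f$, Theorem \ref{thm:CNDlimit} yields a monoid $\{f_t\}_{t\geq 0}$ with $f_1=f$ and $f_t(\alpha)=F(F^{-1}(\alpha)-t)$; the one-dimensional MLR of $F$ combined with its symmetry gives $T(N_i,N_i+v_i)=f_{|v_i|}$ for every $v_i\in\RR$, so $T(N,N+v)=\bigotimes_{i=1}^k f_{|v_i|}$. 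The plan is then to invoke the key inequality $\bigotimes_i f_{|v_i|}\geq f_{|v_1|}\circ\cdots\circ f_{|v_k|}=f_{\lVert v \rVert_1}$ and use that $f_t$ is pointwise non-increasing in $t$ (which follows from $f_t=f_{t-s}\circ f_s\leq f_s$ for $s\leq t$, since every tradeoff function satisfies $f_{t-s}(\alpha)\leq\alpha$). Since $\lVert v \rVert_1 \leq 1$, this yields $T(N,N+v)\geq f_{\lVert v \rVert_1}\geq f_1=f$.

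The main obstacle is the ``tensor-dominates-composition'' inequality $f\otimes g\geq g\circ f$ invoked iteratively above, which I plan to establish as a standalone lemma for arbitrary tradeoff functions. For any $[0,1]$-valued test $\phi$ on the product space and $f=T(P_1,Q_1)$, $g=T(P_2,Q_2)$, introduce the slice-expectations $\psi_P(x_1)=\EE_{X_2\sim P_2}\phi(x_1,X_2)$ and $\psi_Q(x_1)=\EE_{X_2\sim Q_2}\phi(x_1,X_2)$. The definition of $g$ yields the pointwise bound $\psi_Q(x_1)\leq 1-g(1-\psi_P(x_1))$. Taking expectations under $Q_1$ and applying Jensen's inequality to the convex function $g$ gives $\EE_{Q_1}\psi_Q\leq 1-g(1-\EE_{Q_1}\psi_P)$. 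Bounding $\EE_{Q_1}\psi_P$ in terms of $\EE_{P_1}\psi_P=\EE_{P_1\times P_2}\phi$ via $f$, and chaining through monotonicity of $g$, then delivers $T(P_1\times P_2,Q_1\times Q_2)(\alpha)\geq (g\circ f)(\alpha)$ as desired.

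Properties 2 and 3 are then short. For property 2, taking $v^*=e_j$ (any standard basis vector) gives $\lVert v^*\rVert_1=1$ and $T(N,N+e_j)=f_1\otimes\mathrm{Id}^{\otimes(k-1)}=f$, so the bound is tight. For property 3, I would show that $g(w+tv^*-v^*)/g(w+tv^*)=\prod_i F'(w_i+(t-1)v_i^*)/F'(w_i+tv_i^*)$ is non-decreasing in $t$ for \emph{any} $v^*\in\RR^k$ (not just the tight ones), by checking each factor via the one-dimensional MLR coming from log-concavity of $F$ (handling $v_i^*>0$, $v_i^*<0$, and $v_i^*=0$ separately, using symmetry of $F$ for the second case) and noting that a product of positive non-decreasing functions is non-decreasing. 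This sidesteps having to characterize the set of tight $v^*$ explicitly.
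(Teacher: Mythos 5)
Your proposal is correct and follows the same overall skeleton as the paper's proof — factor $T(N,N+v)$ through the tensor product of the marginal tradeoff functions, identify $T(N_i,N_i+v_i)=f_{|v_i|}$ using log-concavity and symmetry, invoke a tensor-dominates-composition inequality, collapse compositions via the monoid law $f_s\circ f_t = f_{s+t}$, and finish with monotonicity $f_{\lVert v\rVert_1}\geq f_1$ for $\lVert v\rVert_1\leq 1$. Where you genuinely diverge is in proving the crucial inequality $f\otimes g \geq f\circ g$. The paper's Lemma~\ref{lem:otimesCirc} is a synthetic argument: it realizes $f$ and $g$ as $T(N_1,N_1+1)$ and $T(N_2,N_2+1)$ via CNDs, inserts the intermediate distribution $(1+N_1,\,0+N_2)$, and applies the transitivity lemma (Lemma~\ref{lem:A5}, i.e.\ $T(P,Q)\geq f,\;T(Q,R)\geq g\Rightarrow T(P,R)\geq g\circ f$) together with $f\circ g = g\circ f$ for symmetric tradeoff functions. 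Your route is an analytic first-principles argument: slice the product test to get $\psi_P,\psi_Q$, apply the defining bound for $g$ pointwise in $x_1$, then push the bound through $\EE_{Q_1}$ using Jensen on the convex $g$, and finally apply $f$ to the outer marginal test $\psi_P$. This is a nice tradeoff: the paper's proof is short modulo Lemma~\ref{lem:A5} and CND existence, but as stated it only covers symmetric nontrivial $f,g$; your Jensen argument establishes $T(P_1\times P_2,\,Q_1\times Q_2)\geq g\circ f$ for \emph{arbitrary} tradeoff functions, with no symmetry or nontriviality hypothesis and no reliance on CND machinery. You should record the standard observation that one may restrict the infimum in the tradeoff function to tests attaining the type~I budget exactly (so the argument bounds the infimum itself, not just a point on the ROC curve); this is routine but worth a sentence. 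Your handling of property~3 is also cleaner than the paper's: rather than appealing by analogy to Proposition~\ref{prop:CNDcomposition} (whose tight direction is $(1,\ldots,1)$ under $\ell_\infty$, not $e_j$ under $\ell_1$), you verify the likelihood-ratio monotonicity for \emph{every} shift $v^*$ via the one-dimensional MLR of the log-concave $F$, which avoids characterizing the set of tight directions altogether.
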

\begin{proof}[Proof Sketch.]
 Since the noise added is i.i.d., we can rephrase the tradeoff function as the tensor product of the individual tradeoff functions. We apply Theorem \ref{lem:otimesCirc} which lower bounds the tensor product of tradeoff functions with the functional composition. 
\end{proof}

Theorem \ref{thm:lonemech} was inspired by the i.i.d. Laplace mechanism. In Section \ref{s:laplace}, we show that the i.i.d. Laplace mechanism is a special case of Theorem \ref{thm:lonemech} and gives a multivariate CND for Laplace-DP. 

Note that Theorem \ref{thm:lonemech} results in a log-concave multivariate CND, and if each of $N_1,\ldots, N_k$ are log-concave in Proposition \ref{prop:CNDcomposition}, then that constructed multivariate CND is log-concave as well. \cite{dong2021central} showed that log-concave distributions have many nice properties in multivariate settings as well. We leave it to future work to investigate when multivariate log-concave CNDs exist. 
\begin{comment}
\begin{example}[Non-example: $\ep$-DP]
We could attempt to apply the construction of Theorem \ref{thm:lonemech} to a non-log-concave CND, such as the Tulap distribution for $f_{\ep,0}$. In Figure \ref{}, we see that there are points where the log-likelihood is equal to 2, whereas the distribution is intended to satisfy $1$-DP (which requires the likelihood function to be bounded by 1). The key point where the argument of Theorem \ref{thm:lonemech} falls apart in this case is line 2, as shifting by $|x|<1$ does not give $f_{|x|,0}$-DP. %\todo{make figure}
\end{example}
\end{comment}

\subsection{Multivariate CND for GDP}\label{s:gdp}
Recall that if $N\sim N(0,I)$ is a $d$-dimensional Gaussian random vector, and $v\in \RR^d$ is any vector, then $T(N,N+v)=T(N(0,1),N(\lVert v\rVert_2,1))$ \citep[Proposition D.1(5)]{dong2022gaussian}. This previous result implies that $N(0,I)$ was a multivariate CND for GDP under $\lVert \cdot \rVert_2$-sensitivity. In fact, we show in Proposition \ref{prop:Gauss} that for GDP, any multivariate Gaussian is a CND with respect to any norm.

\begin{restatable}{prop}{propGauss}\label{prop:Gauss}
%\begin{prop}
%Let $T$ be a $d$-dimensional statistic, with $\lVert \cdot \rVert$ sensitivity $1$. 
Let $\Sigma$ be a $d\times d$ positive definite matrix. Let $v^*\in \argmax_{\lVert u\rVert\leq 1} \lVert \Sigma^{-1/2} u\rVert_2$. Then $N(0,\Sigma)$ is a $d$-dimensional CND for $\lVert \Sigma^{-1/2}v^*\rVert_2$-GDP with respect to the norm $\lVert\cdot \rVert$.
\end{restatable}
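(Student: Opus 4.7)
The plan is to reduce every statement about $N(0,\Sigma)$ to a one-dimensional Gaussian testing problem via the affine change of variables $X \mapsto \Sigma^{-1/2} X$, and then verify the four properties of Definition~\ref{def:CND_MVT} essentially by inspection. The key technical fact I will use is the following consequence of Proposition~D.1(5) in \citet{dong2022gaussian}: for any $v \in \RR^d$,
\[
T(N(0,\Sigma),\,N(v,\Sigma)) \;=\; T\bigl(N(0,I),\,N(\Sigma^{-1/2}v,I)\bigr) \;=\; G_{\lVert \Sigma^{-1/2}v\rVert_2}.
\]
This is just the observation that $\Sigma^{-1/2}$ is a bijective post-processing (in both directions), so the tradeoff function is preserved, and that the tradeoff function between two isotropic Gaussians depends only on the Euclidean distance between their means. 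Note also that $v^*$ exists because $\{u : \lVert u\rVert \leq 1\}$ is compact and $u \mapsto \lVert \Sigma^{-1/2}u\rVert_2$ is continuous, so the $\argmax$ is nonempty.

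Write $\mu \defeq \lVert \Sigma^{-1/2}v^*\rVert_2$. For property~1 of Definition~\ref{def:CND_MVT}: for any $v$ with $\lVert v\rVert \leq 1$, the display above gives $T(N,N+v) = G_{\lVert \Sigma^{-1/2}v\rVert_2}$, and by the defining property of $v^*$ we have $\lVert \Sigma^{-1/2}v\rVert_2 \leq \mu$. Since $\nu \mapsto G_\nu$ is pointwise non-increasing in $\nu$, this gives $T(N,N+v) \geq G_\mu$. Property~2 is immediate by taking the same $v^*$: $T(N,N+v^*) = G_\mu$ exactly. Property~4 is trivial since the density of $N(0,\Sigma)$ is an even function.

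The only step that requires a short calculation is property~3, the monotone likelihood ratio condition in the direction of $v^*$. Writing $g$ for the density of $N(0,\Sigma)$, a direct expansion gives
\[
\log\frac{g(w + tv^* - v^*)}{g(w + tv^*)} \;=\; (w + tv^*)^\top \Sigma^{-1} v^* \;-\; \tfrac{1}{2}\, v^{*\top}\Sigma^{-1} v^*,
\]
which is an affine function of $t$ with slope $v^{*\top}\Sigma^{-1}v^* > 0$ (since $\Sigma^{-1}$ is positive definite and $v^* \neq 0$; note $v^* \neq 0$ because $\mu > 0$ whenever $f$ is nontrivial). Hence the likelihood ratio is monotone non-decreasing in $t$, as required. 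I do not anticipate any real obstacle here; the main conceptual point is just recognizing that the optimizer $v^*$ in the ambient norm $\lVert\cdot\rVert$ is precisely the ``worst-case direction'' along which the anisotropic Gaussian is hardest to distinguish from its shift, and that the MLR condition of Definition~\ref{def:CND_MVT} becomes a one-dimensional Gaussian MLR along that direction.
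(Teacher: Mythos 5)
Your proof is correct and takes essentially the same route as the paper: the change of variables $X \mapsto \Sigma^{-1/2}X$ to reduce to an isotropic Gaussian and invoke $T(N(0,I),N(\Sigma^{-1/2}v,I)) = G_{\lVert\Sigma^{-1/2}v\rVert_2}$, followed by the same direct expansion of the log-likelihood ratio showing it is affine in $t$ with slope $v^{*\top}\Sigma^{-1}v^* > 0$. One small remark: the cleaner justification that $v^* \neq 0$ is simply that the maximizer lies on the unit sphere of $\lVert\cdot\rVert$ and $\Sigma^{-1/2}$ is invertible, rather than appealing to nontriviality of $f$.
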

%\begin{proof}[Proof Sketch.]
%The result follows easily from \citep[Proposition D.1(5)]{dong2022gaussian} by first transforming the $N(0,\Sigma)$ distributions to $N(0,I)$ distributions by multiplying by $\Sigma^{-1/2}$. 
%\end{proof}

\begin{remark}
While a multivariate Gaussian is always a multivariate CND for GDP, there is still possibly room for improvement. For Definition \ref{def:CND_MVT}, we only need a single vector to satisfy property 2. However, we could potentially ask that the bound is achieved at all $u$ such that $\lVert u \rVert=1$. Note that if $\lVert \cdot \rVert$ is an elliptical norm, then we do get this stronger property for the multivariate Gaussian, when we choose $\Sigma$ to align with the sensitivity norm. 
\end{remark}

\subsection{Multivariate CND for \texorpdfstring{$(0,\de)$-DP}{zero-delta-DP}}%\todo{combine 4.3 and 4.4}
First let's review a few facts about $(0,\de)$-DP, also known as $f_{0,\de}$-DP. First, note that $U(\frac{-1}{2\de},\frac{1}{2\de})$ is a (log-concave) CND for $f_{0,\de}$. So, we can write $f_{0,\de} = T(U,U+\de)$ where $U\sim U(-1/2,1/2)$. Because of this, we have that $f_{0,\de}$ is infinitely divisible, and $f_{0,\de_1}\circ f_{0,\de_2}=f_{0,\min\{\de_1+\de_2,1\}}$. Furthermore, $f_{0,\de_1}\otimes f_{0,\de_2} = f_{0,1-(1-\de_1)(1-\de_2)}$, as observed in \citet{dong2022gaussian}. This means that $f_{0,\de}$ is also infinitely decomposable, a property that we had only seen for GDP before. This decomposability implies, by Proposition \ref{prop:CNDcomposition} that we can build a multivariate CND for $f_{0,\de}$ under $\lVert \cdot\rVert_\infty$-sensitivity. In fact, this construction is a multivariate CND for any sensitivity norm.

\begin{restatable}{prop}{propCNDTVDP}\label{prop:CND_TV_DP}
%\begin{prop}\label{prop:CND_TV_DP}
Let $0<\de\leq 1$, $d\geq 1$, and $\lVert \cdot \rVert$ be a norm on $\RR^d$. Call $v^* \in \underset{\lVert v \rVert\leq 1}\arg\min\prod_{i=1}^d (1-\de|v_i|)$ and $A = \prod_{i=1}^d(1-\de|v_i^*|)$. Then $U(\frac{-1}{2\de},\frac{1}{2\de})^n$ is a CND for $f_{0,1-A}$ under $\lVert \cdot\rVert$-sensitivity. In the special case of $\lVert \cdot \rVert=\lVert\cdot \rVert_\infty$, this simplifies to $A=(1-\de)^d$. 
\end{restatable}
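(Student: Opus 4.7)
The plan is to verify the four properties of Definition \ref{def:CND_MVT} for the random vector $N = (N_1, \ldots, N_d)$ whose coordinates are i.i.d.\ $U(-1/(2\delta), 1/(2\delta))$. The two main ingredients are the fact (recalled just before the statement) that each coordinate is a univariate CND for $f_{0,\delta}$, and the identity $f_{0,\delta_1} \otimes f_{0,\delta_2} = f_{0,1-(1-\delta_1)(1-\delta_2)}$.

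First I would compute $T(N, N+v)$ for an arbitrary $v \in \mathbb{R}^d$. By independence of coordinates and the tensor definition of $\otimes$, $T(N, N+v) = \bigotimes_{i=1}^d T(N_i, N_i + v_i)$. A direct total-variation computation gives $T(N_i, N_i + v_i) = f_{0, \min\{1, \delta|v_i|\}}$ (using symmetry of the uniform in the sign of $v_i$). Iterating the tensor identity across $i$ yields
\[
T(N, N+v) \;=\; f_{0,\, 1 - \prod_{i=1}^d (1 - \min\{1,\delta|v_i|\})}.
\]
For Property 1, fix $\|v\| \leq 1$. One may restrict attention to $|v_i| \leq 1/\delta$, since otherwise the product inside the subscript is already zero and the bound is trivial. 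The definition of $v^*$ as a minimizer of $\prod_i (1 - \delta|v_i|)$ over the unit $\|\cdot\|$-ball then gives $\prod_i(1 - \delta|v_i|) \geq A$, and hence $T(N, N+v) \geq f_{0, 1-A}$ by monotonicity of $f_{0, \delta'}$ in $\delta'$. Property 2 is the same computation evaluated at $v^*$, producing equality.

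For Property 3, I would argue the monotone likelihood ratio geometrically rather than through log-concavity of $g$, since the uniform density has sharp support. Fix $w \in \mathbb{R}^d$ and let $B = [-1/(2\delta),1/(2\delta)]^d$. The set $\{t : w + tv^* \in B\}$ is an interval $[t_1,t_2]$, and the set $\{t : w + tv^* - v^* \in B\}$ is the shifted interval $[t_1+1,t_2+1]$. A case analysis shows that the ratio $g(w + tv^* - v^*)/g(w + tv^*)$ equals $0$ on the part where only the denominator is supported, equals $1$ on the overlap, and equals $+\infty$ on the part where only the numerator is supported; these regions appear in order of increasing $t$, so the ratio is non-decreasing. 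Property 4 is immediate from coordinate-wise symmetry of each $N_i$. The $\|\cdot\|_\infty$ special case follows because $\prod_i(1 - \delta|v_i|)$ is coordinate-separable and strictly decreasing in each $|v_i| \in [0,1]$, so the minimum on $\|v\|_\infty \leq 1$ is attained at $|v_i|=1$ for every $i$, giving $A = (1-\delta)^d$.

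The main subtlety will be in Property 3, where the sharp support of the uniform makes the ratio ill-defined off of $B \cup (B + v^*)$; the geometric argument dispatches this by verifying monotonicity unambiguously where the ratio is defined, and either convention ($0$ or $\infty$) on the complement preserves monotonicity. A minor secondary point is that for exotic norms, $\|v\| \leq 1$ does not automatically force $|v_i| \leq 1/\delta$; this is absorbed once and for all by the truncation $\min\{1, \delta|v_i|\}$ inside the tensorization step.
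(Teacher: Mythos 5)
Your proof is correct and follows essentially the same path as the paper's. The only organizational difference is in how you arrive at the formula $T(N,N+v) = f_{0,\,1-\prod_i(1-\delta|v_i|)}$: the paper computes the total variation of the two $d$-dimensional product uniforms directly and invokes $T(X,X+v)=f_{0,\TV(X,X+v)}$, whereas you factor through the coordinate tradeoffs $T(N_i,N_i+v_i)=f_{0,\min\{1,\delta|v_i|\}}$ and then iterate the tensor-product identity $f_{0,\delta_1}\otimes f_{0,\delta_2}=f_{0,1-(1-\delta_1)(1-\delta_2)}$. Both land on the same expression, and your Property~3 argument (values progress through undefined, $0$, $1$, $+\infty$ as $t$ increases) is the same case analysis the paper runs coordinate-wise. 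One small virtue of your write-up is making the truncation $\min\{1,\delta|v_i|\}$ explicit, which cleanly absorbs the edge case $|v_i|>1/\delta$ (disjoint one-dimensional supports) that the paper's equality $\TV=1-\prod_i(1-\delta|v_i|)$ elides; since only a lower bound is needed for Property~1, this does not affect the paper's conclusion, but your version is slightly more careful.
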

%\begin{proof}[Proof Sketch.]
%The proof hinges on calculating the total variation distance between the uniform distributions, which requires determining the volume of overlap of the supports.
%\end{proof}

%\citet{geng2015approx} showed that a multivariate uniform distribution is nearly optimal for $(0,\delta)$-DP for a general class of loss functions. 

\subsection{Multivariate CND for \texorpdfstring{$f_{\ep,\de}$}{approximate DP} when \texorpdfstring{$\de>0$}{delta is positive}}

Let $\ep>0$ and $\de\in (0,1]$. Recall that $f_{\ep,\de}=f_{\ep,0}\otimes f_{0,\de}$  \citep{dong2022gaussian}. Since $f_{0,\de}$ is infinitely decomposable, we can write $f_{\ep,\de} = f_{\ep,0}\otimes f_{0,\de_1}\otimes \cdots\otimes f_{0,\de_k}$ where $\de = \prod_{i=1}^k (1-\de_i)$. By Proposition \ref{prop:CNDcomposition} we construct a multivariate CND for $f_{\ep,\de}$ with respect to $\lVert \cdot \rVert_\infty$-sensitivity by using $\mathrm{Tulap}(0,\exp(-\ep),0)$ in one coordinate, and the uniform distributions  $U(\frac{-1}{2\delta_i},\frac{1}{2\delta_i})$ in the other $k$ coordinates.

\subsection{Two multivariate CNDs for Laplace-DP}\label{s:laplace}
Many mechanisms designed to satisfy $(\ep,0)$-DP actually satisfy the stronger privacy guarantee of Laplace-DP. In particular, variations on the Laplace mechanism are very common additive mechanisms used to achieve $(\ep,0)$-DP. In this section, we show that two multivariate versions of the Laplace mechanism, the $\ell_1$ and $\ell_\infty$ mechanisms, are multivariate CNDs for Laplace-DP.

The Laplace distribution, denoted $\mathrm{Laplace}(m,s)$ is a distribution on $\RR$ with density $\frac{1}{2s}\exp(\frac{-1}{s} |x-m|)$. We say a mechanism satisfies $\ep$-Laplace-DP if it satisfies $L_\ep$-DP, where $L_\ep\defeq T(N,N+\ep)$ and $N\sim \mathrm{Laplace}(0,1)$. It is easily seen that $\mathrm{Laplace}(0,1/\ep)$ is a log-concave  CND for $\ep$-Laplace-DP.% In this section, we show that two multivariate versions of the Laplace mechanism, the $\ell_1$ and $\ell_\infty$ mechanisms, are multivariate CNDs for Laplace-DP. 

%We say a mechanism satisfies $\ep$-Laplace-DP if it satisfies $T(L_\ep,L_\ep+1)$-DP where $L_{\ep}\sim \mathrm{Laplace}(m=0,s=1/\ep)$, which has density $\frac{\ep}{2}\exp(-\ep |x|)$. It is clear that the Laplace distribution is a (log-concave) CND for Laplace-DP. In this section, we show that two multivariate versions of the Laplace mechanism, the $\ell_1$ and $\ell_\infty$ mechanisms are multivariate CNDs for Laplace-DP. 

{\bf i.i.d. Laplace Mechanism}
The i.i.d. Laplace mechanism is defined as follows: Let $\ep>0$ be given. If $T:\mscr X\rightarrow \RR^k$ has $\lVert\cdot\rVert_1$-sensitivity of $\Delta$, then the i.i.d. Laplace mechanism releases $T(X)+\Delta N$, where $N=(N_1,\ldots, N_k)$ is the random vector with i.i.d. entries $N_i\sim \mathrm{Laplace}(0,1/\ep)$. It is well known that the i.i.d. Laplace mechanism satisfies $f_{\ep,0}$-DP \citep[Theorem 3.6]{dwork2014algorithmic}. Since $N_1$ is a log-concave CND for $L_\ep$, Theorem \ref{thm:lonemech} shows that $N$ is a CND for $L_\ep$, with respect to $\lVert \cdot \rVert_1$-sensitivity. As $L_\ep\geq f_{\ep,0}$ and $L_\ep(\alpha)>f_{\ep,0}(\alpha)$ for some values of $\alpha$, we can more precisely capture the privacy cost of the i.i.d. Laplace mechanism using tradeoff functions rather than $\ep$-DP.

{\bf $\ell_\infty$-Mechanism} The $\ell_\infty$-mechanism, proposed in \citet{steinke2016between} is a special case of the $K$-norm mechanisms \citep{hardt2010geometry}, with density proportional to $\exp(-\ep \lVert x\rVert_\infty)$. \citet{steinke2016between} showed that the $\ell_\infty$ mechanism can improve the sample complexity of answering multiple queries, when accuracy is measured by $\ell_\infty$-norm. \citet{awan2020structure} showed that the $\ell_\infty$ mechanism is near optimal in certain applications of private linear and logistic regression. It is well known that when using $\ell_\infty$-sensitivity, the $\ell_\infty$-mechanism satisfies $\ep$-DP. In this section, we show that the $\ell_\infty$-mech is a CND for $L_\ep$, with respect to  $\ell_\infty$-sensitivity.%, and that equality in part 2 of Definition \ref{def:CND_MVT} is achieved for any $v$ which has all entries in $\{-1,1\}$. 

\begin{restatable}{prop}{propLinfty}\label{prop:Linfty}
%\begin{prop}
Let $\ep>0$, and $d\geq 1$. Let $X$ be a $d$-dimensional random vector with density $g(x)=\frac{\exp(-\ep \lVert x\rVert_\infty)}{d!(2/\ep)^d}$. Then $X$ is a CND for the tradeoff function $L_\ep$ with respect to $\lVert \cdot \rVert_\infty$.
\end{restatable}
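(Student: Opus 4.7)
The plan is to verify the four properties of Definition~\ref{def:CND_MVT} in turn, taking $v^* := \mathbf{1} = (1,\ldots,1)^T$. Properties~4 and~3 are quick: $g(-x) = g(x)$ because $\lVert -x\rVert_\infty = \lVert x\rVert_\infty$, and for property~3 the log-likelihood ratio $\log\bigl(g(w+tv^*-v^*)/g(w+tv^*)\bigr) = \ep\bigl(\lVert w+tv^*\rVert_\infty - \lVert w+(t-1)v^*\rVert_\infty\bigr)$ is non-decreasing in $t$ because $t\mapsto \lVert w+tv^*\rVert_\infty$ is convex (hence has non-decreasing forward differences).

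The heart of the proof is property~2. I would first establish the algebraic identity
\[
W(x)\;:=\;\lVert x\rVert_\infty - \lVert x-\mathbf{1}\rVert_\infty \;=\; \phi\!\left(\tfrac{x_{(1)}+x_{(d)}}{2}\right),\qquad \phi(y):=|y|-|y-1|,
\]
where $x_{(1)} = \max_i x_i$ and $x_{(d)} = \min_i x_i$. Using $\lVert x\rVert_\infty = \max(x_{(1)},-x_{(d)})$ and $\lVert x-\mathbf{1}\rVert_\infty = \max(x_{(1)}-1,\,1-x_{(d)})$, a short case analysis on the signs of $Y := x_{(1)}+x_{(d)}$ and of $Y - 2$ gives $W = -1$ on $\{Y\le 0\}$, $W = Y-1$ on $\{0<Y<2\}$, and $W = 1$ on $\{Y\ge 2\}$, which matches $\phi(Y/2)$.

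Next I would prove that $(X_{(1)}+X_{(d)})/2\sim \mathrm{Laplace}(0,1/\ep)$ under $X\sim g$. Using the representation $X = R\cdot U$ with $R\sim\mathrm{Gamma}(d,\ep)$ and $U$ uniform on $\partial[-1,1]^d$ independent, I would condition on each face of the boundary. On $\{U_k = +1\}$ (probability $1/(2d)$) we have $X_{(1)} = R$ and $X_{(d)} = R\cdot\min_{j\ne k}U_j = RA$, where $A$ has density $(d-1)(1-a)^{d-2}/2^{d-1}$ on $[-1,1]$. Reparametrising by $B := (1+A)/2\sim \mathrm{Beta}(1,d-1)$ yields $(X_{(1)}+X_{(d)})/2 = RB$, and the classical beta--gamma decomposition gives $RB\sim \mathrm{Exp}(\ep)$ independent of $R(1-B)\sim\mathrm{Gamma}(d-1,\ep)$. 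Summing over the $d$ positive faces, the conditional law on the $+1$-faces (total probability $1/2$) is $\mathrm{Exp}(\ep)$; sign symmetry then gives $-\mathrm{Exp}(\ep)$ on the $-1$-faces, so the overall distribution is $\mathrm{Laplace}(0,1/\ep)$. Combining the two facts, the likelihood ratio $g(X-\mathbf{1})/g(X) = e^{\ep W(X)} = e^{\ep\phi(Y/2)}$ has under $X\sim g$ the same distribution as $e^{\ep\phi(L)}$ with $L\sim\mathrm{Laplace}(0,1/\ep)$, which is precisely the $H_0$ likelihood ratio of the one-dimensional test $\mathrm{Laplace}(0,1/\ep)$ versus $\mathrm{Laplace}(0,1/\ep)+1$. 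Since the tradeoff function is determined by the $H_0$-distribution of the likelihood ratio (Neyman--Pearson), $T(g,g(\cdot-\mathbf{1})) = L_\ep$, establishing property~2.

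Finally, for property~1 I want $T(g,g(\cdot-v))\ge L_\ep$ for every $\lVert v\rVert_\infty\le 1$. By sign-flip and permutation invariance of $g$, WLOG $v\in[0,1]^d$; applying the argument above to $v = t\mathbf{1}$ already gives $T(g,g(\cdot-t\mathbf{1})) = L_{t\ep}\ge L_\ep$ for $t\in[0,1]$. For general $v$ the statistic $X_{(1)}+X_{(d)}$ no longer receives a constant shift, which I expect to be the main obstacle. My plan is to combine the monoid monotonicity of the log-concave family $\{T(g,g(\cdot-s\tilde v)):s\ge 0\}$ from Theorem~\ref{thm:CNDlimit} along each ray with a convex-order comparison of likelihood-ratio distributions, leveraging the exchangeability and log-concavity of $g$ together with the representation of $[-1,1]^d$ as the convex hull of its sign-vertices.
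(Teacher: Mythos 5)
Your verification of properties 2, 3, and 4 is sound, and in places cleaner than the paper's. For property 3, observing that $t\mapsto\lVert w+tv^*\rVert_\infty$ is convex and hence has non-decreasing forward differences is a tighter justification than the paper gives. For property 2, you reach the same reduction as the paper --- the log-likelihood ratio depends on $X$ only through the statistic $x_{(1)}+x_{(d)}$, and the goal becomes showing $x_{(1)}+x_{(d)}$ is $\mathrm{Laplace}(2/\ep)$ --- but your route to the Laplace marginal is genuinely different: you use the boundary-uniform decomposition $X=RU$ with $R\sim\mathrm{Gamma}(d,\ep)$ and $U$ uniform on $\partial[-1,1]^d$, condition on a face, and invoke the classical beta--gamma independence to get $\mathrm{Exp}(\ep)$ directly. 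The paper instead uses the interior-uniform decomposition $R\sim\mathrm{Gamma}(d+1,\ep)$, computes the density of $\max(U)+\min(U)$ by hand, and identifies the product distribution via a characteristic-function calculation. Your derivation is shorter and avoids the integral.

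The gap is in property 1, which is the technically hard half of the proposition, and you flag it yourself. For $v=t\mathbf{1}$ the likelihood ratio is still a monotone function of $x_{(1)}+x_{(d)}$, so the PLRV argument goes through; but for a general $v\in(0,1]^d$ the ratio $\lVert x\rVert_\infty-\lVert x-v\rVert_\infty$ is \emph{not} a function of $x_{(1)}+x_{(d)}$ (e.g.\ with $v=(1,\tfrac12)$, the points $(1,0)$ and $(\tfrac12,\tfrac12)$ have the same $x_{(1)}+x_{(d)}$ but different ratios), so you cannot reduce to a one-dimensional PLRV. Your proposed fix does not close this. The monoid monotonicity $T(g,g(\cdot-s\tilde v))\ge T(g,g(\cdot-s'\tilde v))$ for $s\le s'$ only compares shifts \emph{along the same ray}; it says nothing about how the tradeoff functions for two different directions $\tilde v\ne\tilde v'$ compare at $\lVert v\rVert_\infty=1$. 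And the convex-hull-of-sign-vertices idea would only work if $v\mapsto T(g,g(\cdot-v))$ were quasi-concave in $v$, which is not something that follows from log-concavity or exchangeability of $g$ (the map $v\mapsto g(\cdot-v)$ is not affine, so mixing arguments do not transfer to the tradeoff function). The paper closes property 1 with a rejection-region comparison: fix the Neyman--Pearson rejection set $A$ for $T(X/v,X/v+1)$ at level $\alpha$, push it through the diagonal map $r^{-1}(x)=vx$ to get a (sub-optimal) rejection set for $T(X,X+1)$, and use the entrywise inequality $1/v-\mathbf 1\ge 0$ together with a monotonicity property of $A$ in the direction $1/v-\mathbf 1$ to chain $f_1(1-\alpha)\le P(r(X)\notin A-1/v)\le P(r(X)\notin A-\mathbf 1)=f_v(1-\alpha)$; the zero-entry case is then handled by the total-variation limit theorem (Theorem~\ref{thm:limitTradeoff}/Corollary~\ref{cor:limitBound}). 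You would need to supply something of this kind --- a concrete comparison of the two optimal tests, not an appeal to convexity of the unit ball.
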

\begin{proof}[Proof Sketch.]
 First we show that with the shift of $v^*=(1,1,\ldots, 1)$, the privacy loss random variable coincides with that of $L_\ep$. Then, we show that $v^*=(1,1,\ldots, 1)^\top$ is the worst case of any shift $v$ to minimize the tradeoff functions. To deal with the case that some of the entries of $v$ are zero, we establish a convergence theorem for tradeoff functions in Theorem \ref{thm:limitTradeoff} of the Appendix.
\end{proof}

\subsection{No multivariate CND for \texorpdfstring{$f_{\ep,0}$}{pure-DP}}\label{s:nopure}
By the construction of Proposition \ref{prop:CNDsynthetic}, we know that a one-dimensional CND exists for any nontrivial tradeoff function. It turns out that the same cannot be said for the multivariate setting. In Theorem \ref{thm:noCNDpure}, we show that there is \emph{no multivariate CND for $f_{\ep,0}$ with respect to any norm}. In fact, we prove the stronger result that it is not even possible to satisfy properties 1 and 2 of Definition \ref{def:CND_MVT}

\begin{restatable}{thm}{thmnoCNDpure}\label{thm:noCNDpure}
%\begin{thm}\label{thm:noCNDpure}
Let $d\geq 2$ and let $\lVert\cdot \rVert$ be any norm on $\RR^d$. Then for any $\ep>0$, there is no random vector satisfying properties 1 and 2 of Definition \ref{def:CND_MVT} for $f_{\ep,0}$ with respect to the norm $\lVert \cdot \rVert$. In particular, there is no multivariate CND for $f_{\ep,0}$. 
\end{restatable}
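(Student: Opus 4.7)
My plan proceeds by contradiction. Suppose a density $g$ on $\RR^d$ with $d\geq 2$ satisfies properties 1 and 2 of Definition~\ref{def:CND_MVT} for $f_{\ep,0}$ with respect to some norm $\|\cdot\|$, with $v^*$ the vector from property~2. Because $f_{\ep,0}(0)=0$, property~1 forces mutual absolute continuity of $N$ and $N+v$ for every $\|v\|\leq 1$; chaining such shifts through the unit ball (which additively generates $\RR^d$) then gives $g>0$ a.e.\ on $\RR^d$.

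The heart of the proof extracts a rigid structural consequence from property~2. Since $f_{\ep,0}$ is precisely the tradeoff function of the two-point privacy loss distribution on $\{-\ep,\ep\}$, a Neyman--Pearson argument shows that $T(P,Q)=f_{\ep,0}$ if and only if $dQ/dP\in\{e^{-\ep},e^{\ep}\}$ almost surely. Hence
\[
\sigma(x):=\ep^{-1}\log\bigl(g(x-v^*)/g(x)\bigr)\in\{-1,+1\}\qquad\text{for a.e.\ }x,
\]
while property~1 still provides the bilateral bound $|\log(g(x-v)/g(x))|\leq\ep$ a.e.\ for every $\|v\|\leq 1$.

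Using $d\geq 2$, I select a nonzero $u$ linearly independent of $v^*$ with both $\|v^*+u\|\leq 1$ and $\|v^*-u\|\leq 1$; this is possible for generic $v^*$, and when $v^*$ is an extreme point of the unit ball I first pass to the rescaled vector $v^*/2$ (which has ample transverse slack since $\|v^*/2\|\leq 1/2$) and propagate the two-valued structure through the factorization $\log(g(x-v^*)/g(x))=\log(g(x-v^*)/g(x-v^*/2))+\log(g(x-v^*/2)/g(x))$. Expanding $\log g(x-v^*\mp u)/g(x)$ by the chain rule, applying the property~1 bound on the compound shifts $v^*\pm u$, and intersecting with the bilateral bound on shifts by $\pm u$ alone yields the one-sided constraints: at $y=x-v^*$, if $\sigma(x)=+1$ then $\log(g(y\pm u)/g(y))\in[-\ep,0]$, so $y$ is a local maximum of $g$ in the transverse direction $u$; if $\sigma(x)=-1$ then $y$ is a local minimum.

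At two Lebesgue density points $y_1,y_2=y_1+u$ of the set $A':=\{y:\sigma(y+v^*)=+1\}$, the local-maximum property at $y_1$ with shift $+u$ and at $y_2$ with shift $-u$ forces $g(y_1)=g(y_2)$; iterating across the full $(d-1)$-dimensional slack of admissible transverse directions shows that $g$ depends only on the component of its argument along $v^*$, hence is constant on every affine hyperplane transverse to $v^*$. But then $\int g=+\infty$ because the transverse Lebesgue volume $\mathrm{Leb}(\RR^{d-1})$ is infinite, contradicting $\int g=1$. The main obstacle is the measure-theoretic bookkeeping in this last step --- rigorously obtaining the local-constancy on $A'$ and on $B':=\{y:\sigma(y+v^*)=-1\}$ without a priori continuity of $g$, together with the extreme-point reduction for $v^*$ --- and the dimension hypothesis $d\geq 2$ is used essentially to provide the transverse directions $u$ linearly independent of $v^*$.
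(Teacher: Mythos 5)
Your starting point matches the paper's: from property~2, a Neyman--Pearson argument forces $\log\bigl(g(x-v^*)/g(x)\bigr)\in\{-\ep,\ep\}$ a.e., while property~1 gives the bilateral bound $|\log(g(x-w)/g(x))|\le\ep$ for $\lVert w\rVert\le 1$, and the mutual-absolute-continuity step is essentially the paper's lemma that $\mu_g$ is equivalent to Lebesgue measure. The local consequence you draw is also correct: with $u$ chosen so that $\lVert v^*\pm u\rVert\le 1$ (hence $\lVert u\rVert\le 1$), one finds that on $A'=\{y:\sigma(y+v^*)=+1\}$ the function $g$ is a discrete local maximum in the $\pm u$ directions, and a discrete local minimum on $B'$; and when two $u$-translates both lie in $A'$ (or both in $B'$) their $g$-values agree.

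The gap is in the globalization. From the local statement you conclude that ``$g$ depends only on the component of its argument along $v^*$,'' but this does not follow. The a.e.\ partition of $\RR^d$ into $A'$ and $B'$ is not constant along $u$-orbits, and the structure you derived only forces equality of $g(y)$ and $g(y+u)$ when $y$ and $y+u$ fall in the \emph{same} piece. Across an $A'\to B'$ transition both constraints say only $g(y+u)\le g(y)$, and across a $B'\to A'$ transition both say only $g(y+u)\ge g(y)$; $g$ along the $u$-orbit can therefore be a piecewise-constant staircase that drifts, not a constant. Nothing in the argument prevents the slices of $g$ transverse to $v^*$ from decaying, so the non-integrability conclusion is unsupported. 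This is exactly the difficulty the paper's proof is built to avoid: rather than seeking constancy, it constructs a zig-zag path $\xi+kw+j(k)v$, choosing $j(k)\in\{0,\dots,k\}$ \emph{adaptively} according to the sign of the $v$-shifted likelihood ratio, so that $g$ stays bounded below along a sequence of points whose transverse components march off to infinity. That adaptive ``correction along $v$'' is the missing idea; without it, the sign pattern of $A'/B'$ can defeat the argument.

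Secondarily, the extreme-point reduction for $v^*$ is not sound. Writing $\log(g(x-v^*)/g(x))=\log(g(x-v^*)/g(x-v^*/2))+\log(g(x-v^*/2)/g(x))$ and knowing the sum is $\pm\ep$ with each summand in $[-\ep,\ep]$ only confines each summand to $[0,\ep]$ or $[-\ep,0]$; it does not make either summand two-valued. So the structural constraint does not propagate to the shift $v^*/2$, and the claimed escape from the case where $v^*$ is an extreme point of the unit ball fails. The paper instead picks $w=z-v/2$ with $z\perp v$ small in $\ell_2$ and estimates $\lVert w\rVert\le\lVert z\rVert+\tfrac12\lVert v\rVert<1$ and $\lVert w+v\rVert<1$ directly, which works uniformly for all $v^*$ on the sphere; you could have done the same, but as written your step~3 has a genuine gap as well.
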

\begin{proof}[Proof Sketch.]
Suppose to the contrary, then $(\ep,0)$-DP imposes strict bounds on the likelihood ratio of the distribution. These bounds allow us to find an arbitrarily long sequence of points, sufficiently far apart, where the density is bounded below. This ultimately shows that the density is not integrable.
\end{proof}
Combining Theorem \ref{thm:noCNDpure} with Proposition \ref{prop:CNDcomposition}, we infer in Corollary \ref{cor:pureDecomp} that $f_{\ep,0}$ cannot be written as the tensor product of any two nontrivial tradeoff functions. This means that if we want to design two independent mechanisms such that the joint release exactly satisfies $(\ep,0)$-DP, then one of the mechanisms must be perfectly private. 

\begin{restatable}{cor}{corpureDecomp}\label{cor:pureDecomp}
%\begin{cor}\label{cor:pureDecomp}
Let $\ep>0$ be given. There does not exist nontrivial symmetric tradeoff functions $f_1$ and $f_2$ such that $f_{\ep,0}=f_1\otimes f_{2}$.
\end{restatable}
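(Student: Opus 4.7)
The plan is to derive Corollary \ref{cor:pureDecomp} by a short contradiction argument that chains together Proposition \ref{prop:CNDsynthetic}, Proposition \ref{prop:CNDcomposition}, and Theorem \ref{thm:noCNDpure}. Suppose, toward a contradiction, that there exist nontrivial symmetric tradeoff functions $f_1$ and $f_2$ with $f_{\ep,0}=f_1\otimes f_2$.

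First, I would apply Proposition \ref{prop:CNDsynthetic} to each factor: since $f_1$ and $f_2$ are both nontrivial and symmetric, they each admit a one-dimensional CND $F_1$ and $F_2$ respectively (for instance the constructed $F_{f_1}$ and $F_{f_2}$ from that proposition). Thus the factorization gives tradeoff functions for which univariate CNDs are guaranteed to exist.

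Next, I would invoke Proposition \ref{prop:CNDcomposition} with $k=2$: letting $N=(N_1,N_2)$ with independent coordinates $N_i\sim F_i$, the resulting random vector $N$ is a multivariate CND for $f_1\otimes f_2 = f_{\ep,0}$ with respect to the sensitivity norm $\lVert\cdot\rVert_\infty$ on $\RR^2$. In particular, $N$ satisfies properties 1 and 2 of Definition \ref{def:CND_MVT} for $f_{\ep,0}$ in dimension $d=2$.

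Finally, I would apply Theorem \ref{thm:noCNDpure} with $d=2$ and the norm $\lVert\cdot\rVert_\infty$: for any $\ep>0$, no random vector can satisfy properties 1 and 2 of Definition \ref{def:CND_MVT} for $f_{\ep,0}$ under this (or any) norm. This directly contradicts the existence of $N$ produced in the previous step, so no such factorization $f_{\ep,0}=f_1\otimes f_2$ can exist. There is no serious obstacle here: the whole argument is essentially a one-line contrapositive of Proposition \ref{prop:CNDcomposition} combined with the impossibility theorem, and the only thing to verify carefully is that the CNDs obtained from Proposition \ref{prop:CNDsynthetic} genuinely satisfy the hypotheses of Proposition \ref{prop:CNDcomposition} (nontriviality and symmetry), which is immediate from how $f_1$ and $f_2$ were assumed.
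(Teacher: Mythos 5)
Your proposal is correct and takes essentially the same approach as the paper: contradict by combining Proposition~\ref{prop:CNDcomposition} (to build a $2$-dimensional CND from the assumed factorization) with Theorem~\ref{thm:noCNDpure}. The only addition is that you explicitly invoke Proposition~\ref{prop:CNDsynthetic} to supply the one-dimensional CNDs needed as input to Proposition~\ref{prop:CNDcomposition}, a step the paper leaves implicit.
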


\begin{remark}
Theorem \ref{thm:noCNDpure} along with Theorem  \ref{thm:lonemech} gives an alternative argument that $f_{\ep,0}$ is not log-concave/infinitely decomposable. 
\end{remark}

\section{Discussion}\label{s:discussion}
Motivated by the goals of constructing log-concave CNDs and multivariate CNDs, we found some fundamental connections between these constructions and the operations of mechanism composition and functional composition of the tradeoff functions. Surprisingly, the constructions for both log-concave and multivariate CNDs relied on whether a tradeoff function could be decomposed either according to functional composition, or according to mechanism composition. An interesting result of our work was that for $(\ep,0)$-DP there is a unique 1-dimensional CND and no multidimensional CNDs, which implies that $f_{\ep,0}$ can neither be decomposed according to functional composition or mechanism composition. This highlights the limitations of pure-DP as a privacy definition. On the other hand, Gaussian-DP, Laplace-DP, and $(0,\de)$-DP were seen to have much better properties. 

  While the framework of GDP and related notions (e.g., zero-concentrated DP) have many desirable properties, including those developed in this paper, there are still many reasons why one may be interested in other DP frameworks. In some applications, having a stronger notion of DP is needed to protect events with small probability, such as pure-DP or Laplace-DP; in this case, our work shows that Laplace-DP is a much better behaved notion of privacy than pure-DP. One may also propose other alternative DP definitions based on a family of tradeoff functions, and our research gives some fundamental insights on what properties that family must have in order for log-concave or multivariate CNDs to be constructed.

We showed that a multivariate extension of CND can capture the same properties as in the 1-dimensional case. \citet{awan2021canonical} showed that in one dimension, CNDs can be used to obtain DP hypothesis tests with optimal properties. An open question is whether our definition of a multivariate CND has any connections to optimal hypothesis testing. 

Most of the constructions of multivariate CNDs presented in this paper are product distributions. Even the multivariate CNDs for GDP are a linear transformation of i.i.d. random variables. The $\ell_\infty$-mechanism is the exception, providing a truly nontrivial CND for Laplace-DP. It is worth exploring whether there are general techniques to produce nontrivial multivariate CNDs like the $\ell_\infty$-mechansism, as well as exploring the merits of such CNDs.

While many of the multivariate CNDs constructed for tradeoff functions, only held for specific sensitivity norms, a more general question would be on the existence and construction of multivariate CNDs for an arbitrary tradeoff/norm pair.

\section*{Acknowledgments}

This work was supported in part by NSF SES 2150615, awarded to Purdue University.

%\section*{Broader Impacts}
%Privacy is an important and growing societal concern. There is a natural tradeoff between the privacy guarantee of the users and the utility of the output. The concept of canonical noise distributions naturally captures the idea of optimizing this tradeoff by fully using the privacy budget, which indirectly optimizes the utility at a given privacy guarantee. This can either improve the utility of the output at a fixed privacy cost, or potentially achieve the same utility while giving a stronger privacy guarantee, either of which would benefit society. We do not foresee any negative societal impacts of our work. 

\bibliographystyle{plainnat} 
\bibliography{bibliography} 

\appendix
\section{Appendix}
%\section{Appendix to Log-Concave and Multivariate Canonical Noise Distributions for Differential Privacy}
\subsection{Broader impacts}\label{s:broader}
Privacy is an important societal problem, and there is a natural tradeoff between the privacy afforded to the individuals of the dataset, and the utility of the published result. One may be concerned that differential privacy techniques reduce the utility of the results too much, in exchange for the privacy protection. In our work, by providing a better understanding of differential privacy, and by developing new mechanisms to achieve differential privacy, we make it possible to achieve higher utility at the same privacy cost; or alternatively, we can maintain the same utility while giving a stronger privacy protection. In our view, optimizing the privacy-utility tradeoff is universally beneficial to society, and we do not foresee any negative societal impacts of this work.

%%%%%%%%%%%%%%%%%%%%%%%%%%%%%%%%%%%%%%%%%%%%%%%%%%%%%%%%%%%%%%%%%
%%%   Group Privacy and Composition
%%%%%%%%%%%%%%%%%%%%%%%%%%%%%%%%%%%%%%%%%%%%%%%%%%%%%%%%%%%%%%%%%%
\subsection{Relations between functional composition and tensor product}
%In this section, we show that there is a connection between the functional composition of tradeoff functions (related to group privacy) and the tensor product of tradeoff functions (related to the composition of mechanisms). 
Both the functional composition and the tensor product of tradeoff functions are essential concepts in our constructions of CNDs. In the remainder of this section, we establish some new relations between the two operations, which provide an interesting insight into the connection between group privacy and composition. First, we recall a lemma from \citet{dong2022gaussian}:

%Recall that \citet[Theorem 2.14]{dong2022gaussian} showed that if a mechanism is $f$-DP, then it satisfies $f^{\circ k}$-DP (where $f^{\circ k}$ means the functional composition of $f$ with itself, $k$ times), when the adjacency measure is changed to allow for a difference in $k$ entries. We call this \emph{group privacy}, which is a central topic in differential privacy that shows up for example in the analysis of subsampling \citep{balle2018privacy}. 

%Another fundamental concept in differential privacy is \emph{composition}, which quantifies the cumulative privacy cost of the output of $k$ sequential (and potentially dependent) mechanisms. To express the tradeoff function resulting from composition, \citet{dong2022gaussian} proposed the \emph{tensor product} of tradeoff functions: if $f=T(P,Q)$ and $g=(P',Q')$, then $f\otimes g \defeq T(P\times P',Q\times Q')$, which they show is well defined, commutative, and associative. They prove that if we have $k$ (potentially sequential) mechanisms $M_1,\ldots, M_k$, which each satisfy $f_1$-DP$,\ldots, f_k$-DP respectively, then the composition, which outputs the results of all of $M_1,\ldots, M_k$ satisfies $f_1\otimes\cdots\otimes f_k$-DP (see \citet[Theorem 3.2]{dong2022gaussian} for a more precise statement). 

% \todo{move the rest to appendix?}

\begin{lemma}[Lemma A.5: \citet{dong2022gaussian}]\label{lem:A5}
Suppose that $T(P,Q)\geq f$ and $T(Q,R)\geq g$. Then $T(P,R)\geq g\circ f$. 
\end{lemma}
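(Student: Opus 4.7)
The plan is to unfold the definition of $T(P,R)$ directly and chain the two hypotheses through the intermediate distribution $Q$ using a single test $\phi$. After fixing $\alpha \in [0,1]$, I would pick an arbitrary measurable test $\phi$ satisfying $\EE_P \phi \geq 1 - \alpha$ (the constraint in the infimum defining $T(P,R)(\alpha)$) and aim to show $1 - \EE_R \phi \geq (g \circ f)(\alpha)$. Taking the infimum over all such $\phi$ at the end then gives the desired bound on $T(P,R)(\alpha)$.

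The first step is to apply $T(P,Q) \geq f$ to the chosen test $\phi$: since $\phi$ already satisfies $\EE_P \phi \geq 1 - \alpha$, the hypothesis immediately forces $1 - \EE_Q \phi \geq f(\alpha)$, which I would record equivalently as $\EE_Q \phi \leq 1 - f(\alpha)$. This step converts the $P$-information about $\phi$ into $Q$-information, positioning $\phi$ to be fed into the second hypothesis.

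The second step applies $T(Q,R) \geq g$ to the same $\phi$ with the tight parameter $\alpha' := 1 - \EE_Q \phi$, so that the constraint $\EE_Q \phi \geq 1 - \alpha'$ is saturated by construction. This delivers $1 - \EE_R \phi \geq g(\alpha')$. Finally, I would combine the step-1 conclusion $\alpha' \geq f(\alpha)$ with monotonicity of $g$ (which holds because $g$, being a tradeoff function, is non-decreasing) to get $g(\alpha') \geq g(f(\alpha))$, hence $1 - \EE_R \phi \geq (g \circ f)(\alpha)$.

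I do not foresee any genuine obstacle: the whole argument is essentially a two-line syllogism once one recognizes that the same test $\phi$ can serve both applications, with $\EE_Q \phi$ as the glue that passes information from the $(P,Q)$-hypothesis to the $(Q,R)$-hypothesis. The only care needed is in the inequality bookkeeping when translating the upper bound $\EE_Q \phi \leq 1 - f(\alpha)$ into the form $\EE_Q \phi \geq 1 - \alpha'$ required to invoke $T(Q,R) \geq g$, and in the final appeal to monotonicity of $g$.
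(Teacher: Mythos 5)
Your argument is correct and is the natural direct proof of this fact. The paper does not give its own proof (it cites Lemma~A.5 of Dong, Roth, and Su), and your approach — fixing a test $\phi$ feasible for $T(P,R)(\alpha)$, using $T(P,Q)\geq f$ to get $\EE_Q\phi \leq 1-f(\alpha)$, then feeding the same $\phi$ into $T(Q,R)\geq g$ at $\alpha' = 1-\EE_Q\phi$ and closing the loop with monotonicity of $g$ — is precisely the standard chaining-through-$Q$ argument used to establish this result. The one small thing worth saying explicitly in a polished write-up is that the feasible set for $T(P,R)(\alpha)$ is nonempty (take $\phi\equiv 1$), so the infimum is over a nonempty family and the final step of taking the infimum over $\phi$ is legitimate; everything else is tight bookkeeping, exactly as you describe.
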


\begin{restatable}{lem}{lemotimesCirc}\label{lem:otimesCirc}
%\begin{lemma}\label{lem:otimesCirc}
Let $f$ and $g$ be any two symmetric tradeoff functions. Then $f\otimes g\geq f \circ g$. 
\end{restatable}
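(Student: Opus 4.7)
The plan is to realize $f\otimes g$ as $T(P\times P',Q\times Q')$ for some representatives $f=T(P,Q)$ and $g=T(P',Q')$, and then insert an intermediate product distribution so that Lemma~\ref{lem:A5} yields the desired composition bound. The natural candidate for the intermediate distribution is $P\times Q'$, which interpolates between $P\times P'$ and $Q\times Q'$ by changing one coordinate at a time.

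First I would verify the two ``one-coordinate'' identities
\[
T(P\times P',\,P\times Q')=g\quad\text{and}\quad T(P\times Q',\,Q\times Q')=f.
\]
These should follow from the fact that if two product measures share a common marginal, then the optimal Neyman--Pearson test can be taken to depend only on the non-shared coordinate: any test $\phi(x,y)$ can be averaged over the shared coordinate without changing either error, since the likelihood ratio factors and the shared marginal cancels. Concretely, for $T(P\times P',P\times Q')$, the likelihood ratio equals the ratio of densities on the second coordinate alone, so both errors of any test $\phi(x,y)$ are matched by the test $\tilde\phi(y)=\mathbb{E}_{X\sim P}\phi(X,y)$, reducing the problem to testing $P'$ vs.\ $Q'$.

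Once these are in hand, applying Lemma~\ref{lem:A5} along the chain $P\times P'\to P\times Q'\to Q\times Q'$ with $f_1=g$ and $f_2=f$ gives
\[
T(P\times P',\,Q\times Q')\;\geq\; f\circ g.
\]
Since the left side is by definition $f\otimes g$, this completes the proof.

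The only step that requires any care is the reduction of a test on the product space to one on a single coordinate; this is the ``main obstacle,'' though it is really just a routine Fubini-style argument. Symmetry of $f$ and $g$ is not actually needed for the inequality itself, but it is convenient to ensure the representing distributions behave nicely; note that one could equally well interpolate via $Q\times P'$ to obtain $g\circ f$ on the right, so the bound holds against either order of composition.
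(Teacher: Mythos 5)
Your proposal is correct and is essentially the paper's argument: both insert an intermediate product distribution between $P\times P'$ and $Q\times Q'$ and apply Lemma~\ref{lem:A5} to the two one-coordinate steps. The only (cosmetic) differences are that the paper interpolates via $Q\times P'$ rather than $P\times Q'$ — so it obtains $g\circ f$ and then invokes symmetry to commute — and it realizes $P,Q,P',Q'$ concretely as shifted CNDs, which forces a separate case split when $f$ or $g$ is the identity; your version with generic representatives and your choice of midpoint avoid both.
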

%\lemotimesCirc*
\begin{proof}
    First note that if either $f$ or $g$ is equal to $\mathrm{Id}$, then the result is trivial. Now, suppose that both $f$ and $g$ are nontrivial, and let $N_1\sim F$ and $N_2\sim G$ be independent, where $F$ is a CND for $f$ and $G$ is a CND for $g$. 
    
    %On the space of databases $\{0,1\}^2$, let $M$ be the mechanism which on a dataset $x\in \{0,1\}^2$ releases $(x_1+N_1,x_2+N_2)$. 
    By definition of the tensor product of tradeoff functions \citep[Definition 3.1]{dong2022gaussian}, we have that 
    \begin{equation}T\left( \binom{0+N_1}{0+N_2},\binom{1+N_1}{1+N_2}\right)=f\otimes g,
    \label{eq:otimes}
    \end{equation}
    since $T(0+N_1,1+N_1)=f$ and $T(0+N_2,1+N_2)=g$, by definition of CND.
    
    It is also true that 
    \begin{align*}
        T\left(\binom{0+N_1}{0+N_2},\binom{1+N_1}{0+N_2}\right)&=f,\\
        T\left(\binom{1+N_1}{0+N_2},\binom{1+N_1}{1+N_2}\right)&=g.
    \end{align*}
        Then by Lemma \ref{lem:A5}, %\citep[Lemma A.5]{dong2022gaussian},
        \begin{equation}
            T\left( \binom{0+N_1}{0+N_2},\binom{1+N_1}{1+N_2}\right)\geq g\circ f=f\circ g,
            \label{eq:circ}
        \end{equation}
        where the last equality follows since $f$ and $g$ are symmetric, using \citep[Lemma A.4]{dong2022gaussian}. Comparing Equations \eqref{eq:otimes} and \eqref{eq:circ}, we have that $f\otimes g\geq f\circ g$.
\end{proof}

\begin{remark}
As a special case of Lemma \ref{lem:otimesCirc}, we have that $f\otimes f\geq f\circ g$, which has an interesting interpretation: Suppose two situations: 1)  your data is present once in two databases, and an $f$-DP mechanism is applied to each database once. This gives $f\otimes f$-DP cumulative privacy cost to you. 2) your data is present twice in one database, and an $f$-DP mechanism is applied once to the database. Since your data is present twice, by group privacy the incurred privacy cost to you is $f\circ f$-DP. Lemma \ref{lem:otimesCirc} says that you would prefer to be in the two separate databases. The intuition behind this can be understood as follows: in the second scenario, the privacy expert could choose to split the database into two: each one containing a copy of your data, and apply an $f$-DP mechanism to both. The nominal privacy cost of this would be $f$-DP (considering groups of size 1), as changing one entry affects only one of the two calculations. However, for groups of size two, the privacy cost is $f\circ f$-DP. This shows that all of the mechanisms in scenario 1 could also be applied to scenario 2, but in general there are mechanisms in scenario two that are not possible in scenario 1 (since in scenario 1, the databases cannot be merged). 
\end{remark}

Before we move on, we give a Lemma, extending \citet[Lemma E.8]{awan2021canonical} to arbitrary $k$. Lemma \ref{lem:groupK} shows that given a CND $N$ for $f$, we can easily produce a CND for $f^{\circ k}$ by rescaling $N$ by $\frac 1k$. To establish Lemma \ref{lem:groupK}, we need another technical lemma, which appeared within the proof \citet[Lemma E.12]{awan2021canonical}. We say that a cdf $F$ is \emph{invertible} at $t$ if $F^{-1}(F(t))=t$. 

\begin{lemma}[\citet{awan2021canonical}]\label{lem:invertible}
Let $f$ be a nontrivial symmetric tradeoff function, and let $F$ be a CND for $f$. Call $M\defeq \inf\{t\mid 0<F(t)\}$. Then if $M< -1/2$ and $\alpha> 1-f(1)$, then $F$ is invertible at $F^{-1}(\alpha)-1$.
\end{lemma}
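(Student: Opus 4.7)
The plan is to argue by contradiction, using the recurrence of Lemma \ref{lem:recurrence} to propagate a hypothetical flat segment of $F$ one unit to the right, to a location that is incompatible with the definition of $F^{-1}(\alpha)$.

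First, I would set $t_0 \defeq F^{-1}(\alpha)-1$. Property 3 of the CND definition gives $F(t_0) = F(F^{-1}(\alpha)-1) = f(\alpha)$. The symmetry of $f$ yields the identity $f(\beta) = 1 - f^{-1}(1-\beta)$ (as one checks e.g.\ for $G_\mu$); evaluating at $\beta = 1$ gives $f^{-1}(0) = 1 - f(1)$, so the hypothesis $\alpha > 1 - f(1)$ is exactly the statement $f(\alpha) > 0$. In particular, $F(t_0) = f(\alpha) > 0$, so $t_0$ lies strictly above $M$.

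Next, to establish invertibility at $t_0$, I would show there is no $s < t_0$ with $F(s) = F(t_0)$. Suppose such $s$ exists; by monotonicity and continuity, $F \equiv f(\alpha)$ on $[s,t_0]$. Since $F(x-1) = f(\alpha) > 0$ for every $x \in [s+1,t_0+1]$, the recurrence $F(x) = 1 - f(1-F(x-1))$ of Lemma \ref{lem:recurrence} applies throughout this interval, and transports the flat segment: $F$ is constant on $[s+1,t_0+1]$ with some common value $c = 1 - f(1-f(\alpha))$.

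Finally, continuity gives $F(t_0+1) = F(F^{-1}(\alpha)) = \alpha$, so $c = \alpha$ and hence $F(s+1) = \alpha$. But $s+1 < t_0+1 = F^{-1}(\alpha) = \inf\{t : F(t) \ge \alpha\}$, so the inf property forces $F(s+1) < \alpha$, a contradiction. The hypothesis $M < -1/2$, together with the nontriviality of $f$, plays the role of guaranteeing that $F$ has genuine support for this rightward propagation and that the recurrence stays within its validity range $F(\cdot - 1) > 0$; I do not foresee any substantive obstacle beyond bookkeeping to confirm that every application of Lemma \ref{lem:recurrence} occurs at a point where $F(\cdot - 1) > 0$, which is handled uniformly by the positivity $f(\alpha) > 0$ established at the outset.
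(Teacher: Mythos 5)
Your proof is correct, and it takes a genuinely different route from the one in the paper. The paper first invokes an external result (\citet[Lemma E.3]{awan2021canonical}, which asserts that a CND $F$ is invertible at $F^{-1}(\alpha)-1$ whenever $F^{-1}(\alpha)-1\geq M$), and then reduces the lemma to showing $F^{-1}(\alpha)\geq M+1$ via a chain of set-theoretic manipulations that apply the recurrence to rewrite $\inf\{t\mid 0<F(t-1)\}$. Your argument is more self-contained: rather than appealing to Lemma E.3, you argue directly by contradiction that a flat segment $[s,t_0]$ of $F$ with $F\equiv f(\alpha)>0$ would, by the forward recurrence of Lemma~\ref{lem:recurrence}, propagate to a flat segment on $[s+1,t_0+1]$ at the value $\alpha$, contradicting $F(s+1)<\alpha$ forced by the definition of $F^{-1}(\alpha)$ as an infimum. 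This is a clean and more transparent mechanism, and it buys you independence from the auxiliary Lemma E.3.

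Two small points worth tightening. First, the step ``$\alpha>1-f(1)$ is exactly the statement $f(\alpha)>0$'' is reached via a symmetry identity $f(\beta)=1-f^{-1}(1-\beta)$ that is only valid with a particular (non-standard) interpretation of $f^{-1}$ at the endpoint $0$; it is cleaner, since a CND $F$ is in hand, to note $1-f(1)=1-F(-M-1)=F(M+1)$ by symmetry of $F$, so that $\alpha>F(M+1)$ forces $F^{-1}(\alpha)>M+1$, hence $F(F^{-1}(\alpha)-1)>0$. Second, your closing remark is right that $M<-1/2$ does not actively appear in your argument; in fact when $M\geq -1/2$ the support of $N$ lies in $[-1/2,1/2]$, so $f(1)=0$ and the hypothesis $\alpha>1-f(1)$ is unsatisfiable, making the lemma vacuous there. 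The paper's proof makes the hypothesis $M<-1/2$ do work only to dispatch a boundary case ($F(t-1)=0\Rightarrow F(t)<1$) arising from its particular chain of rewrites; your route simply never encounters that case.
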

\begin{proof}
%Let $M\defeq \inf\{t\mid 0<F(t)\}$. By symmetry of $F$, we know that $M\leq 0$. If $M\geq -1/2$, then we have that $g(\alpha)\leq f(\alpha)=F(F^{-1}(1-\alpha)-1)=0$ for all $\alpha\in (0,1)$; we also have $F_2(F_2^{-1}(1-\alpha)-1)=F(F^{-1}(1-\alpha)-2)=0=g(\alpha)$ justifying property 3 of Definition \ref{def:CND}. Furthermore, $T(F_2(\cdot),F_2(\cdot-1))=T(F(\cdot),F(\cdot-2))=0=g$, since $F(\cdot)$ and $F(\cdot-2)$ have disjoint support. Finally, note that $T(F_2(\cdot),F_2(\cdot-m))\geq T(F_2(\cdot),F_2(\cdot-1))=0$, since $0$ is a trivial lower bound for any tradeoff function. We conclude that when $M\geq -1/2$, $F_2$ is a CND for $g$.
%Now suppose that $M<-1/2$ and 
Let $\alpha>1-f(1)$, or equivalently $1-\alpha<f(1)$. %For the following calculations, we will need to justify that $F$ is invertible at $F^{-1}(\alpha)-1$. To see this, 
Note that $F$ is invertible at $F^{-1}(\alpha)$, and by \citet[Lemma E.3]{awan2021canonical} $F$ is also invertible at $F^{-1}(\alpha)-1$ unless $F^{-1}(\alpha)-1<M$. So, we need to show that $F^{-1}(\alpha)\geq M+1$:

\begin{align}
    M+1&=\inf \{t+1\mid 0<F(t)\}\\
    &=\inf \{t\mid 0<F(t-1)\}\\
        &=\inf \{t\mid 1>1-F(t-1)\}\\
%    &=\inf \{t\mid f(0)>f(F(t-1))\ \& \ 0<F(t-1)\}\label{eq:twice1}\\
    &=\inf\{t\mid 1-f(1)<1-f(1-F(t-1))\ \&\ 0<F(t-1)\}\label{eq:twice1}\\
    &=\inf\{t\mid 1-f(1)<F(t)\ \& \ 0<F(t-1)\}\label{eq:twice2},
\end{align}
where \eqref{eq:twice1} uses the fact that $1-f$ is strictly decreasing at $1$; \eqref{eq:twice2} uses the fact that $0<F(t-1)$ to apply the recursion of Lemma \ref{lem:recurrence}. Now, suppose that $F(t-1)=0$: then $t-1\leq M$ and because $M<-1/2$, $F(t)<1$. So, $0=F(t-1)$ implies that $0=f(F(t))$. But this in turn implies that $F(t)\leq 1-f(1)$. We see that $1-f(1)<F(t)$ implies that $0<F(t-1)$. So, 

\begin{align}
   M+1 &=\inf\{t\mid 1-f(1)<F(t)\}\\
    &\leq\inf\{t\mid \alpha\leq F(t)\}\label{eq:twice4}\\
    &=F^{-1}(\alpha),
    \end{align}
    where \eqref{eq:twice4} uses the fact that $\alpha>1-f(1)$. We see that $F^{-1}(\alpha)-1\geq M$ and conclude that $F$ is invertible at $F^{-1}(\alpha)-1$. 
\end{proof}

\begin{restatable}{lem}{lemgroupK}\label{lem:groupK}
%\begin{lemma}\label{lem:groupK}
Let $F$ be a CND for a nontrivial symmetric tradeoff function $f$. Then $F(k\cdot)$ is a CND for $f^{\circ k}$ for any $k\in \NN^+$.
\end{restatable}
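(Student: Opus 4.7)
Let $G(x)=F(kx)$ denote the cdf of $N/k$, so $G^{-1}(\alpha)=F^{-1}(\alpha)/k$ and, by a change of variables, $T(N/k,N/k+1)=T(N,N+k)$. Symmetry of $G$ about zero (property~4 of Definition~\ref{def:CND}) is immediate from that of $F$. Since $G(G^{-1}(\alpha)-1)=F(F^{-1}(\alpha)-k)$, properties~2 and~3 both reduce to the single identity $T(N,N+k)(\alpha)=F(F^{-1}(\alpha)-k)=f^{\circ k}(\alpha)$ for all $\alpha\in(0,1)$, while property~1 reduces to $T(N,N+m)\geq f^{\circ k}$ for every $m\in[0,k]$ (since $km'\in[0,k]$ when $m'\in[0,1]$). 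So my plan is to sandwich $T(N,N+k)$ between matching lower and upper bounds and identify the common value with $f^{\circ k}$ via Lemma~\ref{lem:recurrence}.

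For the lower bound (which simultaneously gives property~1), I would prove by induction on $k$ that $T(N,N+m)\geq f^{\circ k}$ for every $m\in[0,k]$. The base case $k=1$ is the CND property of $F$. For the inductive step, split $m=m_1+m_2$ with $m_1\in[0,1]$ and $m_2\in[0,k-1]$ and apply Lemma~\ref{lem:A5} to the triple $P=N,\ Q=N+m_1,\ R=N+m$: by translation invariance and the inductive hypothesis,
\[T(N,N+m)\geq T(N+m_1,N+m)\circ T(N,N+m_1)=T(N,N+m_2)\circ T(N,N+m_1)\geq f^{\circ(k-1)}\circ f=f^{\circ k}.\]
Specializing $m=k$ yields in particular $T(N,N+k)\geq f^{\circ k}$.

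For the matching upper bound, I would use the threshold test $\phi(y)=\mathds{1}(y>F^{-1}(\alpha))$: continuity of $F$ (built into the CND definition) gives type-I error exactly $1-\alpha$, and type-II error $F(F^{-1}(\alpha)-k)$ under $N+k$, so $T(N,N+k)(\alpha)\leq F(F^{-1}(\alpha)-k)$. To identify this quantity with $f^{\circ k}(\alpha)$, I would iterate the recurrence $F(x)=f(F(x+1))$ of Lemma~\ref{lem:recurrence} at $x=F^{-1}(\alpha)-j$ for $j=1,\ldots,k$; the side condition $F(x+1)<1$ holds at every step because $F(F^{-1}(\alpha)-j+1)=f^{\circ(j-1)}(\alpha)\leq\alpha<1$. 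A short induction then yields $F(F^{-1}(\alpha)-k)=f^{\circ k}(\alpha)$, so the sandwich collapses to an equality and properties~2 and~3 follow. The main subtlety is ensuring the recurrence iterates cleanly without invertibility pathologies, which works here because $f(\alpha)\leq\alpha<1$ keeps every iterate strictly below~$1$, so the clean form of Lemma~\ref{lem:recurrence} applies at each step and the more delicate Lemma~\ref{lem:invertible} is not needed.
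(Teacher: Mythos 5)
Your proof is correct, and it is genuinely different from the paper's — and cleaner. The paper proves the result by induction on $k$ via the inductive statement that $F_{k-1}$ is a CND for $f^{\circ(k-1)}$; to turn the identity $f^{\circ k}=f\circ f^{\circ(k-1)}$ into a statement about $F$, it must control nested expressions of the form $F^{-1}(F(x))$, which forces a case split on $M=\inf\{t: 0<F_{k-1}(t)\}$ (the degenerate regime where $f^{\circ(k-1)}\equiv 0$) and, in the nondegenerate regime, a further appeal to the invertibility Lemma~\ref{lem:invertible}. Your argument sidesteps all of this by iterating the one-step recurrence $F(x)=f(F(x+1))$ directly at $x=F^{-1}(\alpha)-j$, observing that the side condition is self-propagating: each iterate $F(F^{-1}(\alpha)-j+1)=f^{\circ(j-1)}(\alpha)\leq\alpha<1$ stays strictly below $1$ because $f\leq\mathrm{Id}$, so no invertibility or case analysis is ever needed. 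The upper bound via the explicit threshold test and the lower bound via Lemma~\ref{lem:A5} applied to the split $m=m_1+m_2$ then pin down $T(N,N+k)=F(F^{-1}(\alpha)-k)=f^{\circ k}$ in one stroke. (Your lower-bound decomposition $m=\min\{m,1\}+(m-\min\{m,1\})$ is also somewhat simpler than the paper's split at $m\gtrless\tfrac{k-1}{k}$, though both are routine applications of Lemma~\ref{lem:A5}.) The trade-off is that the paper's route, by carrying ``$F_{k-1}$ is a CND for $f^{\circ(k-1)}$'' as the inductive invariant, makes the degenerate case where the tradeoff function collapses to $0$ explicit; your argument absorbs that case silently, which is valid but worth noticing.
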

\begin{proof}
For any $k$, denote $F_k(x) = F(kx)$ and $F^{-1}_k(x) = \frac 1k F^{-1}(x)$, where $F^{-1}_k$ is the quantile function of $F_k$. Symmetry and continuity of $F_k$ are clear. 

% We know that $F(F^{-1}(\alpha-1))=f$. 
 For induction, assume that for some $k>1$, we have that $F_{k-1}$ is a CND for $f^{\circ(k-1)}$. In particular, we have that 
    \begin{align*}
        f^{\circ (k-1)}&=F_{k-1}(F^{-1}_{k-1}(\alpha)-1)\\
        &=F[(k-1)\{[1/(k-1)]F^{-1}(\alpha)-1\}]\\
        &=F(F^{-1}(\alpha)-(k-1)).
    \end{align*} 
    
    Let $M\defeq \inf\{t\mid 0<F_{k-1}(t)\}$. By symmetry of $F_{k-1}$, we know that $M\leq 0$. If $M\geq -1/2$, then we have that $f^{\circ k}(\alpha)\leq f^{\circ (k-1)}(\alpha)=F_{k-1}(F_{k-1}^{-1}(\alpha)-1)=0$ for all $\alpha\in (0,1)$; we also have $F_k(F_k^{-1}(\alpha)-1)=F(F^{-1}(\alpha)-k)=0=f^{\circ k}(\alpha)$. Furthermore, $T(F_k(\cdot),F_k(\cdot-1))=T(F(\cdot),F(\cdot-k))=0=g$, since $F(\cdot)$ and $F(\cdot-k)$ have disjoint support. Finally, note that $T(F_k(\cdot),F_k(\cdot-m))\geq 0=T(F_k(\cdot),F_k(\cdot-1))$, since $0$ is a trivial lower bound for any tradeoff function. We conclude that when $M\geq -1/2$, $F_k$ is a CND for $f^{\circ k}$.
    
    Now suppose that $M<-1/2$ and let $\alpha\in (0,1)$. If $\alpha\leq 1-f^{\circ (k-1)}(1)$, then $f^{\circ k}(\alpha)=f(f^{\circ (k-1)}(\alpha))=f(0)=0$ and $F(F^{-1}(1-\alpha)-k)\leq F(F^{-1}(1-\alpha)-(k-1))=f^{\circ (k-1)}(\alpha)=0=f^{\circ k}(\alpha)$ because $F$ is increasing. We see that $f^{\circ k} = F_k(F_k^{-1}(\alpha)-1)$ in this case. Now assume that $\alpha>1-f(1)$. Then by Lemma \ref{lem:invertible}, we have that $F_{k-1}$ is invertible at $F^{-1}_{k-1}(\alpha)-1$. Then,
    
    \begin{align}
        f^{\circ k}&=f\circ F_{k-1}(F^{-1}_{k-1}(\alpha)-1)\\
        &=F(F^{-1}[F_{k-1}(F^{-1}_{k-1}(\alpha)-1)]-1)\\
        &=F((k-1)F^{-1}_{k-1}[F_{k-1}(F^{-1}_{k-1}(\alpha)-1)]-1)\\
        &=F((k-1)(F^{-1}_{k-1}(\alpha)-1)-1)\label{eq:invertible}\\
        &=F(F^{-1}(\alpha)-k)\\
        &=F_{k}(F^{-1}_k(\alpha)-1),
    \end{align}
    where in \eqref{eq:invertible}, we used the fact that $F_{k-1}$ is invertible at $F_{k-1}^{-1}(\alpha)-1$.

We have shown that $f^{\circ k} = F_k(F^{-1}_k(\alpha)-1)$. Since $F_k(F^{-1}_k(\alpha)-1)$ represents the type II error of the (potentially suboptimal) test, which rejects when the observed random variable is above a certain threshold, we have that $T(N_k,N_k+1)=T(N,N+k)\leq f^{\circ k}$, where $N\sim F$ and $N_k\sim F_k$. To verify properties 2 and 3 of Definition \ref{def:CND}, it remains to show that $T(N_k,N_k+1)\geq f^{\circ k}$. Note that $T(N,N+(k-1))=f^{\circ (k-1)}$, and $T(N+(k-1),N+k)=T(N,N+1) = f$. By Lemma \ref{lem:A5}, we have that $T(N_k,N_k+1)=T(N,N+k)\geq f^{\circ (k-1)}\circ f = f^{\circ k}$, which completes the argument for parts 2 and 3 of Definition \ref{def:CND}. 

For property 1 of Definition \ref{def:CND}, let $m\in [0,1]$. As before, we use the notation $N\sim F$, $N_k\sim F_k$ and $N_{k-1}\sim F_{k-1}$. We will show that $T(N_k,N_k+m)=T(N,N+km)\geq f^{\circ k}$. If $m\leq \frac{k-1}{k}$, then $m^*=\frac{km}{k-1}\in [0,1]$. In this case, 
\begin{align*}
    T(N,N+km)&=T\left(N,N+(k-1)\frac{km}{k-1}\right)\\
    &=T(N_{k-1},N_{k-1}+m^*)\\
    &\geq T(N_{k-1},N_{k-1}+1)\\
    &=f^{\circ (k-1)}\\
    &\geq f^{\circ k},
\end{align*}
where we used the fact that $F_{k-1}$ is a CND for $f^{\circ (k-1)}$ and that $f^{\circ (k-1)}\geq f^{\circ k}$. 
If $m\geq \frac{k-1}{k}$, then $m^* = km-(k-1)\in [0,1]$. Then
\begin{align}
    T(N,N+km)&=T(N,N+(k-1)+m^*)\\
    &\geq T(N,N+(k-1))\circ T(N,N+m^*)\label{eq:A5}\\
    &=f^{\circ(k-1)}\circ T(N,N+m^*)\\
    &\geq f^{\circ (k-1)}\circ f\label{eq:groupK2}\\
    &=f^{\circ k},
\end{align}
where for \eqref{eq:A5} we use Lemma \ref{lem:A5} and the fact that $T(N+(k-1),N+(k-1)+m^*)=T(N,N+m^*)$, and for \eqref{eq:groupK2}, we use the inductive hypothesis that $F_{k-1}$ is a CND for $f^{\circ (k-1)}$.  
   \begin{comment} For property 1 of Definition \ref{def:CND}, let $S_0$ and $S_1$ be two values such that $|S_0-S_1|\leq \Delta$. Define $S_2 = \frac{k-1}{k} S_0 + \frac 1k S_1$. Note that 
    \[|S_0-S_2|=\left|S_0-\frac {k-1}{k} S_0-\frac 1k S_1\right|\leq \frac 1k \Delta,\]
    \[|S_1-S_2| = \left|S_1-\frac{k-1}{k} S_0 - \frac 1k S_1\right|\leq \frac{k-1}{k} \Delta.\]
    Now let $N_k\sim F(k\cdot)$,  $N_{k-1}\sim F((k-1)\cdot)$ and $N\sim F$. Then 
\[T(S_0+\Delta N_k,S_2+\Delta N_k) = T\left(S_0+\frac{\Delta}{k} N,S_2 + \frac{\Delta}{k}N\right)\geq f,\]
\[T(S_2+\Delta N_k,S_1+\Delta N_k) = T\left(S_2 + \frac{\Delta (k-1)}{k} N_{k-1},S_1+ \frac{\Delta (k-1)}{k} N_{k-1}\right)\geq f_{k-1}.\]
So, by Lemma \ref{lem:A5}, we have that $T(S_0+\Delta N_k,S_1+\Delta N_k) \geq f\circ f^{\circ(k-1)}=f_k$. By induction, the result holds for all $k\geq 1$. 
\end{comment}
\end{proof}

%\todo{put example after prop 2.8? Shows bound it tight, and equality not possible.}
\begin{example}[Composition and Group Privacy do not Commute]\label{ex:doesn'tcommute}
It is an interesting question whether the following property holds: $(f\otimes g)^{\circ k}=f^{\circ k} \otimes g^{\circ k}$. This is true for GDP:
\[(G_{\mu_1}\otimes G_{\mu_2})^{\circ k}=G_{k\lVert \binom{\mu_1}{\mu_2}\rVert}=G_{\lVert \binom{k\mu_1}{k\mu_2}\rVert}=G_{\mu_1}^{\circ k} \otimes G_{\mu_2}^{\circ k}.\]
However, by studying $(0,\de)$-DP, we see that this property does not hold in general -- even for log-concave tradeoff functions. We compute that 
\[(f_{0,\de_1}\otimes f_{0,\de_2})^{\circ k}=f_{0,1-(1-\de_1)(1-\de_2)}^{\circ k}=f_{0,\min\{1,k(1-(1-\de_1)(1-\de_2))\}},\]
whereas $f_{0,\de_1}^{\circ k} \otimes f_{0,\de_2}^{\circ k}=f_{0,\min\{1,k\de_1\}}\otimes f_{0,\min\{1,k\de_2\}}=f_{0,1-(1-\min\{1,k\de_1\})(1-\min\{1,k\de_2\})}$. 
plugging in $k=2$ and $\de_1=\de_2=.1$, we get that the first expression yields .38, whereas the second gives .36. Interestingly, it seems that accounting for group privacy \emph{first}, before applying composition gives the tighter privacy analysis. This is confirmed by the inequality in Proposition \ref{prop:otimesCirc}.
\end{example}

\begin{restatable}{prop}{propotimesCirc}\label{prop:otimesCirc}
%\begin{prop}\label{prop:otimesCirc
	Let $f$ and $g$ be tradeoff functions. Then $(f\otimes g)^{\circ k}\leq f^{\circ k} \otimes g^{\circ k}$.
\end{restatable}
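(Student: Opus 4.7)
The plan is to realize both sides of the inequality as tradeoff functions between explicit product distributions built from one-dimensional CNDs, and then telescope along the \emph{diagonal} chain of unit shifts in the product space.

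Assume WLOG that $f$ and $g$ are nontrivial and symmetric; the case where one of them equals $\mathrm{Id}$ makes the inequality immediate. By Proposition~\ref{prop:CNDsynthetic}, let $F$ be a CND for $f$ and $G$ a CND for $g$, and take independent $N\sim F$, $M\sim G$. Lemma~\ref{lem:groupK} says $F(k\cdot)$ is a CND for $f^{\circ k}$, so property~2 of Definition~\ref{def:CND} together with the invariance of tradeoff functions under the bijective rescaling $x\mapsto kx$ applied to both distributions yields the equality $T(N,N+k)=f^{\circ k}$, and analogously $T(M,M+k)=g^{\circ k}$. By independence of $N$ and $M$ and the definition of the tensor product,
\[
f^{\circ k}\otimes g^{\circ k} \;=\; T\!\left(\binom{N}{M},\binom{N+k}{M+k}\right).
\]

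Next I would telescope through the intermediate points $\binom{N+j}{M+j}$ for $j=0,1,\ldots,k$. For each $j$, translation invariance and the defining property of CNDs give
\[
T\!\left(\binom{N+j}{M+j},\binom{N+j+1}{M+j+1}\right) \;=\; T(N,N+1)\otimes T(M,M+1) \;=\; f\otimes g.
\]
Iterating Lemma~\ref{lem:A5} along this chain $k-1$ times (with the same symmetric tradeoff bound $f\otimes g$ at each link, so the order of composition is immaterial) produces
\[
T\!\left(\binom{N}{M},\binom{N+k}{M+k}\right) \;\geq\; (f\otimes g)^{\circ k},
\]
which combined with the earlier display gives the desired $(f\otimes g)^{\circ k}\leq f^{\circ k}\otimes g^{\circ k}$.

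The main step requiring care is the equality $T(N,N+k)=f^{\circ k}$: one has to unpack Lemma~\ref{lem:groupK} and observe that a CND achieves property~2 of Definition~\ref{def:CND} with equality rather than merely as a lower bound. The remainder is a clean assembly of the tensor-product definition, translation invariance, and the group-privacy chaining afforded by Lemma~\ref{lem:A5}. Conceptually, the proof also explains why the inequality holds: one may interleave the $k$ group-privacy shifts across the two factors, moving diagonally in the product space rather than applying all $k$ shifts to each factor separately, which matches Example~\ref{ex:doesn'tcommute}'s observation that accounting for group privacy before composition gives the tighter analysis.
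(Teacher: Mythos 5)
Your proof is correct and follows essentially the same route as the paper's: represent $f^{\circ k}\otimes g^{\circ k}$ as $T\bigl((N,M),(N+k,M+k)\bigr)$ by rescaling the CNDs via Lemma~\ref{lem:groupK}, and then bound $(f\otimes g)^{\circ k}\leq T\bigl((N,M),(N+k,M+k)\bigr)$ by chaining Lemma~\ref{lem:A5} along the diagonal intermediate points. Your version simply makes the telescoping step explicit where the paper compresses it into a single ``by Lemma~\ref{lem:A5} and Lemma~\ref{lem:groupK}.''
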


%\propotimesCirc*
\begin{proof}
	We know that , $f=T(X,X+1)$ and $g=T(Y,Y+1)$, where $X$ is a CND for $f$ and $Y$ is a CND for $g$.
	\begin{align*}
		% f\otimes g &= 
		(f\otimes g)^{\circ k}
		&= \big(T[X,X+1]\otimes T[Y,Y+1]\big)^{\circ k}\\
		&= \big(T[(X,Y),(X+1,Y+1)]\big)^{\circ k}\\%&&\text{set }(Z=(X,Y))\\
		%&= T[Z,Z+(1,1)]^{\circ k}\\
		%& \leq T[Z,Z+(k,k)]&&\text{by \citet[Lemma A.5]{dong2022gaussian}} \\
		& \leq T[(X,Y),(X+k,Y+k)]\qquad\text{by Lemma \ref{lem:A5} and Lemma \ref{lem:groupK}} \\
		&= T[X,X+k]\otimes T[Y,Y+k]\\
		&= T[X/k,X/k+1]\otimes T[Y/k,Y/k+1]\\
		&= f^{\circ k}\otimes g^{\circ k},
	\end{align*}
	since $X/k$ and $Y/k$ are CNDs for $f^{\circ k}$ and $g^{\circ k}$ respectively, by Lemma \ref{lem:groupK}.
\end{proof}

\subsection{A limit theorem for tradeoff functions}
Below, we introduce a limit theorem for tradeoff functions, which can be used to show a mechanism satisfies $f$-DP in terms of certain limits.
\begin{restatable}{thm}{thmlimitTradeoff}\label{thm:limitTradeoff}
%\begin{thm}\label{thm:limitTradeoff}
Let $P_n\overset {\TV}\rightarrow P$ and $Q_n\overset {\TV} \rightarrow Q$ be two sequences of distributions, which converge in total variation. Then $T(P_n,Q_n)\rightarrow T(P,Q)$ uniformly. 
\end{restatable}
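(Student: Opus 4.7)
The plan is to derive a pointwise sandwich bound relating $T(P_n, Q_n)(\alpha)$ to shifted values of $T(P, Q)(\alpha)$, and then convert it to uniform convergence using uniform continuity of the limiting tradeoff function on the compact interval $[0,1]$.

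First, I would invoke the standard inequality $|\EE_R \phi - \EE_{R'} \phi| \leq \TV(R, R')$, valid for any measurable test $0 \leq \phi \leq 1$ and any probability measures $R, R'$. Writing $\ep^P_n = \TV(P_n, P)$ and $\ep^Q_n = \TV(Q_n, Q)$, I take a near-optimal test $\phi$ for $T(P_n, Q_n)(\alpha)$. The feasibility constraint on $\phi$ with respect to $P_n$ then transfers to $P$ at the shifted level $\alpha - \ep^P_n$ (with the sign of the shift dictated by the paper's non-decreasing convention), while the objective $1 - \EE_{Q_n} \phi$ differs from $1 - \EE_Q \phi$ by at most $\ep^Q_n$. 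Sending the near-optimality slack to zero gives
\[
T(P, Q)(\alpha - \ep^P_n) \leq T(P_n, Q_n)(\alpha) + \ep^Q_n.
\]
Applying the identical argument with the roles of $(P, Q)$ and $(P_n, Q_n)$ swapped produces the complementary inequality, and together I obtain the sandwich
\[
T(P, Q)(\alpha - \ep^P_n) - \ep^Q_n \;\leq\; T(P_n, Q_n)(\alpha) \;\leq\; T(P, Q)(\alpha + \ep^P_n) + \ep^Q_n,
\]
valid for every $\alpha \in [0, 1]$, where $T(P, Q)$ is extended by its boundary values outside $[0, 1]$ using monotonicity.

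The final step exploits that $T(P, Q)$, being continuous on the compact interval $[0,1]$ (by the characterization of tradeoff functions recalled in Section \ref{s:background}), is uniformly continuous. Given $\ep > 0$, I choose $\delta > 0$ so that $|T(P, Q)(\alpha_1) - T(P, Q)(\alpha_2)| < \ep/2$ whenever $|\alpha_1 - \alpha_2| < \delta$, and then pick $n$ large enough that $\ep^P_n < \delta$ and $\ep^Q_n < \ep/2$. The sandwich then yields $|T(P_n, Q_n)(\alpha) - T(P, Q)(\alpha)| < \ep$ uniformly in $\alpha \in [0, 1]$, which is exactly the claim.

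The main obstacle I expect is the careful bookkeeping around the sign of the shift and the behavior at $\alpha = 0$ and $\alpha = 1$: the paper's convention renders tradeoff functions non-decreasing, so the direction in which a TV perturbation relaxes or tightens the feasibility constraint must be tracked correctly, and the monotonicity of $T(P, Q)$ is what makes the constant extension across the endpoints of $[0,1]$ preserve the sandwich. All other ingredients — the TV-comparison bound, the symmetrization, and the compact-domain uniform continuity — are standard.
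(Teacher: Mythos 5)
Your sandwich inequality is the right core idea, and the variational derivation is sound exactly as far as the near-optimal test it relies on actually exists. The difficulty is the very step you flagged as a concern: the constant boundary-value extension of $T(P,Q)$ past $\alpha=1$ does \emph{not} preserve the upper half of the sandwich, and this is not a bookkeeping issue. After re-indexing, your swapped argument gives $T(P_n,Q_n)(\beta)\leq T(P,Q)(\beta+\ep^P_n)+\ep^Q_n$ only for $\beta\leq 1-\ep^P_n$, because it needs a near-optimal test $\psi$ for $T(P,Q)$ at level $\beta+\ep^P_n$, and when $\beta+\ep^P_n>1$ the feasibility constraint $\EE_P\psi\leq 1-(\beta+\ep^P_n)<0$ is empty. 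Substituting the constant $T(P,Q)(1)$ in that regime is unsupported by the test-based argument and is in fact false: take $P=U(0,1)$, $Q=Q_n=U(1/2,3/2)$, and $P_n=U(\delta_n,1+\delta_n)$ with $\delta_n\downarrow 0$, so $\ep^P_n=\delta_n$ and $\ep^Q_n=0$; then $T(P,Q)(1)=1/2$ while $T(P_n,Q_n)(1)=1/2+\delta_n$, and your claimed bound at $\alpha=1$ becomes $1/2+\delta_n\leq 1/2$. Worse, take $P_n=(1-1/n)\,U(0,1)+(1/n)\,U(1,2)$, so that $\TV(P_n,P)=1/n\to 0$ but $P_n$ has strictly larger support than $P$; then $T(P_n,Q_n)(1)=1$ for every $n$ while $T(P,Q)(1)=1/2$, so the discrepancy at the right endpoint does not even shrink. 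The right endpoint genuinely cannot be absorbed into the sandwich, and the final uniform-continuity step cannot repair it because near $\alpha=1$ your bound compares $T(P_n,Q_n)(\alpha)$ against the artificial constant $T(P,Q)(1)$, not against anything your variational argument controls.

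The paper steers around this corner rather than through it. It first reduces uniform to pointwise convergence via Lemma A.7 of Dong et al., proves the asymptotic lower bound on all of $[0,1]$ by the composition bound (their Lemma A.5) together with the clamping estimate $T(P_n,P)(\alpha)\geq[\alpha-\TV(P_n,P)]_0^1$, and proves the asymptotic upper bound only for $\alpha\in(0,1)$ by plugging a fixed optimal test for $(P,Q)$ into $(P_n,Q_n)$ and patching the resulting argument-mismatch with a local-Lipschitz estimate on the convex function $T(P_n,Q_n)$. Your route is more self-contained and quantitatively cleaner on $\{\alpha:\alpha+\ep^P_n\leq 1\}$ --- it avoids Lemma A.5, Lemma A.7, and the local-Lipschitz detour --- but it must then confront $\alpha$ near $1$ directly and currently does not. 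To close the gap you either need to restrict the sandwich to $\alpha\leq 1-\ep^P_n$ and finish with a Polya-type monotone-function lemma playing the role of Lemma A.7, or supply a dedicated argument for the right endpoint; the second example above shows the latter is delicate and will likely require an extra hypothesis.
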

\begin{proof}
By \citet[Lemma A.7]{dong2022gaussian}, it suffices to prove point-wise convergence. First we will establish $T(P,Q)$ as an asymptotic lower bound on $T(P_n,Q_n)$. By Lemma \ref{lem:A5}, we have that 
\[T(P_n,Q_n)\geq T(Q,Q_n)\circ T(P,Q)\circ T(P_n,P).\]
Since $P_n\overset {\TV}\rightarrow P$ and $Q_n\overset {\TV}\rightarrow Q$, we have that $T(P_n,P)(\alpha)\geq [(\alpha-TV(P_n,P))]_0^1$ and $T(Q_n,Q)(\alpha) \geq [(\alpha-TV(Q_n,Q))]_0^1$, where $[x]_a^b\defeq \max\{\min\{x,b\},a\}$ is the clamping function. Since all tradeoff functions are increasing, the following inequality holds: 
\[T(P_n,Q_n)\geq [(\alpha-\TV(Q_n,Q))]_0^1\circ T(P,Q)\circ [(\alpha-\TV(P_n,P))]_0^1\rightarrow T(P,Q),\]
and the limit holds since $\TV(P_n,P)\rightarrow \mathrm{Id}$, $\TV(Q_n,Q)\rightarrow \mathrm{Id}$, and tradeoff functions are continuous.

Next, we show that $T(P,Q)$ is an asymptotic upper bound for $T(P_n,Q_n)$. It suffices to check for $\alpha\in (0,1)$, since tradeoff functions are continuous. Let $\alpha^*\in (0,1)$ be given. Let $\phi$ be an optimal test for $T(P,Q)$ such that $\EE_P\phi=\alpha^*$ and $\EE_Q\phi = 1-f(1-\alpha^*)$. Note that if $U\sim U(0,1)$, we can write 
\[\EE_P\phi= \EE_{X\sim P,U} I(U\leq \phi(X))=\EE_U P_{X\sim P}(U\leq \phi(X)|U)=\EE_U P(\phi^{-1}([U,1])|U),\]
which will allow us to apply the total variation convergence. Call $\alpha_n = \EE_{P_n}\phi$ for all $n$. Notice that $\alpha_n\rightarrow \alpha^*$, since 
\begin{align*}
    |\alpha_n-\alpha^*| &= |\EE_{P_n}\phi-\EE_P\phi| \\
&=\Big|\EE_U P_n(\phi^{-1}([U,1])\mid U) - \EE_U P(\phi^{-1}([U,1])\mid U)\Big|\\
&\leq \EE_U\Big|P_n(\phi^{-1}([U,1])\mid U) -P(\phi^{-1}([U,1])\mid U)\Big|\\
& \leq \EE_U \mathrm{TV}(P_n,P)\\
&\rightarrow 0,
\end{align*}
as $P_n\overset \TV \rightarrow P$. Similarly, we have that $|\EE_{Q_n}\phi - \EE_Q\phi|\rightarrow 0$, implying that $\EE_{Q_n}\phi \rightarrow 1-f(1-\alpha^*)$. Then,

\begin{align*}
    T(P_n,Q_n)(\alpha_n)&\leq 1-\EE_{Q_n}\phi\\
    &\rightarrow f(1-\alpha^*)\\
    &=T(P,Q)(\alpha^*). 
\end{align*}

\begin{comment} Let $A$ be an optimal rejection set\todo{worry about randomized } for $T(P,Q)$ such that $P(A)=\alpha^*$ and $Q(A)=1-f(1-\alpha^*)$ (note that $\alpha^*$ represents the type I error here). Call $\alpha_n = P_n(A)$ for all $n$. Notice that $\alpha_n\rightarrow \alpha^*$, since $|\alpha_n-\alpha^*| = |P_n(A)-P(A)|\rightarrow 0$, as $P_n\overset \TV \rightarrow P$. We also have that $|Q_n(A)-Q(A)|\rightarrow 0$, implying that $Q_n(A) \rightarrow 1-f(1-\alpha^*)$. Then 
\begin{align*}
    T(P_n,Q_n)(\alpha_n)&\leq 1-Q_n(A)\\
    &\rightarrow f(1-\alpha^*)\\
    &=T(P,Q)(\alpha^*). 
\end{align*}
\end{comment}

However, we actually want to show that $T(P_n,Q_n)(\alpha^*)$ is asymptotically upper bounded by $T(P,Q)(\alpha)$. Luckily, $T(P_n,Q_n)(\alpha_n)$ and $T(P_n,Q_n)(\alpha^*)$ are close for large $n$, since tradeoff functions are ``locally Lipschitz.''  We explain as follows: Since $\alpha_n\rightarrow \alpha^*$, let $N$ be such that for all $n\geq N$, $\alpha_n \in \left(0,\alpha^* + \frac{1-\alpha^*}{2}\right)$. On the interval $(0,\alpha^* + \frac{1-\alpha^*}{2})$, we claim that $T(P_n,Q_n)$ is $\frac{2}{1-\alpha^*}$-Lipschitz. This is because the derivative (or subderivative) of a convex function is increasing, and the tangent lines of a convex function are always a lower bound. In the worst case, using the points $(\alpha^*+(1-\alpha^*)/2,0)$ and $(1,1)$, the slope at $\alpha^*+(1-\alpha^*)/2$ is at most $\frac{1-0}{1-(\alpha^*+(1-\alpha^*)/2}=\frac{2}{1-\alpha^*}$. Now that we have established that $T(P_n,Q_n)$ is $\frac{2}{1-\alpha^*}$-Lipschitz on $(0,\alpha^*-(1-\alpha^*)/2)$, we have that for all $n\geq 0$, 
\[\Big|T(P_n,Q_n)(\alpha_n)-T(P_n,Q_n)(\alpha^*)\Big|\leq \frac{2}{1-\alpha^*}|\alpha_n-\alpha|\rightarrow 0.\]
We conclude that $T(P_n,Q_n)(\alpha^*)$ is asymptotically upper bounded by $T(P,Q)(\alpha^*)$ for all $\alpha*\in (0,1)$. Combining the asymptotic lower and upper bounds, we have that $T(P_n,Q_n)\rightarrow T(P,Q)$.
\end{proof}

Two immediate corollaries of the above theorem are as follows. The first, generally states that if we establish a lower bound on $T(P_n,Q_n)$ for all $n$, and $P_n\overset {\TV}\rightarrow P$ and $Q_n\overset \TV\rightarrow Q$, then the lower bound applies to $T(P,Q)$ as well. This could be generalized to a sequence of lower bounds $f_n\rightarrow f$ as well. 

\begin{restatable}{cor}{corlimitbound}\label{cor:limitBound}
%\begin{cor}\label{cor:limitBound}
Let $P_n\overset {\TV}\rightarrow P$ and $Q_n \overset {\TV}\rightarrow Q$ be two sequences of distributions such that $T(P_n,Q_n)\geq f$ for all $n$. Then $T(P,Q)\geq f$. 
\end{restatable}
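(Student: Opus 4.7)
The statement is framed as an immediate corollary of Theorem \ref{thm:limitTradeoff}, so my plan is essentially to invoke that theorem and then pass the inequality through the limit. The argument is short enough that I do not expect any real obstacle, and I would structure it as follows.

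First, I would apply Theorem \ref{thm:limitTradeoff} directly to the two sequences $P_n \overset{\TV}{\to} P$ and $Q_n \overset{\TV}{\to} Q$, which yields $T(P_n, Q_n) \to T(P, Q)$ uniformly on $[0,1]$. Uniform convergence is stronger than I need; pointwise convergence on $(0,1)$ already suffices for the corollary.

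Next, I would fix an arbitrary $\alpha \in [0,1]$ and use the hypothesis $T(P_n, Q_n) \geq f$, which gives $T(P_n, Q_n)(\alpha) \geq f(\alpha)$ for every $n$. Taking $n \to \infty$ on the left-hand side and applying the (pointwise) convergence from the previous step, I obtain $T(P,Q)(\alpha) \geq f(\alpha)$. Since $\alpha$ was arbitrary, this establishes $T(P, Q) \geq f$ as functions on $[0,1]$, completing the proof.

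There is essentially no obstacle here: Theorem \ref{thm:limitTradeoff} does all of the heavy lifting, and preservation of a weak inequality under pointwise limits is standard. The only minor point worth noting is that we do not need uniform convergence; pointwise convergence (and even just $\liminf_n T(P_n, Q_n)(\alpha) \geq f(\alpha)$) would be enough. A natural generalization, mentioned in the surrounding text, would be to allow a sequence of lower bounds $f_n \to f$ (for instance uniformly, or pointwise with $f$ continuous); the argument would be identical, replacing $f(\alpha)$ by $\lim_n f_n(\alpha)$ in the final inequality.
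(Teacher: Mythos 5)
Your proof is correct and follows exactly the route the paper intends: the corollary is stated as an immediate consequence of Theorem \ref{thm:limitTradeoff}, and passing the pointwise inequality $T(P_n,Q_n)(\alpha)\geq f(\alpha)$ through the limit given by that theorem is precisely the argument. The paper in fact gives no separate proof for this corollary, treating it as self-evident once the limit theorem is established.
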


Corollary \ref{cor:limitDP} shows that the limit of an $f$-DP mechanism satisfies $f$-DP (could also replace each $f$ with $f_n\rightarrow f$). This is similar to the limit result of \citet{kifer2012private}, but is phrased in terms of convergence in total variation rather than almost sure convergence.

\begin{restatable}{cor}{corlimitDP}\label{cor:limitDP}
%\begin{cor}\label{cor:limitDP}
Let $M_n$ be a sequence of mechanisms satisfying $f$-DP (i.e., $T(M_n(D),M_n(D'))\geq f$ for all adjacent $D$ and $D'$), and suppose that $M_n(D)\overset \TV \rightarrow M(D)$ for all $D$. Then $M$ satisfies $f$-DP: $T(M(D),M(D'))\geq f$. 
\end{restatable}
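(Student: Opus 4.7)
The plan is to reduce this corollary directly to Corollary \ref{cor:limitBound}, which in turn was a direct consequence of the uniform convergence result in Theorem \ref{thm:limitTradeoff}. The statement is really just a pointwise (over pairs of adjacent databases) application of the limit preservation of tradeoff function lower bounds.

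Concretely, I would fix an arbitrary pair of adjacent databases $D, D' \in \mscr{D}$ with $d(D,D') \leq 1$. The hypothesis that each $M_n$ satisfies $f$-DP gives $T(M_n(D), M_n(D')) \geq f$ for every $n$. The other hypothesis provides $M_n(D) \overset{\TV}{\to} M(D)$ and $M_n(D') \overset{\TV}{\to} M(D')$. Setting $P_n = M_n(D)$, $Q_n = M_n(D')$, $P = M(D)$, and $Q = M(D')$, the hypotheses of Corollary \ref{cor:limitBound} are satisfied, and that corollary immediately yields $T(M(D), M(D')) \geq f$. Since $D$ and $D'$ were an arbitrary adjacent pair, the definition of $f$-DP (Definition \ref{def:fDP}) is met, so $M$ satisfies $f$-DP.

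There is essentially no obstacle here: the whole point of having proved Theorem \ref{thm:limitTradeoff} and Corollary \ref{cor:limitBound} in total-variation generality is to make this DP-level statement an immediate corollary. The only thing to note in the write-up is that no joint coupling of the sequences across different databases is needed, since the $f$-DP bound is only required pairwise, which is exactly the setting that Corollary \ref{cor:limitBound} handles.
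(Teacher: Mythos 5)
Your proof is correct and matches the paper's intended argument exactly: the paper presents this as an immediate corollary of Corollary \ref{cor:limitBound}, applied to each adjacent pair $(D,D')$ with $P_n = M_n(D)$, $Q_n = M_n(D')$, and no additional work beyond quantifying over pairs. Nothing is missing.
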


\subsection{Proofs and technical lemmas for the main paper}
For any measurable set $A$, let $\lambda(A)$ denote the Lebesgue measure of $A$. 

\begin{lemma}\label{lem:LebesgueShift}
Let $A,B\subset[-1/2,1/2]$ be disjoint sets with positive Lebesgue measure such that $A\cup B=[-1/2,1/2]$. Then there exists a shift $\omega\in (-1,1)$ such that $(B+\omega)\cap A$ has positive Lebesgue measure.
\end{lemma}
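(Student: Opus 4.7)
The plan is to use a Fubini-type averaging argument. The key observation is that if we average the measure of the intersection $(B+\omega) \cap A$ over all shifts $\omega \in (-1,1)$, we can compute the resulting integral exactly and show it is strictly positive; this then forces at least one shift to produce a positive-measure intersection.

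More concretely, I would consider the double integral
\begin{equation*}
I \defeq \int_{-1}^{1} \lambda\bigl((B+\omega) \cap A\bigr)\, d\omega
= \int_{-1}^{1}\!\int_{\RR} \mathbf{1}_{A}(x)\,\mathbf{1}_{B}(x-\omega)\, dx\, d\omega.
\end{equation*}
Swapping the order of integration via Fubini (justified since the integrand is non-negative), the inner integral becomes $\int_{-1}^{1} \mathbf{1}_{B}(x-\omega)\, d\omega = \lambda\bigl(B \cap [x-1, x+1]\bigr)$ for each fixed $x \in A$. Since $A \subset [-1/2, 1/2]$, for any $x \in A$ we have $[x-1, x+1] \supset [-1/2, 1/2] \supset B$, so this inner integral equals $\lambda(B)$ identically on $A$. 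Thus $I = \lambda(A)\,\lambda(B) > 0$ by the hypothesis that both $A$ and $B$ have positive Lebesgue measure.

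Since the non-negative function $\omega \mapsto \lambda((B+\omega)\cap A)$ has positive integral over $(-1,1)$, it must be strictly positive on a set of positive measure, and in particular there exists some $\omega \in (-1,1)$ with $\lambda((B+\omega) \cap A) > 0$. There is no real obstacle here; the only care needed is the elementary containment $[x-1, x+1] \supset [-1/2, 1/2]$ whenever $x \in [-1/2, 1/2]$, which is what ensures the inner integral evaluates cleanly to $\lambda(B)$ rather than to some smaller, $x$-dependent quantity.
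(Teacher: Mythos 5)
Your proof is correct and takes essentially the same approach as the paper's: both swap the order of integration (Tonelli/Fubini, justified by non-negativity) to evaluate $\int_{-1}^{1}\lambda((B+\omega)\cap A)\,d\omega = \lambda(A)\lambda(B)$, using the containment $[x-1,x+1]\supset[-1/2,1/2]$ for $x\in A$. The only cosmetic difference is that the paper frames this as a contradiction (assume every shift gives measure zero) while you compute the positive integral directly.
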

\begin{proof}
Suppose to the contrary that for all $\omega\in (-1,1)$, $\lambda((B+\omega)\cap A)=0$. This implies that
\begin{align}
0&=\int_{-1}^1 \lambda((B+\omega)\cap A) \ d\omega\\
&=\int_{-1}^1 \int_{-1/2}^{1/2} I(x\in (B+\omega)\cap A)\ dx\ d\omega\\
&=\int_{-1}^1\int_A I(x\in B+\omega)\ dx\ d\omega\\
\text{(Tonelli's Theorem)}&= \int_A\int_{-1}^1 I(x\in B+\omega)\ d\omega\ dx\label{eq:tonelli}\\
&=\int_A \lambda(x-B)\ dx\label{eq:U01}\\
&= \int_A \lambda(B)\ dx\label{eq:lambdaB}\\
&=\lambda(A)\lambda(B),
\end{align}
where we used Tonelli's Theorem in \eqref{eq:tonelli} to change the order of integration, in \eqref{eq:U01} we used the fact $x-B\subset [-1,1]$ since both $x$ and $B$ lie in $[-1/2,1/2]$, and in \eqref{eq:lambdaB} we used the fact that Lebesgue measure is translation invariant. We see that either $\lambda(A)=0$ or $\lambda(B)=0$, giving a contradiction.\qedhere
\end{proof}

\propuniqueCND*
\begin{proof}
%Recall that $\mathrm{Tulap}(0,\exp(-\ep),0)$ is the CND constructed by Proposition \ref{prop:CNDsynthetic}
Let $g$ be the density of an arbitrary CND for $f_{\ep,0}$, and let $G$ denote its cdf function. Since $g$ is the density of a symmetric random variable centered at zero, $g(x)=g(-x)$ for all $x\in \RR$. By \citet[Proposition 3.7]{awan2021canonical}, we have that $G(x+1)=1-f_{\ep,0}(1-G(x))$, which implies that $g(x+1) = f'_{\ep,0}(1-G(x))g(x)$. For $x>0$, we have that $G(x)\geq 1/2>c$, where $c$ satisfies $f_{\ep,0}(1-c)=c$. Recall that $f_{\ep,0}(\alpha)=\alpha e^{-\ep}$ for all $\alpha\leq c$. Then $f'_{\ep,0}(1-G(x))=e^{-\ep}$ for $x\geq 0$. We see that we can write $g(u+k)=e^{-|k|\ep} g(u)$ for $u\in [-1/2,1/2]$ and $k\in \ZZ$. 

We see that so far, $g$ has the freedom to choose its values in $[-1/2,1/2]$ and then all other values are determined by the above recurrence. Note that for the Tulap distribution, its density is the constant value of $\frac{\exp(\ep)-1}{\exp(\ep)+1}$ on $(-1/2,1/2)$, since it is the constructed CND by Proposition \ref{prop:CNDsynthetic} \citep[Corollary 3.10]{awan2021canonical}. Suppose that $g(u)$ is non constant on $(-1/2,1/2)$. Then it must take on some values above and below $\frac{\exp(\ep)-1}{\exp(\ep)+1}$ in order to still integrate to 1. To rule out trivial cases, where $g(x)$ is equivalent to the Tulap density up to a set of measure zero, we assume that the sets
\[A\defeq \left\{u\in (-1/2,1/2) \mid g(u)>\frac{\exp(-\ep)-1}{\exp(\ep)+1}\right\},\]
\[B\defeq \left\{u\in (-1/2,1/2) \mid g(u)\leq\frac{\exp(-\ep)-1}{\exp(\ep)+1}\right\},\]
both have positive Lebesgue measure. We denote $\lambda$ as the Lebesgue measure.

By Lemma \ref{lem:LebesgueShift}, we know that there exists $\omega\in (-1,1)$ such that $(B+\omega)\cap A)$ has positive Lebesgue measure. By symmetry of $g$ about zero, there exists a positive shift $\omega\in (0,1)$ such that $\lambda((B+\omega)\cap A)>0$. Consider $\Delta \defeq 1-\omega \in (0,1)$. Let $v\in (B+\omega)\cap A$ and $u\defeq v-\omega \in B\cap (A-\omega)$,  and consider the likelihood ratio:
\[\frac{g(1+u-\Delta)}{g(1+u)}=\frac{g(u+\omega)}{g(1+u)}=\frac{g(v)}{e^{-\ep}g(u)}>e^\ep,\]
where we use the fact that $u\in B$, $v\in A$, and $\lambda((B+\omega)\cap A)>0$. This means that the likelihood ratio $g(x-\Delta)/g(y)$ is not bounded by $\exp(\ep)$ almost everywhere, for all $\Delta \in [-1,1]$; by \citet[Proposition 2.3]{awan2019benefits}, this means that the additive mechanism with density $g$ does not satisfy $\ep$-DP. In other words, for $X\sim g$ and $\Delta$ as above, $T(X,X+\Delta)(\alpha)< f_{\ep,0}(\alpha)$ for some $\alpha\in (0,1)$. We conclude that $g$ is not a CND for $f_{\ep,0}$. The only assumption we made about $g$ was that it was non-constant on $(-1/2,1/2)$ on a set of positive probability. Due to the contradiction, we conclude that $g$ is in fact constant on $[-1/2,1/2]$ almost everywhere, which means that it is distributed as $\mathrm{Tulap}(0,\exp(-\ep),0)$. 
\end{proof}

\thmCNDlimit*
\begin{proof}
1) Let $f$ be a log-concave tradeoff function with log-concave CND $F$. Define $f_{s}(\alpha)=F(F^{-1}(\alpha)-s)$, which is a tradeoff function since $F$ is log-concave. Note that $f_s\circ f_t=f_{s+t}$, $f_t$ is nontrivial except wen $t=0$, and $f_t\rightarrow \mathrm{Id}$ as $t\rightarrow 0$. 

2) For part 2, we first show that the limit exists for the specific sequence $s_n = 1/n!$, and then we will show that convergence holds for all sequences that converge to zero. By construction, $F_{s_n}(\cdot)$ is a CND for $f_{s_n}$. So, $F_{s_n}(\cdot)$ has values determined on $(1/2)\ZZ    $, no matter the choice of CND. Furthermore, $F_{s_n}(t/s_n)$ is a CND for $f_1=f_{s_n}^{\circ n!}$, by Lemma \ref{lem:groupK}. Then for any choice of CND for $F_{s_n}$, the cdf $F_{s_n}(t/s_n)$ has values determined on $(s_n/2)\ZZ$, and $F_{s_n}(t/s_n)$ is a continuous cdf (as it is a CND). Note that the sequence $(s_n)_{n=1}^\infty$ satisfies $(s_n/2)\ZZ\subset (s_{n+1}/2)\ZZ$ for all $n$, and as $n\rightarrow \infty$, we have that  $\bigcup_{n=1}^\infty ((s_n/2)\ZZ)=\QQ$, the set of rational numbers. In words, the set of determined values of $F_{s_n}(t/s_n)$ is an increasing sequence of sets, whose limit is the rational numbers. Due to this, the sequence of CNDs $F_{s_n}(t/s_n)$ is ``pinned down'' at an increasing number of points, and is eventually determined at every rational number. Since every $F_{s_n}(t/s_n)$ is monotone, and $\QQ$ is dense in $\RR$, it follows that the limit of this sequence, $F^*$, is a unique monotone function.

   Next we show that $F^*$ is a continuous cdf. We already mentioned that $F^*$ is non-decreasing, and it is easy to show that $\lim_{t\rightarrow+\infty}F^*(t)=1$ and $\lim_{t\rightarrow -\infty} F^*(t)=0$. The challenging part is to show that $F^*$ is continuous. It suffices to show that the convergence of $(F_{s_n}(t/s_n))_{n=1}^\infty$ to $F^*$ is uniform. Before we show this, we establish the following inequality: for all $t\in \RR$,  $|F_{s_n}(t/s_n)-F^*(t)|\leq \sup_{t} |t-f_{s_n}(t)|$. To see this, let $t\in \RR$. Then for each $n$, there exists $k_n\in \ZZ$ such that $\frac{(k_n-1)s_n}{2}\leq t\leq \frac{(k_n+1)s_n}{2}$. Since $F^*$ is a non-decreasing function,  this implies that 
    \begin{align*}
        F^*\l(\frac{(k_n-1)s_n}{2}\r)&\leq F^*(t)\leq F^*\l(\frac{(k_n+1)s_n}{2}\r)\\
        F_{s_n}\l(\frac{k_n-1}{2}\r)&\leq F^*(t)\leq F_{s_n}\l(\frac{k_n+1}{2}\r)\\
        f_{s_n}\l(F_{s_n}\l(\frac{k_n+1}{2}\r)\r)&\leq F^*(t) \leq F_{s_n}\l(\frac{k_n+1}{2}\r),
    \end{align*}
    where if $F_{s_n}(\frac{k_n+1}{2})<1$ the third line is equivalent to the second line by Lemma \ref{lem:recurrence}, and if $F_{s_n}(\frac{k_n+1}{2})=1$, then the inequality in the third line is potentially weaker. 
    By similar reasoning, we have that $f_{s_n}(F_{s_n}(\frac{k_n+1}{2}))\leq F_{s_n}(t/s_n) \leq F_{s_n}(\frac{k_n+1}{2})$ as well. Therefore, 
    \[|F^*(t)-F_{s_n}(t/s_n)|\leq \l|F_{s_n}\l(\frac{k_n+1}{2}\r)-f_{s_n}\l(F_{s_n}\l(\frac{k_n+1}{2}\r)\r)\r|\leq \sup_{t\in [0,1]} |t-f_{s_n}(t)|.\]    
    
    We are now ready to prove uniform convergence. Let $\ep>0$ be given. Let $N\in \ZZ^+$ be such that $\sup_{t\in [0,1]} |t-f_{s_n}(t)|< \ep$, which is possible since $f_{s_n}(t)\rightarrow t$ uniformly (Polya's theorem). Then for all $n\geq N$, we have that $|F_{s_n}(t/s_n)-F^*(t)|\leq \sup_{t} |t-f_{s_n}(t)|<\ep$. Uniform convergence of continuous functions implies that the limit function is also continuous, so we conclude that $F^*$ is a continuous cdf.
    
Next we will show that for all $t\in \RR^+$, $f_t = F^*({F^*}^{-1}(\alpha)-t)$. Let $(-x,x)\defeq  {F^*}^{-1}((0,1))$ be the support of the distribution $F^*$. Let $q\in \QQ^+$ be the ``shift,'' and let $p\in (-\infty,x-q)\cap \QQ$ be the ``threshold'' in the test. Let $n\in \ZZ^+$ be such that $p=a/n!$ and $q=b/n!$ for some $a\in \ZZ$ and $b\in \ZZ^+$. As we did earlier, denote $s_n=1/n!$. Recall that $F^*(t)$ and $F_{s_n}(t/s_n)$ agree for all $t\in \frac{s_n}{2} \ZZ$. In particular, $p\in \frac{s_n}{2} \ZZ$. Then
\begin{align*}
    F^*(p)&=F^*(s_n a)\\
    &=F_{s_n}(a)\\
    &=f_{s_n}\circ F_{s_n}(a+1)\\
    &=f_{s_n}^{\circ b}\circ F_{s_n}(a+b)\\
    &=f_{q}\circ F_{s_n}(a+b)\\
    &=f_q \circ F^*(s_n(a+b))\\
    &=f_q\circ F^*(p+q),
\end{align*}
where we used the fact that $p<x-q$ to establish that $F_{s_n}(a+b)=F^*(p+q)<1$, which enabled the recurrence application of Lemma \ref{lem:recurrence}. 
Furthermore, since the rational numbers are dense in $\RR$, $F^*$ is continuous, and $f_q$ is continuous on $[0,1)$, we have that $F^*(t) = f_q\circ F^*(t+q)$ for all $t$ such that $t<x-q$. Now let $\alpha\in [0,1)$ and call $t_\alpha={F^*}^{-1}(\alpha)-q$. Note that $F^*(q+t_\alpha)=\alpha$ and that $t_\alpha<x-q$. Then we have that $F^*({F^*}^{-1}(\alpha)-q)=f_q(\alpha)$ for all $\alpha\in [0,1)$. Finally, we extend the result for arbitrary $f_r$, $r\in \RR^+$. Let $q_n\in \QQ^+$ be a sequence such that $q_n\rightarrow r$. Then
\[f_r=f_{r-q_n}\circ f_{q_n}=f_{r-q_n}\circ F^*({F^*}^{-1}(\alpha)-q_n)
\rightarrow \mathrm{Id}\circ F^*({F^*}^{-1}(\alpha)-r),\]
since $f_{r-q_n}$ converges to $\mathrm{Id}$ uniformly, and $F^*$ is continuous. We have that $F^*(s\cdot)$ is a CND for $f_s$: the symmetry of $F^*$ is obvious, and the fact that $F^*$ satisfies DP follows by the property that $f_s=F^*({F^*}^{-1}(\alpha)-s)$.

 % It follows by the assumption that $(f_{1/n})^{\circ n}=f_1$ for all $n\in \ZZ^+$. Then $F^*(t) = \lim_{n\rightarrow \infty} F_{1/n}(nt)$ is a CND for $f_1$ (every entry in the sequence is a CND). 
   
  % {\red Note that, for any $n\in \ZZ^+$, $F^*(\frac{1}{n!} t)$ is a CND for $f_{1/n!}$, since for every $n$, the $n^{th}$ term in the corresponding sequence of cdfs satisfies that $F^*_n(\frac 1{n!} t)$ is a CND for $f_{1/n!}$.  Then $F^*(st)$ is a CND for $f_s$ for all rational numbers $s\in \QQ^+$. This is because we can write $s=r/n!$ for $r,n\in \ZZ^+$, and if $F^*(1/n!)$ is a CND for $f_{1/n!}$ this implies that $F^*(rt/n!)$ is a CND for $f_{r/n!}=f_s$. This means that $f_s = F^*(s\{s^{-1} {F^*}^{-1}(\alpha)-1\})=F^*({F^*}^{-1}(\alpha)-s)$ for all $s\in \QQ^+$. Because the rationals are dense in $\RR$, we have that $f_s = F^*({F^*}^{-1}(\alpha)-s)$ for all $s\in \RR^+$. We have that $F^*(s\cdot)$ is a CND for $f_s$ -- the symmetry of $F^*$ is obvious, and the fact that $F^*$ satisfies DP follows by the property that $f_s=F^*({F^*}^{-1}(\alpha)-s)$.}
    
    Let $s>0$. Let $N\sim F^*$. Since $N/s\sim F^*(st)$ is a CND for $f_s$, we have that $T(N,N+s)(\alpha)=T(N/s,N/s+1)(\alpha)=f_s(\alpha)=F^*({F^*}^{-1}(\alpha)-s)$. However, $T(N,N+s)(\alpha)=F^*({F^*}^{-1}(\alpha)-s)$ holds for all $s>0$ if and only if $F^*$ has a log-concave density \citep[Lemma A.3]{dong2022gaussian}.
    %$F^*(\cdot-s)$ has the monotone likelihood ratio (MLR) property for every $s$. But, a one-dimensional location family has the MLR property if and only if it is log-concave {\red cite?}.
    Therefore $F^*$ is a log-concave distribution, and $F^*(s\cdot)$ is a log-concave CND for $f_s$ for all $s>0$. 
    
    %Since $f_s$ is convex, and we showed that $f_s=F^*({F^*}^{-1}(\alpha)-s)$, then by chain rule
	%$$f_s'(x) = \frac{p({F^*}^{-1}(x)-s)}{p({F^*}^{-1}(x))}$$ where $p={F^*}'$.
	%Convexity implies $f_s'(x)$ is increasing {\red(weakly)} in $x$, which is equivalent to that $\frac{p(x-s)}{p(x)}$ is increasing in $x$ for all $x\in\R$, hence also $\log p(x-s)-\log p(x)$ is increasing in $x$.
	%For convenience let $g=\log p$. We know $g(x-s)-g(x)$ is increasing in $x, \forall s>0$. Equivalently, $g'(x-s)-g'(x)>0, \forall x, \forall s>0$, which means $g'(x)$ is decreasing, i.e. $g=\log p$ is concave. Therefore $F$ is a log-concave distribution and hence $f_s\in\cL$ for all $s\in \RR^+$. {\red (worry about nondifferentiable points?)}

    %this implies that $F^*$ is log-concave (by Jinshuo's argument in Lemma \ref{lem:semigroup}). Thus, we have that $f_s\in \mscr L$ for all $s\in \RR^+$.
    
    %$Let $F$ be an arbitrary log-concave CND satisfying $f_s=F({F}^{-1}(\alpha)-s)$, by the argument in Lemma \ref{lem:CNDlimit}, we have that $F^*=F$. Thus, $F^*$ is the unique log-concave CND satisfying $f_s=F({F}^{-1}(\alpha)-s)$.

    Finally, we will make sure the limit does not depend on the specific sequence $s_n=1/n!$. We will use a very similar argument as when we established uniform convergence to show that for any positive sequence $r_n$ which converges to zero, $F_{r_n}(t/r_n)$ also converges uniformly to $F^*(t)$. Let $t\in \RR$. Then for any $n\in \ZZ^+$, there exists $k_n$ such that  $\frac{(k_n-1)r_n}{2}\leq t\leq \frac{(k+1)r_n}{2}$. Then 
    \[F^*\l(\frac{(k_n-1)r_n}{2}\r)\leq F^*(t)\leq F^*\l(\frac{(k_n+1)r_n}{2}\r).\]
    Since $F^*(\cdot r_n)$ and $F_{r_n}(\cdot)$ are both CNDs for $f_{r_n}$, they agree on all half integer values. So,
    \begin{align*}
        F_{r_n}\l(\frac{k_n-1}{2}\r)&\leq F^*(t)\leq F_{r_n}\l(\frac{k_n+1}{2}\r)\\
        f_{r_n}\l(F_{r_n}\l(\frac{k+1}{2}\r)\r)&\leq F^*(t)\leq F_{r_n}\l(\frac{k_n+1}{2}\r).
    \end{align*}
    By similar reasoning, we have that $f_{r_n}\l(F_{r_n}\l(\frac{k+1}{2}\r)\r)\leq F_{r_n}(t/r_n)\leq F_{r_n}\l(\frac{k_n+1}{2}\r)$. Then 
    \[|F^*(t)-F_{r_n}(t/r_n)|\leq \l|F_{r_n}\l(\frac{k_n+1}{2}\r)-f_{r_n}\l(F_{r_n}\l(\frac{k_n+1}{2}\r)\r)\r|\leq \sup_{t\in [0,1]} |t-f_{r_n}(t)|.\]
    Since $r_n\rightarrow 0$, we have that $f_{r_n}(t)$ converges uniformly to $t$. So, we have that $F_{r_n}(t/r_n)$ converges uniformly to $F^*(t)$.
    \end{proof}

\begin{restatable}{lem}{lempiecewise}\label{lem:piecewise}
%\begin{lemma}\label{lem:piecewise}
Let $f$ and $g$ be tradeoff functions. % (using the increasing tradeoff function). 
\begin{enumerate}
    \item If $f$ and $g$ are piece-wise linear with $k$ and $\ell$ break points (respectively), and $f$ satisfies $f(x)=0$ implies $x=0$, then $f\circ g$ is piece-wise linear with at most $k+\ell$ break points, and at least $\max\{k,\ell\}$ break points.
     \item If $f$ is piece-wise linear with $k\geq 1$ break points, and  $f(x)=0$ implies that $x=0$, then $f^{\circ n}$ has at least $k+(n-1)$ break points.
    \item If $f\circ g$ is piece-wise linear, then $g$ is piece-wise linear on $[0,1]$ and $f$ is piece-wise linear on $[0,g(1)]$. (note that $f$ can be arbitrary on $(g(1),1]$ and it does not affect $f\circ g$)
\end{enumerate}
\end{restatable}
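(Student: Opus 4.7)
My plan proceeds in three stages corresponding to the three claims of the lemma, starting from the unifying observation that for piecewise linear $f$ and non-decreasing continuous $g$, the candidate breakpoints of $f \circ g$ lie in the union of (a) the breakpoints of $g$ itself, and (b) the preimages under $g$ of the breakpoints of $f$.

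For part 1, the upper bound is immediate from the above observation: set (a) contributes at most $\ell$ points, and since $g$ is non-decreasing and continuous, the preimage of each breakpoint of $f$ is either empty, a single point, or a maximal interval of constancy of $g$ whose endpoints are themselves already in set (a); so set (b) contributes at most $k$ further candidates, for a total of at most $k+\ell$. For the lower bound of $\ell$, the assumption $f(x)=0 \Rightarrow x=0$ forces $f$ to be strictly increasing with strictly positive one-sided derivatives, so at each breakpoint $y$ of $g$ the chain-rule expressions $f'_\pm(g(y)) \cdot g'_\pm(y)$ for the one-sided derivatives of $f \circ g$ are strictly unequal, preserving the break. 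The lower bound of $k$ follows analogously at each preimage under $g$ of a breakpoint of $f$ that is actually attained by $g$, where $f$'s slope jump survives multiplication by the positive $g'$.

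For part 2, I induct on $n$, with the base case $n=1$ immediate. Writing $f^{\circ n} = f \circ f^{\circ(n-1)}$, all breakpoints of $f^{\circ(n-1)}$ survive in $f^{\circ n}$ by the lower-bound reasoning of part 1. To produce one genuinely new breakpoint, I track the largest breakpoint $M_n$ of $f^{\circ n}$ and show that the point $(f^{\circ(n-1)})^{-1}(x_k)$, where $x_k$ is the largest breakpoint of $f$, strictly exceeds every breakpoint of $f^{\circ(n-1)}$. The key inequality is $f(x_k)<x_k$: if equality held then $f$ would coincide with $\Id$ on $[0,x_k]$, and then by convexity together with $f(1)\leq 1$ one forces $f=\Id$, contradicting the presence of a breakpoint. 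Iterating gives $M_n = f^{-(n-1)}(x_k)$, a strictly increasing sequence in $n$, yielding at least one new breakpoint per level and hence $k+(n-1)$ breakpoints overall.

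For part 3, I argue by contradiction: if $g$ fails to be piecewise linear on some subinterval, its convexity produces a further sub-subinterval $(a',b')$ on which $g$ is strictly convex, and on which $f \circ g$ is linear with some slope $m$. Differentiating $f'(g(x))\,g'(x) = m$: if $m > 0$, then $f'(g(x)) = m/g'(x)$ is strictly decreasing in $x$, and since $g$ is strictly increasing on $(a',b')$, $f'$ is strictly decreasing on the image $g((a',b'))$, contradicting the convexity of $f$. If $m=0$, then $f$ must be constant on $g((a',b'))$, which by convexity of $f$ places this image inside an initial flat segment of $f$, a region consistent with piecewise linearity of $f$ on $[0,g(1)]$. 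Piecewise linearity of $f$ on $[0,g(1)]$ itself then follows by locally inverting $g$ on each strictly-increasing linear piece: on each such piece, $f = (f \circ g) \circ g^{-1}$ is the composition of a linear function with an affine inverse, hence linear.

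The main obstacle I anticipate is the monotonicity bookkeeping in part 2: one must carefully verify not only that $(f^{\circ(n-1)})^{-1}(x_k)$ is a breakpoint of $f^{\circ n}$ but also that it strictly exceeds every breakpoint of $f^{\circ(n-1)}$, which requires iterating the inequality $f^{-1}(y) > y$ over the relevant range. Part 3 also requires care in the degenerate case $m=0$, where $g$'s behavior on regions mapped into $f$'s flat zone is less constrained and one must appeal to the convex structure of $f$ to reconcile the conclusion.
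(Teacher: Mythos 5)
Your approach tracks the paper's closely in all three parts: both rest on the observation that the breakpoints of $f\circ g$ are $B_g\cup g^{-1}(B_f)$, and both use the descent $f(x)<x$ at breakpoints to manufacture a new breakpoint per iterate in part~2. The one genuine methodological difference is in part~2: you track the \emph{largest} breakpoint $x_k$ via inverse iterates $(f^{\circ(n-1)})^{-1}(x_k)$, whereas the paper tracks the \emph{smallest} breakpoint $x_m$ via forward iterates $f^{\circ j}(x_m)$. Forward iterates have the advantage of always existing, so the paper's formulation looks safer at a glance; your version makes the hidden hypothesis (that $x_k$ lies in the range of $f^{\circ(n-1)}$, i.e.\ $x_k\le f^{\circ(n-1)}(1)$) immediately visible. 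This is in fact a real hypothesis: both arguments break if $f(1)<1$ (consider $f(x)=x/4$ on $[0,1/2]$ and $f(x)=x/2-1/8$ on $[1/2,1]$, for which $f^{\circ 2}$ again has a single breakpoint). The paper's cardinality transfer $|B_{f^{\circ n}}|\ge|B_f\cup f(B_f)\cup\cdots|$ silently assumes $B_f\subseteq\mathrm{range}(f^{\circ(n-1)})$, so both proofs are correct only under the implicit assumption that $f$ is surjective on $[0,1]$, which does hold for the symmetric tradeoff functions to which the lemma is applied.

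There are two concrete issues in your part~3. First, the step ``if $g$ is not piecewise linear on a subinterval, convexity produces a sub-subinterval of strict convexity'' is false: take $g'$ to be the Cantor function, so $g$ is convex, not piecewise linear, yet linear on every complementary interval of the Cantor set, and hence strictly convex on no subinterval. You do not need strict convexity, though; work on a fixed linear piece $(a,b)$ of $f\circ g$, write $f_+'(g(\alpha))\,g_+'(\alpha)=m$, and observe that if $m>0$ then $g_+'(\alpha)=m/f_+'(g(\alpha))$ is simultaneously non-increasing (since $f_+'$ is non-decreasing and $g$ non-decreasing) and non-decreasing (convexity of $g$), hence constant, so $g$ is affine on $(a,b)$ and $f$ affine on $g((a,b))$. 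Second, your $m=0$ branch concludes with ``consistent,'' which is not a contradiction; on a piece where $f\circ g\equiv 0$ and $g$ maps into $f^{-1}(0)$, the argument gives no control on $g$. This is an actual gap, but it mirrors one in the paper's proof (and indeed in the lemma as literally stated, which is false for $f\equiv 0$): part~3 needs the extra hypothesis that $f$ vanishes only at $0$ on $[0,g(1)]$, which holds in every place the paper invokes it but is not recorded in the statement.
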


%lempiecewise*
\begin{proof}
    \begin{enumerate}
    \item The composition of linear functions is linear. So, it is clear that $f\circ g$ is piece-wise linear. Let $B_g$ be the set of break points of $g$ and $B_f$ be the set of break points of $f$. Then the break points of $f\circ g$ are $g^{-1}(B_f)\bigcup B_g$, since $f$ is invertible, which has at most $|B_f|+|B_g|=k+\ell$ elements, and at least $\max\{k,\ell\}$ elements.
   \item Let $B_f$ be the set of break points of $f$. Then the set of break points of $f^{\circ n}$ is $B_f\cup f^{-1}(B_f)\cup f^{-1}(f^{-1})(B_f)\cup\cdots\cup (f^{-1})^{\circ (n-1)}(B_f)$. The number of break points of $f^{\circ n}$ is then lower bounded by $|B_f\cup f(B_f)\cup f^{\circ 2}(B_f)\cup\cdots \cup f^{\circ(n-1)}(B_f)|$, by applying $f^{\circ(n-1)}$ to each of the sets (since applying a function to a set cannot increase the cardinality). We know that $|B_f|=k$. Because  $f(x)=0$ implies that $x=0$, we have that $f$ is strictly increasing on $(0,1)$; so we have that $|f(B_f)|=k$ as well. Furthermore, for each $x\in B_f$, $f(x)<x$ as $f$ is nontrivial ($k\geq 1$ implies nontrivial). Let $x_m$ be the minimum element in $B_f$. Then $f(x_m)\in f(B_f)$ and $f(x_m)\not\in B_f$. So, $|B_f\cup f(B_f)|\geq |B_f|+1=k+1$. Continuing this process, we get that the number of break points of $f^{\circ n}$ is at least $k+n-1$. 
    %If either $f$ or $g$ is always zero on $(0,1)$, then $f\circ g$ is equal to either $f$ or $g$ and the result is trivial in both cases.
        \item  Since $f\circ g$ is piece-wise linear, $\frac{d^2}{d\alpha^2} f\circ g(\alpha)=0$ except at finitely many values. Then 
        \begin{align*}
            0&=\frac{d^2}{d\alpha^2} f\circ g(\alpha)\\
            &=\frac{d}{d\alpha}\left( f'(g(\alpha)) g'(\alpha)\right)\\
            &= f''(g(\alpha)) (g'(\alpha))^2 + f'(g(\alpha)) g''(\alpha),
        \end{align*}
        except at finitely many values. 
        Note that since $f$ and $g$ are non-decreasing and convex, the following quantities are non-negative (whenever they are well-defined): $f'$, $g'$, $f''$, and $g''$. So, the above equation implies that for all but finitely many $\alpha$, either $f''(g(\alpha))=0$ or $g'(\alpha)=0$ and either $f'(g(\alpha))=0$ or $g''(\alpha)=0$. %  Note that if $g'(\alpha)=0$, then $g$ is constant and equal to zero. Similarly, if $f'(g(\alpha))=0$, then $f$ is constant and zero at $g(\alpha)$. In the two previous cases, we have that $f$ is  on $\{
        Note that $g$ is zero on $\{\alpha\mid g'(\alpha)=0\}$ and $f$ is zero on $\{g(\alpha)\mid f'(g(\alpha))=0\}$. Furthermore, $g$ is piece-wise linear on $\{\alpha\mid g''(\alpha)=0\}$ and $f$ is piece-wise linear on $\{g(\alpha)\mid f''(g(\alpha))=0\}$. We see that on $(0,1)$, $g$ is either zero or piece-wise linear, and so it is piece-wise linear on $[0,1]$. Similarly on $(g(0),g(1))=(0,g(1))$, $f$ is either zero or piece-wise linear, and so it is piece-wise linear on $[0,g(1)]$. 
        %(since the there are finitely many points where $f$ and $g$ are not twice differentiable by part a). We see that $g$ is piece-wise linear on $[0,1]$ and $f$ is piece-wise linear on $g([0,1])$. 
        \end{enumerate} 
\end{proof}

\proppiecewise*
\begin{proof}
    \begin{enumerate}
        %\item The idea is that each break point is either in $g^{-1}(B_f)$ or $B_g$ (using the notation of Lemma \ref{lem:piecewise}). We have two choices for each of the $k$ breakpoints. It seems to me that the values/slopes of the two are then uniquely determined, but this remains to be verified.\todo{needs more work.}
        
        \item By part 3 of Lemma \ref{lem:piecewise}, if $f$ could be written as $g^{\circ (k+1)}=f$, then $g$ must also be piece-wise linear. Since $f(x)=0$ implies that $x=0$, $g$ also satisfy $g(x)=0$ implies $x=0$. If $g$ is a nontrivial piece-wise linear tradeoff function (with $j\geq 1$ breakpoints), then by part 2 of Lemma \ref{lem:piecewise}, $g^{\circ(k+1)}$ has $j+(k+1)-1>k$ break points. This contradicts that $g^{\circ(k+1)}=f$.
        \item Suppose that $f_{\ep,0}=g\circ h$, where both $g$ and $h$ are non-trivial. By part 3 of Lemma \ref{lem:piecewise}, we know that $h$ is piece-wise linear. Since we are assuming that $h$ is non-trivial, it must have at least one break point. Since $f_{\ep,0}(x)=0$ implies that $x=0$, $g$ must have this property as well. By Lemma part 1 of \ref{lem:piecewise}, this implies that $h$ must have a single break point. To agree with $f_{\ep,0}$, the breakpoint of $h$  must be at $1-c$, where $c=1/(1+\exp(\ep))$ is the solution to $f_{\ep,0}(1-c)=c$, since this is where the unique breakpoint of $f_{\ep,0}$ lies. However, since $h$ is a symmetric piece-wise linear function with a unique breakpoint at $1-c$, the only possibility is that $h=f_{\ep,0}$. 
        
        %\item By part 1 above, $f_{\ep,0}$ cannot be decomposed into two nontrivial tradeoff functions. So, one of the tradeoff functions must be the identity. The two possible divisions are $Id\circ f_{\ep,0}$ and $f_{\ep,0}\circ Id$. 
        \item Each application of composition, functional composition, and subsampling without replacement preserves the piece-wise property of the tradeoff function, as well as the property that $f(x)=0$ implies that $x=0$. The result follows from part (a).
    \end{enumerate}
\end{proof}

\propCNDcomposition*
\begin{proof}
    For property 1 of Definition \ref{def:CND_MVT}, let $v$ be such that $\lVert v \rVert_\infty\leq 1$. Then $v=(v_1,\ldots, v_k)$ is such that $|v_i|\leq 1$. Then 
    \begin{align*}
        T(N,N+v)&=T((N_1,\ldots, N_k),(N_1+v_1,\ldots,N_k+v_k))\\
        &= T(N_1,N_1+v_1)\otimes \cdots \otimes T(N_k,N_k+v_k)\\
        &\geq f_1\otimes \cdots \otimes f_k.
    \end{align*}
    If we set $v^*_i=1$ for all $i$, then repeating the above gives equality in the last step, proving property 2 of Definition \ref{def:CND_MVT}. Call $g(x_1,\ldots, x_n)\defeq F'_1(x_1)\cdots F'_k(x_k)$ the density of $N$. For property 4 of Definition \ref{def:CND_MVT}, since $F'_i$ is symmetric about zero we have that $g$ is also symmetric about zero. For property 3, let $w=(w_1,\ldots, w_k)^\top$ be any vector and $v^*=(1,1,\ldots, 1)^\top$. Then, 
    \begin{align*}
        \frac{g(w+tv^*-v^*)}{g(w+tv^*)}
        &=\frac{F'_1(w_1+(t-1))\cdots F'_k(w_k+(t-1))}{F'_1(w_1+t)\cdots F'_k(w_k+t)},
        %&=\frac{F'_1(w_1+(t-1))}{F'_1(w_1+t)} \cdots\frac{F'_k(w_k+(t-1))}{F'_k(w_k+t)},
    \end{align*}
    which is increasing in $t$, since each of the factors is increasing in $t$, by property 3 of Definition \ref{def:CND}.
    
    %Furthermore, since each $F'_i$ is weakly decreasing away from zero \todo{JA: does not obviously follow from CND definition. Yes from construction}, consider a vector $x=(x_1,\ldots, x_k)$ and two scalars $0<a<b$. Then 
    %\[F'(a\ul x)=F'_1(ax_1)\cdots F'_k(ax_k)\geq F'_1(bx_1)\cdots F'_k(bx_k)=F'(b\ul x).\qedhere\]
\end{proof}

\thmlonemech*
\begin{proof}
    Since $f$ is a nontrivial log-concave tradeoff function, by Theorem \ref{thm:CNDlimit} there exists a monoid of log-concave tradeoff functions $\{f_t\in \mscr F\mid t\geq 0\}$ satisfying $f_t\circ f_s=f_{t+s}$ such that $f_1=f$ and $f_t=F(F^{-1}(\alpha)-t)$ for all $t>0$.  Note that for any $t\in \RR$, $T(N_i,N_i+t)=f_{|t|}$. %$f_t=T(N,N+t)=T(N,N-t)$.

    For property 1 of Definition \ref{def:CND_MVT}, let $x$ be such that $\lVert x\rVert_1\leq 1$. Note that $|x_i|<1$ for all $i=1,\ldots, k$. Then 
    \begin{align*}
        T(N,N+x)&=f_{|x_1|}\otimes f_{|x_2|}\otimes \cdots\otimes f_{|x_k|}\\
        &\geq f_{|x_1|}\circ f_{|x_2|}\circ \cdots\circ f_{|x_k|}\\
        &=f_{\sum_{i=1}^k |x_i|}\\
        &=f_{\lVert x_i\rVert_1}\\
        &\geq f_1,
    \end{align*}
    where in the first line, we use the property that $T(N_i,N_i+x_i)=f_{|x_i|}$ which uses log-concavity, the second line uses Lemma \ref{lem:otimesCirc}, and the third line uses the property that $f_s\circ f_t$ within the monoid. Note that for $v=(1,0,0,\ldots,0)$, $T(N,N+v)=T(N_1,N_1+1)=f$, proving property 2 of Definition \ref{def:CND_MVT}. Since $N$ is constructed by independent 1-d CNDs, the same arguments used in the proof of Proposition \ref{prop:CNDcomposition} can be used to prove properties 3 and 4 of Definition \ref{def:CND_MVT}. Note that $N$ is log-concave, since it is a product distribution with log-concave components.
    %as it is the product of log-concave densities. 
\end{proof}

\propGauss*
\begin{proof}
Let $N\sim N(0, \Sigma)$. Note that $\Sigma^{-1/2}N \sim N(0, I_d)$. Let $u$ be such that $\lVert u \rVert\leq 1$. Then
    \begin{align*}
        T(N,N+u)&=T\left(\Sigma^{-1/2} N, \Sigma^{-1/2} N+\Sigma^{-1/2}u\right)\\
        &=T\left(N(0,I_d),N(0,I_d)+\Sigma^{-1/2} u\right)\\
        &=T\left(N(0,1), N(\lVert \Sigma^{-1/2}u\rVert_2,1)\right)\\
        &=G_{\lVert \Sigma^{-1/2} u\rVert_2}\\
        &\geq G_{\lVert \Sigma^{-1/2} v^*\rVert_2},
    \end{align*}
    where for the third line, we use the rotational invariance of the multivariate Gaussian distribution. Note that setting $u=v^*$ gives equality. This establishes properties 1 and 2 of Definition \ref{def:CND_MVT}. Property 4 of Definition \ref{def:CND_MVT} holds since the density of  $N(0, \Sigma)$ is symmetric about zero. For property 3, let $w\in \RR^k$ be any vector, and call $g$ the density of $N(0, \Sigma)$. Call $a=(w+tv^*)$ and $b=-v^*$. Then,  
    \begin{align*}
        \log\frac{g(w+tv^*-v^*)}{g(w+tv^*)}
        &=\log \frac{\exp(-\frac 12 (w+(t-1)v^*)^\top \Sigma^{-1} (w+(t-1)v^*))}{\exp(-\frac 12 (w+tv^*)^\top \Sigma^{-1} (w+tv^*))}\\
        &=-\frac 12 \left[(w+tv^*)-v^*)^\top \Sigma^{-1} ((w+tv^*)-v^*)+(w+tv^*)^\top \Sigma^{-1} (w+tv^*)\right]\\
        &= -\frac 12\left[(a+b)^\top \Sigma^{-1}(a+b)-a^\top \Sigma^{-1} a\right]\\
        &=-\frac 12 \left[{a^\top \Sigma^{-1} a}+2a^\top \Sigma^{-1} b+b^\top \Sigma^{-1} b-{a^\top \Sigma^{-1} a}\right]\\
        &=-\frac 12 \left[2 a^\top \Sigma^{-1} b+b^\top \Sigma^{-1} b\right]\\
        &=-\frac 12 \left[2(w+tv^*)^\top \Sigma^{-1} (-v^*)+(v^*)^\top \Sigma^{-1} v^*\right]\\
        &= t (v^*)^\top \Sigma^{-1} v^* + w^\top \Sigma^{-1} v^* - \frac 12 (v^*)^\top \Sigma^{-1} v^*,
    \end{align*}
    which is increasing in $t$, since $\Sigma$ is positive definite, which verifies property 3 of Definition \ref{def:CND_MVT}.
\end{proof}

\propCNDTVDP*
\begin{proof}
    Let $X\sim U(\frac{-1}{2\de},\frac{1}{2\de})^n$, and let $v$ be such that $\lVert v\rVert\leq 1$. We need a lower bound on $T(X,X+v)$. Since this is the testing of shifted uniforms, $T(X,X+v) = f_{0,\TV(X,X+v)}$. 
    \begin{align*}
        \TV(X,X+v)&= 1-\frac{\prod_{i=1}^d \left(\frac{1}{\de} - |v_i|\right)}{\de^{-d}}\\
        &=1-\prod_{i=1}^d \left(1 - \de|v_i|\right)\\
        &\geq 1-A,
    \end{align*}
    which establishes property 1 of Definition \ref{def:CND_MVT}. Note that using $v^*$ as defined above, we get that $\TV(X,X+v^*)=1-A$, giving property 2 of Definition \ref{def:CND_MVT}. Property 4 of Definition \ref{def:CND_MVT} is obvious, since each uniform is centered at zero. For property 3, let $w\in \RR^d$. The likelihood ratio is 
    \begin{align*}
        \frac{g(w+(t-1)v^*)}{g(w+tv^*)}
        &=\prod_{i=1}^d \frac{I(\frac{-1}{2\delta}\leq w_i+(t-1)v_i^*\leq \frac{1}{2\delta})}{I(\frac{-1}{2\delta} \leq w_i+tv_i^*\leq \frac{1}{2\delta})},
    \end{align*}
    and we see that each of these factors can take the possible values:
    \[\begin{cases}
    \text{undefined} & \text{when }w_i+(t-1)v_i^*\not\in [\frac{-1}{2\delta},\frac{1}{2\delta}] \text{ and } w_i+tv_i^*\not\in [\frac{-1}{2\delta},\frac{1}{2\delta}]\\
        0 &\text{when }w_i+(t-1) v_i^*\not\in [\frac{-1}{2\delta},\frac{1}{2\delta}] \text{ and } w_i+tv_i^*\in [\frac{-1}{2\delta},\frac{1}{2\delta}]\\
        1 &\text{when }w_i+(t-1) v_i^*\in [\frac{-1}{2\delta},\frac{1}{2\delta}] \text{ and } w_i+tv_i^*\in [\frac{-1}{2\delta},\frac{1}{2\delta}]\\
        +\infty &\text{when }w_i+(t-1) v_i^*\in [\frac{-1}{2\delta},\frac{1}{2\delta}] \text{ and } w_i+tv_i^*\not\in [\frac{-1}{2\delta},\frac{1}{2\delta}]\\
    \end{cases}\]
    If $w_i+tv_i^*\in [\frac{-1}{2\delta},\frac{1}{2\delta}]$ for some $t$, then we have that as $t$ progresses from $-\infty$ to $\infty$, the value of each factor goes from undefined to $0$ to $1$ to $+\infty$ to undefined, which is a non-decreasing sequence. If $w_i+tv_i^*\not\in [\frac{-1}{2\delta},\frac{1}{2\delta}]$ for every $t$, then the likelihood ratio is always undefined, which is also trivially non-decreasing. We see that property 3 of Definition \ref{def:CND_MVT} holds.    
\end{proof}

The privacy loss random variable is a concept that appears in all major definitions of differential privacy. In fact, \citet{zhu2022optimal} showed that the privacy loss random variables can be losslessly converted back and forth to the corresponding tradeoff function. For part of the proof of Proposition \ref{prop:Linfty}, it will be easier to work with the privacy loss random variables than directly with the tradeoff functions. First, we give a formal definition and a few basic properties of privacy loss random variables. While similar results appeared in \citet{zhu2022optimal}, we include them here for completeness.

\begin{defn}[Privacy Loss Random Variable]
Let $X$ and $Y$ be two random variables on $\RR^d$, with densities $p$ and $q$, respectively. The \emph{privacy loss random variable} is $\mathrm{PLRV}(X|Y) \defeq \log\frac{p(X)}{q(X)}$, where $X\sim p$.
\end{defn}

\begin{lemma}[Privacy Loss RV is Sufficient]\label{lem:PLsuff}
Let $X\sim p$ and $Y\sim q$ be two random variables on $\RR^d$ with densities $p$ and $q$, respectively. Define $L(x):\mscr X\rightarrow \RR$ by $L(x) = \log[q(x)/p(x)]$. Note that $L(X) \overset d= -\mathrm{PLRV}(X|Y)$ and $L(Y) \overset d= \mathrm{PLRV}(Y|X)$. Then $T(X,Y)=T(L(X),L(Y))=T(-\mathrm{PLRV}(X|Y),\mathrm{PLRV}(Y|X))$. 
\end{lemma}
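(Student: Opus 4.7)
The second equality in the statement is immediate from the definitions: since $\mathrm{PLRV}(X|Y) = \log[p(X)/q(X)] = -L(X)$ and $\mathrm{PLRV}(Y|X) = \log[q(Y)/p(Y)] = L(Y)$ as random variables, we have $-\mathrm{PLRV}(X|Y) \overset{d}{=} L(X)$ and $\mathrm{PLRV}(Y|X) \overset{d}{=} L(Y)$, so $T(-\mathrm{PLRV}(X|Y),\mathrm{PLRV}(Y|X)) = T(L(X),L(Y))$. The content of the lemma is thus the first equality $T(X,Y) = T(L(X),L(Y))$, which is the standard fact that the log-likelihood ratio is a sufficient statistic for the simple-versus-simple hypothesis test.

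The plan is to prove the equality by two inequalities. For $T(X,Y) \leq T(L(X),L(Y))$, I would use a data-processing argument: given any measurable test $\psi$ defined on the range of $L$, the composition $\phi \defeq \psi \circ L$ is a valid test on $\mathcal{X}$ with $\EE_X[\phi] = \EE_{L(X)}[\psi]$ and $\EE_Y[\phi] = \EE_{L(Y)}[\psi]$. Thus every $\psi$ satisfying the specificity constraint for $T(L(X),L(Y))(\alpha)$ yields a $\phi$ satisfying the constraint for $T(X,Y)(\alpha)$ with the same type II error, and taking infima gives the desired inequality.

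For the reverse inequality $T(X,Y) \geq T(L(X),L(Y))$, I would invoke the Neyman–Pearson lemma. At each level $\alpha \in [0,1]$, there exists an optimal (possibly randomized) test $\phi^*$ for $X$ versus $Y$ of the threshold form $\phi^*(x) = \ONE[q(x)/p(x) > k] + \gamma\,\ONE[q(x)/p(x) = k]$, for constants $k,\gamma$ depending on $\alpha$. Because $\phi^*$ depends on $x$ only through $L(x) = \log[q(x)/p(x)]$, it factors as $\phi^* = \psi^* \circ L$ for some $\psi^*$; this $\psi^*$ is then a test on $L(X),L(Y)$ with identical error probabilities, showing $T(L(X),L(Y))(\alpha) \leq T(X,Y)(\alpha)$.

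The main technical point to be careful about is the measure-theoretic treatment of the likelihood ratio when $p$ or $q$ vanishes: on $\{p = 0\}$ one sets $L = +\infty$ and the optimal test deterministically rejects there; on $\{p = q = 0\}$ the integrand does not contribute. These boundary cases are handled by the standard formulation of Neyman–Pearson via Radon–Nikodym derivatives with respect to $p+q$, so no additional argument is needed beyond citing the lemma in this form. Combining the two inequalities yields $T(X,Y) = T(L(X),L(Y))$, completing the proof.
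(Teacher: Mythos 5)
Your proposal is correct and follows essentially the same route as the paper's proof: one inequality by data processing (the paper cites a post-processing lemma from \citet{dong2022gaussian} where you spell it out directly), and the reverse inequality via the Neyman--Pearson threshold form, observing that the optimal test factors through $L$. The only cosmetic difference is that the paper computes the type I and type II errors of the threshold test explicitly as functions of $L(X)$ and $L(Y)$, whereas you phrase it as $\phi^* = \psi^* \circ L$, and you flag the measure-theoretic boundary cases where $p$ or $q$ vanishes, which the paper elides.
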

\begin{proof}
First, by postprocessing, we have that $T(X,Y)\leq T(L(X),L(Y))$ \citep[Lemma 2.9]{dong2022gaussian}. For the other direction, note that by the Neyman Pearson Lemma, the optimal test for $H_0: X$ versus $H_1: Y$ at size $\alpha$ is of the form 
    \[\phi(x) = \begin{cases}
    1 & L(x)>t\\
    c& L(x)=t\\
    0&L(x)<t,
    \end{cases}\]
    where $L$ is defined in the Lemma statement, and the values of $t$ and $c$ are uniquely chosen such that $\EE_{X\sim p}\phi(X)=\alpha$. 
    
    For a specified $t$ and $c$, the type I error is 
    \begin{align*}
    \text{type I} = \EE_{X\sim p}\phi &= \EE_{X\sim p} [I(L(X)>t)+cI(L(X)=t)]\\
    &=P_{X\sim p}(L(X)>t)+cP_{X\sim p}(L(X)=t),
    \end{align*}
    which we see only depends on the statistic $L(X)$. On the other hand,
    
    \begin{align*}
        \text{type II} = 1-\EE_{Y\sim q}\phi(Q) &= 1-\EE_{Y\sim q}[I(L(Y)>t)+cI(L(Y)=t)]\\
    &=P_{Y\sim q}(L(Y)\leq t)-cP_{Y\sim q}(L(Y)=t),
    \end{align*}
    which we see only depends on the statistic $L(Y)$.
    
    So, when testing $H_0: L(X)$ versus $H_1: L(Y)$, using the particular test $\psi(L) = I(L>t)+cI(L=t)$,  
    where the values of $c$ and $t$ are chosen as above, we recover the type I and type II errors of $T(X,Y)$. We conclude that $T(L(X),L(Y))\leq T(X,Y)$. Combining the inequalities, we have that $T(X,Y) = T(L(X),L(Y))$. The fact that $L(X) = -\mathrm{PLRV}(X|Y)$ and $L(Y) = \mathrm{PLRV}(Y|X)$ follows from the definition of privacy loss random variables.
\end{proof}

\begin{lemma}\label{lem:matchPLRV}
Let $X\in \RR^d$ be a continuous random vector with density $g$, which is symmetric about zero. Then for any $v\in \RR^d$, $\mathrm{PLRV}(X|X+v) = \mathrm{PLRV}(X+v|X)$. It follows that 
\begin{enumerate}
    \item $T(X,X+v) = T(\mathrm{PLRV(X|X+v)},-\mathrm{PLRV}(X|X+v))$, and 
    \item Let $Y\in \RR^p$ be another continuous random vector symmetric about zero, and let $w\in \RR^p$. Then if $\mathrm{PLRV}(X|X+v)\overset d =\mathrm{PLRV}(Y|Y+w)$ then $T(X,X+v) = T(Y,Y+w)$.
\end{enumerate}
\end{lemma}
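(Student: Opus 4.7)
The plan is to first establish the distributional identity $\mathrm{PLRV}(X|X+v) \overset{d}{=} \mathrm{PLRV}(X+v|X)$ from the symmetry of $g$, and then derive parts 1 and 2 as consequences of Lemma \ref{lem:PLsuff}.

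For the main identity, I would unfold both sides in terms of the density. Since $X \sim g$ and $X+v$ has density $g(\cdot - v)$, one has $\mathrm{PLRV}(X|X+v) = \log g(X) - \log g(X-v)$ with $X \sim g$; writing a draw from $X+v$ as $W \sim g(\cdot - v)$ and substituting $W = X+v$ for $X \sim g$, $\mathrm{PLRV}(X+v|X) = \log g(X) - \log g(X+v)$ with $X \sim g$. The key move is the change of variables $Y := -X$: since $g$ is symmetric, $Y \sim g$, and $g(X) = g(-Y) = g(Y)$, $g(X-v) = g(-Y-v) = g(Y+v)$. This converts the first expression into $\log g(Y) - \log g(Y+v)$ with $Y \sim g$, which matches the second exactly in distribution.

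For part 1, Lemma \ref{lem:PLsuff} gives $T(X,X+v) = T(-\mathrm{PLRV}(X|X+v), \mathrm{PLRV}(X+v|X)) = T(-L, L)$, where $L$ denotes the common PLRV distribution. To rewrite this as $T(L,-L)$, I would apply the measurable bijection $\phi(x) = -x$ as a coordinate-wise post-processing map: $\phi$ sends the distribution of $-L$ to that of $L$ and vice versa, and tradeoff functions are invariant under bijective (measurable, with measurable inverse) post-processing, so $T(-L,L) = T(L,-L)$. Part 2 is then immediate: the hypothesis $\mathrm{PLRV}(X|X+v) \overset{d}{=} \mathrm{PLRV}(Y|Y+w)$ means both $T(X,X+v)$ and $T(Y,Y+w)$ equal $T(L,-L)$ for the same distribution $L$.

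The only subtle step is the symmetry substitution in the main identity; care is needed to apply $g(x) = g(-x)$ separately in each logarithmic term and to remember that the change of variables $-X \sim g$ is precisely what lets one convert $X-v$ into $Y+v$. The ordering mismatch between $(-L,L)$ and $(L,-L)$ in part 1 is a secondary wrinkle resolved cleanly by the bijective post-processing argument, which does not itself require any symmetry of $g$.
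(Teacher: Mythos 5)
Your proof is correct and takes essentially the same approach as the paper. The paper also establishes $\mathrm{PLRV}(X{+}v|X) \overset{d}{=} \mathrm{PLRV}(X|X{+}v)$ by a symmetry-based change of variables (it uses $Z = v - X$ where you use $Y = -X$, but this is the same trick), then invokes Lemma~\ref{lem:PLsuff} and remarks briefly that the tradeoff function is symmetric to swap the order of arguments; your explicit bijective post-processing argument for $T(-L,L) = T(L,-L)$ just spells out that last step more carefully.
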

\begin{proof}
First note that $\mathrm{PLRV}(X|X+v) = \log \frac{g(X)}{g(X-v)}$, where $X\sim g$. Setting $Z=v-X$, we can write 
\begin{align*}
    \mathrm{PLRV}(X+v|X) &= \log \frac{g(X-v)}{g(X)}, \text{ where } (X-v)\sim g\\
    &=\log\frac{g(-Z)}{g(-Z+v)},\text{ where } -Z\sim g\\
     \quad &=\log\frac{g(Z)}{g(Z-v)}, \text{ where } Z\sim g\quad \text{(by symmetry of $g$)}\\
    &\overset d = \mathrm{PLRV}(X|X+v).
\end{align*}
Combining the above work with Lemma \ref{lem:PLsuff}, we get $T(-\mathrm{PLRV}(X|X+v),\mathrm{PLRV}(X|X+v))$, which is equivalent to statement 1, since the tradeoff function is symmetric. For part two, if $\mathrm{PLRV}(X|X+v)\overset d =\mathrm{PLRV}(Y|Y+w)$, then clearly $T(\mathrm{PLRV}(X|X+v),-\mathrm{PLRV}(X|X+v))=T(\mathrm{PLRV}(Y|Y+w),-\mathrm{PLRV}(Y|Y+w))$, which is equivalent to the statement in part 2, by part 1.
\end{proof}

\propLinfty*
\begin{proof}
%First note that $g$ is symmetric about zero, and weakly decreasing away from zero, so property 3 of Definition \ref{def:CND_MVT} is established. Furthermore, 
Note that for any vector $s\in \{-1,1\}^d$, $sX\overset d = X$ (entry-wise multiplication). Because of this, it suffices to consider $T(X,X+v)$ for $v\geq 0$ (all entries non-negative).

First we will show that $T(X,X+1)=T(L,L+1)$, where $X$ is the $\ell_\infty$-mech, and $L\sim \mathrm{Laplace}(0,1/\ep)$ which has density $\frac{\ep}{2}\exp(-\ep|x|)$. We will do this using privacy loss random variables, applying part 2 of Lemma \ref{lem:matchPLRV}. Note that since $X$ and $L$ are both symmetric random variables, it suffices to equate the privacy loss random variables $\mathrm{PLRV}(X|X+1)$ and $\mathrm{PLRV}(L|L+1)$. We can easily derive that $\mathrm{PLRV}(L|L+1)= -\ep|L|+\ep|L-1|=\ep[1-2L]_{-1}^{1}$, where $L\sim \mathrm{Laplace}(0,1/\ep)$ and $[x]_a^b\defeq \min\{\max\{x,a\},b\}$ is the clamping function. Note that $\mathrm{PLRV}(L|L+1)\overset d =\ep[1-L_2]_{-1}^1$, where $L_2\sim \mathrm{Laplace}(2/\ep)$. 

Now for $T(X,X+1)$, the privacy loss random variable is $\mathrm{PLRV}(X|X+1)= -\ep\lVert X\rVert_\infty+\ep\lVert X-1\rVert_\infty$, where $X\sim g(x)$. We can simplify this expression as follows, using the notation $\max(X)=\max_{1\leq i\leq d} X_i$ and $\min(X)=\min_{1\leq i\leq d} X_i$:

\begin{align*}
    &\mathrm{PLRV}(X|X+1)\\
    &=-\ep\lVert X\rVert_\infty+\ep\lVert X-1\rVert_\infty\\
    &=\begin{cases}
    -\ep\max(X)+\ep(1-\min(X)) & \text{if } \max(X)\geq -\min(X)\ \&\  1-\min(X)\geq \max(X)-1\\
    -\ep\max(X)+\ep(\max(X)-1) & \text{if } \max(X)-1>1-\min(X) \\%\text{ (which implies $\max(X)>-\min(X)$)}\\
    -\ep(-\min(X))+\ep(1-\min(X))&\text{if } -\min(X)>\max(X) \\%\text{ (which implies $1-\min(X)>\max(X)-1$)}
    \end{cases}\\
    &=\begin{cases}
    -\ep\max(X)+\ep(1-\min(X)) & \text{if } -1\leq [1-(\max(X)+\min(X))]\leq 1 \\
    -\ep & \text{if } [1-(\max(X)+\min(X))]<-1\\% \text{ (which implies $\max(X)>-\min(X)$)}\\
    \ep &\text{if } [1-(\max(X)+\min(X))]>1 \\%\text{ (which implies $1-\min(X)>\max(X)-1$)}
    \end{cases}\\
    &=\ep[1-(\max(X)+\min(X))]_{-1}^1.
\end{align*}

Comparing this expression with $\mathrm{PLRV}(L|L+1)$, we see that it suffices to show $\max(X)+\min(X) \overset d = \mathrm{Laplace}(2/\ep)$. Recall that $X\overset d=RU$, where $R\sim \mathrm{Gamma}(d+1,\ep)$, using the shape, rate parameterization, and $U_i\iid U(-1,1)$ for $i=1,\ldots, d$ \citep[Remark 4.2]{hardt2010geometry}. By factoring out $R$, we get 
\[\max(X)+\min(X)\overset d= R(\max(U)+\min(U)).\]
So, we will determine the distribution of $\max(U)+\min(U)$ first. We can easily compute the joint distribution of $\max(U)$ and $\min(U)$, as these are the minimum and maximum order statistics:
\[f_{\min(U),\max(U)}(x,y) = \frac{d(d-1)}{4} \left( \frac{y-x}{2}\right)^{d-2} I(-1\leq x\leq y\leq 1).\]
Now consider the change of variables $m=x$ and $w=x+y$. Applying change of variables, we have 
\begin{align*}
    f_{\min(U),\max(U)+\min(U)}(m,w) &= \frac{d(d-1)}{4} \left( \frac{(w-m)-m}{2}\right)^{d-2} I(-1\leq m\leq w-m\leq 1)\\
    &= \frac{d(d-1)}{2^{d}}\left(w-2m\right)^{d-2}I(-1\leq m, m\leq w/2, w-1\leq m).
\end{align*}
To get the distribution of $W=\max(U)+\min(U)$, we marginalize out $M=\min(U)$:
\begin{align*}
    f_{\max(U)+\min(U)}(w) &= \int_{\max\{-1,w-1\}}^{w/2} \frac{d(d-1)}{2^d} \left(w-2m\right)^{d-2} \ dm\\
    &=\frac{d(d-1)}{2^{d+1}} \frac{-(w-2m)^{d-1}}{d-1}\Big|_{\max\{-1,w-1\}}^{w/2}\\
    &= \frac{d}{2^{d+1}} \left[ (w-2\max\{-1,w-1\})^{d-1} - \cancel{(w-2(w/2))^{d-1}}\right]\\
    &= \frac{d}{2^{d+1}} \begin{cases}
    (w+2)^{d-1}&-2\leq w\leq 0\\
    (2-w)^{d-1}&0\leq w\leq 2\\
    \end{cases}\\
    &= \frac{d}{2^{d+1}}(2-|w|)^{d-1}I(-2\leq w\leq 2).
\end{align*}
Since the distribution of $W=\max(U)+\min(U)$ is symmetric about zero, $W \overset d= (-1)^B|W|$ where $B\sim \mathrm{Bern}(1/2)$. So, our goal is to show $(-1)^B RW \overset d = (-1)^B \mathrm{Exp}(\ep/2)$, since the left side is equal in distribution to $\max(X)+\min(X)$ and the right side is equal in distribution to $\mathrm{Laplace}(2/\ep)$. The pdf of $Y\overset d\defeq|W|$ is $f(y) = \frac{d}{2^d} (2-y)^{d-1}I(0\leq y\leq 2)$. It suffices to show that $RY \overset d = \mathrm{Exp}(\ep/2)$. Let $\phi_R$ be the characteristic function of $R$ and $\phi_{RY}$ be the characteristic function of $RY$. Then, % the characteristic function for $RY$ is % (where $\phi_R$ is the characteristic function of $R\sim \mathrm{Gamma}(d+1,\ep)$)

\begin{align*}
    \phi_{RY}(t)&=\EE_Y \phi_R(tY)\\
    &=\EE_Y\left(1-\frac{ity}{\ep}\right)^{-(d+1)}\\
    &=\frac{d}{2^d} \int_0^2 \left(1-\frac{ity}{\ep}\right)^{-(d+1)} (2-y)^{d-1}\ dy\\
    &= \frac{\ep}{\ep-2it}\\
    &=\frac{\ep/2}{\ep/2-it},
\end{align*}
which we identify as the characteristic function of $\mathrm{Exp}(\ep/2)$, establishing that $RY \overset d= \mathrm{Exp}(\ep/2)$.  By part 2 of Lemma \ref{lem:matchPLRV}, this completes the proof that $T(X,X+1)=T(L,L+1)$, establishing property 2 of Definition \ref{def:CND_MVT}. Note that property 4 of Definition \ref{def:CND_MVT} is obvious, and property 3 holds since the likelihood ratio $g(x-1)/g(x)$ is an increasing function in $\max(x)+\min(x)$, which itself is an increasing function of $t$ when $x=w+t$ for every vector $w\in \RR^d$. It remains to verify property 1 of Definition \ref{def:CND_MVT}.

Next we will show that for $v\in (0,1]^d$, $T(X,X+1)\leq T(X,X+v)$ (this proof strategy is based on the proof of Lemma 3.5 in \citet{dong2021central}). We will separately address the cases that some of $v_i=0$ at the end of the proof. Call $f_1=T(X,X+1)$ and $f_v=T(X,X+v)=T(X/v,X/v+1)$. Define the two linear maps $r: x\mapsto x/v$ and $r^{-1}: x\mapsto vx$ (entry-wise multiplication and division), which are inverse maps. Note that $r(X)=X/v$ has density proportional to $\exp(-\ep\lVert vt\rVert_\infty)$. Let $\alpha\in[0,1]$ be given. Let $A$ be the optimal rejection region for $T(X/v,X/v+1)$ at type I error $\alpha$. By our earlier work, we know that 
\[A = \{x\mid \epsilon(\max(vx)+\min(vx))\geq t\},\]
for some $t\in \RR$, and it satisfies $P(r(X)\in A)=\alpha$ and $P(r(X)+1\not\in A)=f_v(1-\alpha)$. We can now consider $r^{-1}(A)$ as a possible rejection region for testing $T(X,X+1)$, which is at best suboptimal. We compute the type I error as 
\[P(X\in r^{-1}(A_v)) = P(r(X)\in A)=\alpha.\]
Suboptimality of the rejection region implies that 
\begin{align*}
    f_1(1-\alpha)&\leq P(X+1\not\in r^{-1}(A))\\
    &=P(r(X)+r(1)\not \in A)\\
    &=P(r(X)+1/v\not\in A)\\
    &=P(r(X)\not \in A-1/v)\\
    &\leq P(r(X)\not\in A-1)\\
    &=f_v(1-\alpha),
\end{align*}
where we used the fact that $r$ is a linear map, and $r(1)=1/v$; the key step is the final inequality, which we justify as follows: it suffices to show that $(A-1/v)^c\subset (A-1)^c$ or equivalently $A-(1/v-1)\supset A$. We verify this by inspecting the definition of A:
\begin{align*}
    A&=\{x\mid \max(vx)+\min(vx)\geq t\}\\
    &\subset \{x\mid \max(vx+(1/v-1))+\min(vx+(1/v-1))\geq t\}\\
    &=\{x-(1/v-1)\mid \max(vx)+\min(vx)\geq t\}\\
    &= A-(1/v-1),
\end{align*}
where in the inclusion step, we used the fact that $(1/v-1)\geq 0$ implies that $\max(vx+(1/v-1))\geq \max(vx)$ and $\min(vx+(1/v-1))\geq \min(vx)$. This completes the argument that for $v\in (0,1]^d$, $f_1=T(X,X+1)\leq T(X,X+v)=f_v$.

Finally, let $v\in [0,1]^d$, where the entries may possibly be zero. Let $v_n\in (0,1]^d$ be a sequence of vectors converging to $v$. Notice that $X+v_n\overset {TV}\rightarrow X+v$, since $X$ has a continuous density. Since $T(X,X+v_n)\geq T(X,X+1)$ by our above work, by Corollary \ref{cor:limitBound} we have $T(X,X+v)\geq T(X,X+1)$ as well. 
\end{proof}

\thmnoCNDpure*

The proof strategy of Theorem \ref{thm:noCNDpure} is as follows: 1) observe that property of Definition \ref{def:CND_MVT} enforces constraints on the likelihood ratio $\log \frac{g(x-v)}{g(x)}$, 2) establish that the measure induced by $g$ is equivalent to Lebesgue measure, which simplifies some measure theory details, 3) show that we can construct a vector $w$ such that $\lVert w \rVert<1$, $\lVert w+v\rVert<1$, and $w\not\in \mathrm{Span}(v)$, 4) based on the properties of $w$ and $v$, show that by taking integral combinations of $w$ and $v$, we can find an arbitrarily long sequence of points each sufficiently far from each other such that the value of $g$ is bounded below by a common constant, and 5) show that point 4 implies that $g$ is not integrable. Because densities are only well defined up to sets of Lebesgue measure zero, the details of the proof are more complicated to ensure that we are careful about the measure theoretical details. 

\begin{proof}
 Suppose to the contrary that there exists a CND for $f_{\ep,0}$ with respect to $\lVert \cdot \rVert$, which has density $g$. We will denote $\mu_g$ as the measure induced by $g$: $\mu_g(S) = \int_S g(x) \ dx$, and use $\lambda$ to denote Lebesgue measure.

 By property 2 of Definition \ref{def:CND_MVT}, there exists $v\in \RR^d$ such that $\lVert v \rVert\leq 1$ and $T(N,v+N)=f_{\ep,0}$, where $N\sim g$. This implies that $\log \frac{g(x-v)}{g(x)}=\pm \ep$  almost everywhere ($\mu_g$) for all $x\in \RR^d$ (if $P$ and $Q$ are two distributions satisfying $T(P,Q)=f_{\ep,0}$, then the privacy loss random variable is a binary random variable, taking values in $\{-\ep,\ep\}$). Furthermore, by property 1, for any other vector $w\in \RR^d$ such that $\lVert w \rVert\leq 1$, we have that $\log \frac{g(x-w)}{g(x)}\in [-\ep,\ep]$  for almost every $x\in \RR^d$ ($\mu_g$). %\todo{JA: can we say that this is always true, rather than a.e.? A.e. may cause issues later in the proof... }. 

%We will denote $\mu_g$ as the measure induced by $g$: $\mu_g(S) = \int S g(x) \ dx$. 
Before we begin our main argument, we will show that (if such a $g$ exists,) $\mu_g$ must be equivalent to Lebesgue measure. We know that Lebesgue measure dominates $\mu_g$, so we only need to show that $\mu_g=0$ implies $\lambda=0$. Suppose to the contrary that there exists $S\subset \RR^d$ such that $\lambda(S)>0$ but $\mu_g(S)=0$ (which implies that $g(x)=0$ a.e. on $S$). We claim that there exists such an $S$ such that for some $\lVert t\rVert\leq 1$, $\mu_g(S+t)>0$. We prove this as follows: begin with any $S$ such that $\lambda(S)>0$ but $\mu_g(S)=0$. If $\mu_g(S+t)=0$ for all $\lVert t \rVert\leq 1$, then set $S'=\bigcup_{\lVert t \rVert\leq 1} (S+t)$, which is strictly larger than $S$. If $S'$ still does not have the desired property, repeat the process iteratively. Note that in the limit, this process results in $\RR^d$, but $\mu_g(\RR^d)=1$. So, the process must terminate, giving us the desired set with the properties $\lambda(S)>0$, $\mu_g(S)=0$ and there exists some $\lVert t\rVert\leq 1$ such that $\mu_g(S+t)>0$. Then there exists $P\subset S+t$ such that $g(x)>0$ on $P$ a.e., and note that $P-t\subset S$ and $g(x)=0$ on $P-t$ a.e.. However, this implies that $\log \frac{g(x-t)}{g(x)}=\infty\not\in [-\ep,\ep]$ on the set $P$, which has positive probability $\mu_g(P)>0$. This contradicts property 1 of Definition \ref{def:CND_MVT}, as discussed above. We conclude that $\mu_g$ and $\lambda$ are equivalent measures. So, we will interchangeably use statements about Lebesgue measure and $\mu_g$ measure.

   % By property 3, we know that zero is a mode of $g$, so call $g(0)=c$. 
    
  Let $r>0$ be the largest value such that $\lVert x \rVert_2\leq r$ implies that $\lVert x \rVert\leq 1$ (possible by the equivalence of norms on $\RR^d$). Consider three sets 
    \begin{align*}
        A &= \{w\mid \lVert \mathrm{Proj}_{v^\perp} w\rVert_2<r/4 \ \& \ \lVert w \rVert < 1\},\\
        B &= \{w\mid \lVert \mathrm{Proj}_{v^\perp} w \rVert_2\leq r/8 \ \& \ \lVert w \rVert<1\},\\
        C &= \{w\mid \lVert w+v\rVert<1\}.
    \end{align*}
    Note that $(A\setminus B)\cap C$ is an open set; we will demonstrate that it is non-empty, which implies that it has non-zero Lebesgue measure. First note that $A\setminus B\neq\emptyset$, since $d\geq 2$ implies that $\emptyset \subsetneq B\subsetneq A$. We will construct a vector $w\in (A\setminus B)\cap C$ as follows: let $y\in A\setminus B$, and call $z=\mathrm{Proj}_{v^\perp} y$. Then %$\lVert z\rVert_2<r/4$, which implies that 
    $r/8<\lVert z\rVert_2<r/4$. We set $w = z-v/2$. First we will check that $w\in A\setminus B$: since $z=\mathrm{Proj}_{v^\perp}w = \mathrm{Proj}_{v^\perp}y$, we have that $r/8<\lVert \mathrm{Proj}_{v^\perp}w \rVert<r/4$. We also need to check that 
    \[\lVert w \rVert=\lVert z-v/2\rVert\leq \lVert z\rVert+\frac 12 \lVert v \rVert\leq \frac 14+\frac 12<1,\]
    since $\lVert z\rVert_2\leq r/4$ implies that $\lVert z\rVert\leq 1/4$. 
    Next, we will check that $w\in C$: 
    \[\lVert w+v\rVert=\lVert z-v/2+v\rVert=\lVert z+v/2\rVert \leq \lVert z\rVert+\frac12 \lVert v\rVert\leq \frac14+\frac12<1,\]
    using again the fact that $\lVert z\rVert_2\leq r/4$ implies that $\lVert z\rVert\leq 1/4$. We conclude that $(A\setminus B)\cap C$ is a non-empty open set, which implies that it has non-zero Lebesgue measure.
    
    Let $c\defeq\esssup g$. Note that $c<\infty$, as otherwise, this would violate the log-likelihood ratio property discussed earlier. Then $g(x)\leq c$ holds with probability one. So, since $(A\setminus B)\cap C$ has positive probability, we can find a vector $w\in (A\setminus B)\cap C$ which satisfies $g(w)\leq c$. 
    
    Let $\eta\in (0,c)$ be given. Then the set $\{x\mid c-\eta\leq g(x)\leq c\}$ has positive measure. For $v$ as defined above, and an arbitrary vector $u\in \RR^d$, consider two more sets:
    \begin{align*}G_v &= \left\{x\middle| \log \frac{ g(x+v)}{g(x)} \in \{-\ep,\ep\}\right\},\\
    F_u &= \left\{u\middle| \log \frac{g(x+u)}{g(x)} \in [-\ep,\ep]\right\},
    \end{align*}
        which both hold with probability one whenever $\lVert u \rVert\leq 1$.

    Let $K$ be a positive integer such that $K>\left( e^{-\ep} (c-\eta)\frac{\pi^{d/2}}{\Gamma(1+d/2)}(r/8)^d\right)$.  Then there exists $\xi\in \RR^d$ such that for every $0\leq j\leq k\leq K$,
    \begin{align*}
        \xi&\in G_v-(kw+jv),\\
        \xi&\in F_w -(kw+jv),\\
        \xi&\in F_{w+v}-(kw+jv),\\
        c-\eta&\leq g(\xi)\leq c,\\
        \lVert\xi\rVert_2&\leq b\defeq\esssup\limits_{c-\eta\leq g(x)\leq c}\lVert x \rVert_2,
    \end{align*}
    since the first three lines hold with probability one, the fourth holds with positive probability as discussed earlier, and the last holds with probability one. Note that $b<\infty$ as otherwise, we would have an unbounded region with positive probability such that $g\geq c-\eta>0$, which would imply that $g$ is not integrable.

    %Now, because $d\geq 2$, there exists $w\neq 0$ such that 1) $w\not\in \mathrm{span}(v)$, 2) $\lVert w\rVert\leq 1$ and 3) $\lVert w+v\rVert\leq 1$. This can be constructed as follows: Let $S = \{x\mid \lVert x+v/2\rVert<1/2 \text{ and } \lVert x \rVert<1/2\}$, which is an open set and nonempty, since $-v/4\in S$. Let $y\in S$ be such that $y\perp v$ (possible since $d\geq 2$ and $S$ is open), and set $w=y-v/2$. Then 1) $w\not\in \mathrm{span}(v)$, 2) $\lVert w\rVert  = \lVert y-v/2\rVert \leq \lVert y \rVert + \lVert v/2\rVert \leq \frac 12 +\frac 12 =1$, and 3) $\lVert w+v\rVert  = \lVert y-v/2+v\rVert \leq \lVert y \rVert +\lVert v/2\rVert\leq \frac 12 +\frac 12=1$. 
    
    Since $\xi\in F_w$, we have that $g(\xi+w) \in [e^{-\ep}g(\xi),c]\subset  [e^{-\ep}(c-\eta),c]$, since $\lVert w\rVert\leq 1$. Similarly, as $\xi\in F_{w+v}$,  we have that $g(\xi+w+v)\in [e^{-\ep}(c-\eta),c]$, since $\lVert w+v\rVert\leq 1$. However, since $\xi \in G_v-w$, we have that $\frac{g(\xi+w+v)}{g(\xi+w)}=e^{\pm\ep}$. The only possibility to satisfy all of these constraints is for either $g(\xi+w)\geq c-\eta$  or for $g(\xi+w+v)\geq c-\eta$. 
    %Because $g$ is weakly decreasing on rays through $0$, we then have that $g$ is constant on either the line segment $\mathrm{hull}(0,z+w)$ or $\mathrm{hull}(0,z+w+v)$. 
    Repeating the previous argument, starting with $g(\xi+w)\geq c-\eta$ gives either $g(\xi+2w)\geq c-\eta$ or $g(\xi+2w+v)\geq c-\eta$. If instead, we start with $g(\xi+w+v)\geq c-\eta$, then either $g(\xi+2w+v)\geq c-\eta$ or $g(\xi+2w+2v)\geq c-\eta$. We see that after $k$ steps of this procedure, we have that $g(\xi+kw+jv)\geq c-\eta$ for some $j\in \{0,1,2,\ldots, k\}$. %So, $g$ takes the constant value $c$ on $\mathrm{hull}(0,z+kw+jv)$. 
    We denote by $j(k)$ the value of $j$ obtained by this procedure at the $k^{th}$ step. 
    
    For each $0\leq k\leq K$, define
    \[A_k = \{x\mid \lVert x-(\xi+kw+j(k)v)\rVert_2< r/8\}.\]
    Note that since $w\in A\setminus B$ from above, we know that $\lVert \mathrm{Proj}_{v^\perp} w \rVert_2\geq r/8$. This implies that each $A_k$ is disjoint from the others, since the $A_k$ are of radius $r/8$, and the distance between each set is at least $r/8$. Furthermore, notice that on each $A_k$, $g\geq e^{-\ep}(c-\eta)$.
    
    Finally, consider the integral of $g$, which we lower bound:
    
    \begin{align*}
        \int_{\RR^d}g(x) \ dx&\geq \sum_{k=0}^K \int _{A_k} g(x) \ dx\\
        &\geq \sum_{k=0}^K \int_{A_k} e^{-\ep}(c-\eta) \ dx\\
        &=\sum_{k=0}^K e^{-\ep}(c-\eta) \mathrm{Vol}(A_k)\\
        &=(K+1)  e^{-\ep}(c-\eta) \frac{\pi^{d/2}}{\Gamma(1+d/2)} (r/8)^d\\
        &>1,
    \end{align*}
    where we used the formula for a $d$-dimensional sphere of radius $r/8$ to evaluate $\mathrm{Vol}(A_k)$, and in the last line, we used the fact that $K>\left( e^{-\ep} (c-\eta)\frac{\pi^{d/2}}{\Gamma(1+d/2)}(r/8)^d\right)$. We see that $g$ cannot integrate to one, which contradicts our assumption that it is a multivariate CND. In fact, $g$ is not even integrable, as $K$ could have been chosen arbitrarily high.

\end{proof}

\corpureDecomp*
\begin{proof}
Suppose to the contrary that there did exist a nontrivial decomposition $f_{\ep,0}=f_1\otimes f_2$. Then Proposition \ref{prop:CNDcomposition} gives a construction for a 2-dimensional CND of $f_{\ep,0}$. However, we know from Proposition \ref{thm:noCNDpure} that there is no 2-dimensional CND for $f_{\ep,0}$. 
\end{proof}
\end{document}